\newtheorem{theorem}{Theorem}
\newtheorem{lemma}{Lemma}
\newtheorem{proposition}{Proposition}
\newtheorem{corollary}{Corollary}
\newtheorem{assumption}{Assumption}
\newtheorem{definition}{Definition}
\newenvironment{hproof}{%
  \proof}{\endproof}
\title{Heterogeneous Beliefs and Multi-Population Learning in Network Games}
\author[1]{Shuyue Hu}
\author[2]{Harold Soh}
\author[3]{Georgios Piliouras}
\affil[1]{Shanghai Artificial Intelligence Laboratory}
\affil[2]{National University of Singapore}
\affil[3]{Singapore University of Technology and Design}
\date{}
\begin{document}

\maketitle
\begin{abstract}
The effect of population heterogeneity in multi-agent learning is practically relevant but remains far from being well-understood. 
Motivated by this, we introduce a model of multi-population learning that allows for heterogeneous beliefs within each population and where agents respond to their beliefs via smooth fictitious play (SFP).
We show that the system state --- a probability distribution over beliefs --- evolves according to a system of partial differential equations.
We establish the convergence of SFP to Quantal Response Equilibria in different classes of games capturing both network competition as well as network coordination. 
We also prove that the beliefs will eventually homogenize in all network games.
Although the initial belief heterogeneity disappears in the limit, we show that it plays a crucial role for equilibrium selection in the case of coordination games as it helps select highly desirable equilibria. Contrary, in the case of network competition, the resulting limit behavior is independent of the initialization of beliefs, even when the underlying game has many distinct Nash equilibria.
\end{abstract}

\section{Introduction}

Smooth Fictitious play (SFP) and variants thereof are arguably amongst the most well-studied learning models in AI and game theory
\cite{benaim2013consistency,benaim1999mixed,hofbauer2007evolution,hopkins1999note,fudenberg1993learning,hofbauer2005learning,perolat2018actor,perrin2020fictitious,xie2020provable,hernandez2014selective,han2020deep}. 
SFP describes a belief-based learning process: agents form beliefs about the play of opponents and update their beliefs based on observations.
Informally, an agent's belief can be thought as reflecting how likely its opponents will play each strategy.
During game plays, each agent plays smoothed best responses to its beliefs. 
Much of the literature of SFP is framed in the context of \emph{homogeneous beliefs} models where all agents in a given role have the same beliefs. 
This includes models with one agent in each player role \cite{benaim1999mixed,benaim2013consistency,swenson2019smooth} as well as models with a single population but in which all agents have the same beliefs \cite{hofbauer2007evolution,hopkins1999note}.
 SFP are known to converge in large classes of homogeneous beliefs models (e.g., most 2-player games \cite{fudenberg1993learning,hofbauer2005learning,benaim1999mixed}).  However, in the context of \emph{heterogeneous beliefs}, where agents in a population have different beliefs, SFP has been explored to a less extent.

The study of heterogeneous beliefs (or more broadly speaking, population heterogeneity) is important and practically relevant.
From multi-agent system perspective, heterogeneous beliefs widely exist in many applications, such as traffic management, online trading and video game playing. 
For example, it is natural to expect that public opinions generally diverge on autonomous vehicles and that people have different beliefs about the behaviors of taxi drivers vs non-professional drivers.
From machine learning perspective, recent empirical advances hint that injecting heterogeneity potentially accelerates population-based training of neural networks and improves learning performance \cite{jaderberg2017population,leung2021self,zhao2021maximum}. From game theory perspective, considering heterogeneity of beliefs  better explains results of some human experiments \cite{fudenberg1993steady,fudenberg1997measuring}.

Heterogeneous beliefs models of SFP are not entirely new. In the pioneering work \cite{fudenberg2011heterogeneous}, Fudenberg and Takahashi  examine the heterogeneity issue in 2-population settings 
by appealing to techniques from the \emph{stochastic approximation} theory. This approach, which is typical in the SFP literature, relates the limit behavior of each individual to an ordinary differential equation (ODE) and has yielded significant insights for many homogeneous beliefs models \cite{benaim1999mixed,benaim2013consistency,hofbauer2005learning,swenson2019smooth}. However, this approach, as also noted by Fudenberg and Takahashi,  ``does not provide very precise estimates of the effect of the initial
condition of the system.''
Consider an example of a population of agents each can choose between two pure strategies $s_1$ and $s_2$. Let us imagine two cases: (i) every agents in the population share the same belief that their opponents play a mixed strategy choosing $s_1$ and $s_2$ with equal probability $0.5$, and (ii) half of the agents believe that their opponents determinedly play the pure strategy $s_1$ and the other half believe that their opponents determinedly play the pure strategy $s_2$. The stochastic approximation approach would generally treat these two cases equally, providing little information about the heterogeneity in beliefs as well as its consequential effects on the system evolution.
This drives our motivating questions:

 \emph{How does heterogeneous populations 
 evolve under SFP? How much and under what conditions does the heterogeneity in beliefs affect their long-term behaviors?}
 
\textbf{Model and Solutions.} In this paper, we study the dynamics of SFP in general classes of multi-population network games that allow for heterogeneous beliefs.
In a multi-population network game, each vertex of the network represents a population (continuum) of agents, and each edge represents a series of 2-player subgames between two neighboring populations.
Note that multi-population network games include all the 2-population games considered in \cite{fudenberg2011heterogeneous} and are representation of subclasses of real-world systems where the graph structure is evident \cite{czechowski2021poincar}.
We consider that for a certain population, individual agents form separate beliefs about each neighbor population and observe the mean strategy play of that population.
Taking a approach different from stochastic approximation, we define the \emph{system state} as a probability measure over the space of beliefs, which allows us to precisely examine the impact of heterogeneous beliefs on system evolution. This probability measure changes over time in response to agents' learning. Thus, the main challenge is to analyze the evolution of the measure, which in general requires the development of new techniques.

As a starting point, we establish a system of partial differential equations (PDEs) to track the evolution of the measure in continuous time limit (Proposition \ref{theorem:pde}). 
The PDEs that we derive are akin to the \emph{continuity equations}\footnote{The continuity equation is a PDE that describes the transport phenomena of some quantity
(e.g., mass, energy, momentum and other conserved quantities) in a physical system. } commonly encountered in physics and do not allow for a general solution. 
Appealing to \emph{moment closure approximation} \cite{gillespie2009moment}, we circumvent the need of solving the PDEs and directly analyze the dynamics of the mean and variance (Proposition \ref{prop:meanbelief} and Theorem \ref{theorem:variance}). 
As one of our key results, we prove that the variance of beliefs always decays quadratically fast with time in \emph{all} network games (Theorem \ref{theorem:variance}).
Put differently, eventually, beliefs will homogenize and the distribution of beliefs will collapse to a single point,
regardless of initial distributions of beliefs, 2-player subgames that agents play, and the number of populations and strategies.
This result is non-trivial and perhaps somewhat counterintuitive. Afterall, one may find it more natural to expect that the distribution of beliefs would converge to some distribution rather than a single point, as evidenced by recent studies on Q-learning and Cross learning \cite{hu2019modelling,hu2022,lahkar2013reinforcement}.

Technically, the eventual belief homogenization  has a significant implication --- it informally hints that the asymptotic system state of initially heterogeneous systems are likely to be the same as in homogeneous systems.
We show that the fixed point of SFP correspond to Quantual Response Equilibria (QRE)\footnote{QRE is a game theoretic solution concept under bounded rationality. By QRE, in this paper we refer to their canonical form also referred to as logit equilibria or logit QRE in the literature~\cite{goeree2016quantal}.} in network games for both homogeneous and initially heterogeneous systems (Theorem \ref{prop:fixedpoint}).
As our main result, we establish the convergence of SFP to QRE in different classes of games capturing both network competition as well as network coordination, independent of belief initialization. 
Specifically, for competitive network games, we first prove via a Lyapunov argument that the SFP  converges to a unique QRE in homogeneous systems, even when the underlying  game has many distinct Nash equilibria (Theorem \ref{theorem:convergeHomo}). 
Then, we show that this convergence result can be carried over to initially heterogeneous systems (Theorem \ref{theorem:convergeHetero}), by leveraging that the mean belief dynamics of initially heterogeneous systems is \emph{asymptotically autonomous} \cite{markus1956asymptotically} with its limit dynamics being the belief dynamics of a homogeneous system (Lemma \ref{le:limit}).  
For coordination network games, we also prove the convergence to QRE for homogeneous and initially heterogeneous systems, in which the underlying network has star structure (Theorem \ref{theorem:convergePotential}). 

On the other hand, the eventual belief homogenization may lead to a \emph{misconception} that belief heterogeneity has little effect on system evolution.
Using an example of 2-population stag hunt games, we show that belief heterogeneity actually plays a crucial role in equilibrium selection, even though it eventually vanishes.
As shown in Figure 1, changing the variance of initial beliefs results in different limit behaviors, even when the mean of initial beliefs remains unchanged;
in particular, while a small variance leads to the less desirable equilibrium $(H,H)$,  a large variance leads to the payoff dominant equilibrium $(S,S)$.
Thus, in the case of network coordination, initial belief heterogeneity can help select the highly desirable equilibrium and provides 
interesting insights to the seminal thorny problem
of equilibrium selection~\cite{kirman1993frontiers}.
On the contrary, in the case of network competition, we prove (Theorems \ref{theorem:convergeHomo} and \ref{theorem:convergeHetero} on the convergence to a unique QRE in competitive network games) as well as showcase experimentally that the resulting limit behavior is independent of initialization of beliefs, even if the underlying game has many distinct Nash equilibria.

\begin{figure}[tb!]
    \centering
    \includegraphics[width=0.9\textwidth]{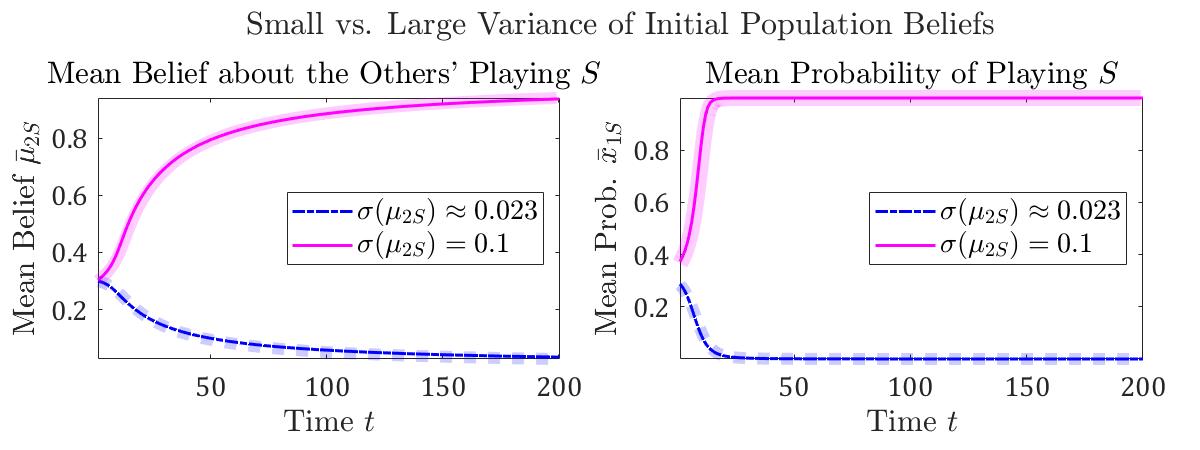}
    \caption{The system dynamics under the effects of different  variances of initial beliefs (thin lines: predictions of our PDE model, shaded wide lines: simulation results). 
    $\bar{\mu}_{2S}$ represents the mean belief about population $2$ and $\bar{x}_{1S}$ represents the mean probability of playing strategy $S$ in population $1$. Initially, we set the mean beliefs $\bar{\mu}_{2S}=\bar{\mu}_{1S} = 0.3$ (details of the setup are summarized in the supplementary).  
    Given the same initial mean belief, different initial variances $\sigma^2(\mu_{2S})$ lead to the convergence to different beliefs (the left panel) and even to different strategy choices (the right panel). In particular, a large initial variance helps select the payoff dominant equilibrium $(S,S)$ in stag hunt games.}
    \label{fig:SHgame}
\end{figure}

\paragraph{Related Works.}
SFP and its variants have recently attracted a lot of attention in AI research \cite{perolat2018actor,perrin2020fictitious,xie2020provable,hernandez2014selective,han2020deep}. 
There is a significant literature that analyze SFP in different models \cite{benaim1999mixed,ewerhart2020fictitious,hofbauer2007evolution,hofbauer2005learning}, and the paper that is most closely related to our work is \cite{fudenberg2011heterogeneous}. Fudenberg and Takahashi \cite{fudenberg2011heterogeneous} also examines the heterogeneity issue and anticipate belief homogenization in the limit under 2-population settings. In this paper, we consider multi-population network games, which is a generalization of their setting.\footnote{The analysis presented in this paper covers all generic 2-population network games, all generic bipartite network games where the game played on each edge is the same along all edges, and all weighted zero-sum games which do not require the graph to be bipartite nor to have the same game played on each edge. } Moreover, our approach is more fundamental, as the PDEs that we derive can provide much richer information about the system evolution and thus precisely estimates the temporal effects of heterogeneity, which is generally intractable in \cite{fudenberg2011heterogeneous}.
Therefore, using our approach, we are able to show an interesting finding — the initial heterogeneity plays a crucial role in equilibrium selection (Figure 1) — which unfortunately cannot be shown using the approach in \cite{fudenberg2011heterogeneous}. Last but not least, to our knowledge, our paper is the first work that presents a systematic study of smooth fictitious play in general classes of network games. 

On the other hand, networked multi-agent learning constitutes one of the current frontiers in AI and ML research \cite{zhang2018fully,liu2021local,gupta2020networked}. 
Recent theoretical advances on network games
provide conditions for learning behaviors to be not chaotic \cite{czechowski2022poincar,nagarajan2020chaos}, and investigate the convergence of Q-learning and continuous-time FP in the case of network competitions \cite{ewerhart2020fictitious,leonardos2021exploration}. However, \cite{ewerhart2020fictitious,leonardos2021exploration} consider that there is only one agent on each vertex, and hence their models are essentially for homogeneous systems.

Lahkar and Seymour \cite{lahkar2013reinforcement} and Hu et al. \cite{hu2019modelling,hu2022} also use the continuity equations as a tool to study population heterogeneity in multi-agent systems where a single population of agents applies Cross learning or Q-learning to play symmetric games. 
They either prove or numerically showcase that heterogeneity generally persists.
Our results complement these advances by showing that heterogeneity vanishes under SFP and that heterogeneity helps select highly desirable equilibria.
Moreover, methodologically, we establish new proof techniques for the convergence of learning dynamics in heterogeneous systems by leveraging seminal results (Lemmas \ref{le:1} and \ref{le:2}) from the asymptotically autonomous dynamical system literature, which may be of independent interest. 




\section{Preliminaries}

\paragraph{Population Network Games.}
A population network game (PNG)
$\Gamma=( N, ( V, E ),  (S_i,\omega_i)_{\forall i \in V}, (\mathbf{A}_{ij})_{(i,j)\in E}) $
consists of a multi-agent system $N$ distributed over a graph $(V, E)$, where $V = \{1, ..., n\}$ is the set of vertices each represents a population (continuum) of agents, and $E$ is the set of pairs, $(i,j)$, of population $i \neq j \in V$. 
For each population $i \in V$,  agents of this population has a finite set $S_i$ of pure strategies (or actions) with generic elements $s_{i} \in S_i$.
Agents may also use mixed strategies (or choice distributions).  For an arbitrary agent $k$ in population $i$, its mixed strategy is a vector $\mathbf{x}_{i}(k) \in \Delta_{i}$, where $\Delta_{i}$ is the simplex in $\mathbb{R}^{|S_i|}$ such that $\sum_{s_i \in S_i} x_{is_i}(k) =1 $ and $ x_{is_i}(k) \geq 0, \forall s_i \in S_i$. 
Each edge $(i,j) \in E$ defines a series of two-player subgames between populations $i$ and $j$, such that for a given time step, each agent in population $i$ is randomly paired up with another agent in population $j$ to play a two-player subgame. We denote the payoff matrices for agents of population $i$ and $j$ in these two-player subgames by $\mathbf{A}_{ij} \in \mathbb{R}^{|S_i|\times |S_j|}$ and 
$\mathbf{A}_{ji} \in \mathbb{R}^{|S_j|\times |S_i|}$, respectively.
Note that at a given time step, each agent chooses a (mixed or pure)  strategy and plays that strategy in all two-player subgames.
Let $\mathbf{x} = (\mathbf{x}_i, \{\mathbf{x}_{j}\}_{(i,j) \in E})$ be 
a mixed strategy profile,  where $\mathbf{x}_{i}$ (or $\mathbf{x}_{j}$) denotes a generic mixed strategy in population $i$ (or $j$). Given the mixed strategy profile $\mathbf{x}$, the expected payoff of using $\mathbf{x}_i$ in the game $\Gamma$ is
\begin{equation}
 r_i(\mathbf{x}) =   r_i(\mathbf{x}_i, \{\mathbf{x}_{j}\}_{(i,j) \in E}) \coloneqq \sum_{{(i,j)\in E}} \mathbf{x}_i^\top \mathbf{A}_{ij} \mathbf{x}_j.
\end{equation}
The game $\Gamma$ is \emph{competitive} (or \emph{weighted zero-sum}), if there exist positive constants $\omega_1, \ldots, \omega_n$ such that
\begin{equation}
    \sum_{i \in V} \omega_i r_i(\mathbf{x}) = \sum_{(i,j)\in E} \left( \omega_i \mathbf{x}_i^\top \mathbf{A}_{ij} \mathbf{x}_j + \omega_j \mathbf{x}_j^\top \mathbf{A}_{ji} \mathbf{x}_i \right ) = 0, \quad \forall \mathbf{x} \in \prod_{ i \in V} \Delta_i.
    \label{eq:zerosum}
\end{equation}
On the other hand, $\Gamma$ is a \emph{coordination} network game, if for each edge $(i,j)\in E$, the payoff matrices of the two-player subgame satisfy $\mathbf{A}_{ij} = \mathbf{A}_{ji}^\top.$ 

\paragraph{Smooth Fictitious Play.} 
SFP is a belief-based model for learning in games. In  SFP, agents 
form  beliefs about the play of opponents and respond to the beliefs via smooth best responses.
Given a game $\Gamma$, consider an arbitrary agent $k$ in a population $i\in V$.
Let $V_{i}=\{j\in V: (i,j)\in E\}$ be the set of neighbor populations. 
Agent $k$ maintains a weight  $\kappa^i_{js_j}(k)$ for each  opponent strategy $s_j \in S_j$ of a neighbor population $j\in V_i$.
Based on the weights, agent $k$ forms a belief about the neighbor population $j$, such that each opponent strategy $s_j$ is played with probability
\begin{equation}
    \mu_{js_j}^{i}(k) = \frac{\kappa^i_{js_j}(k)}{\sum_{s_j'\in S_j}\kappa^i_{js_j'}(k)}.
\end{equation}
Let $\bm{\upmu}_j^i(k)$ be the vector of beliefs with the $s_j$-\text{th} element equals  $\mu_{js_j}^i(k)$.
Agent $k$ forms separate beliefs for each neighbor population, and plays a smooth best response to the set of beliefs $\{\bm{\upmu}_j^i(k)\}_{j\in V_i}$. Given a game $\Gamma$, agent $k$'s expected payoff for using a pure strategy $s_i \in S_i$ is 
\begin{equation}
\quad u_{is_i}(k) = r_i(\mathbf{e}_{s_i}, \{\bm\upmu_j^i(k,t)\}_{j\in V_i})  = \sum_{j\in V_i} \mathbf{e}_{s_i}^\top \mathbf{A}_{ij} \bm{\upmu}_j^i(k)
\end{equation}
where $\mathbf{e}_{s_i}$ is a unit vector where the $s_i$-\text{th} element is $1$. The probability of playing strategy $s_i$ is then given by
\begin{equation}
\label{eq:exploration}
       x_{is_i}(k) = \frac{\exp({\beta u_{is_i}(k) )}}{\sum_{s_i' \in S_i} \exp({\beta u_{is_i'}(k)})} 
\end{equation}
where $\beta$ is a temperature (or the degree of rationality).
We consider that agents observe the mean mixed strategy of each neighbor population.
As such, at a given time step $t$, agent $k$ updates the weights for each opponent strategy $s_j \in S_j, j\in V_i$ as follows:
\begin{equation}
    \kappa_{js_j}^i(k,t+1) = \kappa_{js_j}^{i}(k,t) + \bar{x}_{js_j}(t)
    \label{eq:weight}
\end{equation}
where $\bar{x}_{js_j}$ is the mean probability of  playing  strategy $s_j$ in  population $j$, i.e., $\bar{x}_{js_j} = \frac{1}{n_j} \sum_{l \in \text{population }j}x_{js_j}(l)$  with the number of agents denoted by $n_j$. For simplicity, we assume the initial sum of weights $\sum_{s_j \in S_j} {\kappa_{js_j}^{i}(k,0)}$ to be the same for every agent in the system $N$ and denote this initial sum by $\lambda$.
Observe that
Equation \ref{eq:weight} can be rewritten as
\begin{equation}
  (\lambda + t + 1)   \mu_{js_j}^i(k,t+1) = (\lambda + t ) \mu_{js_j}^{i}(k,t)  + \bar{x}_{js_j}(t).
    \label{eq:belief}
\end{equation}
Hence, even though agent $k$ directly updates the weights, its individual state can be characterized by the set of beliefs $\{\bm{\upmu}_j^i(k)\}_{j\in V_i}$. 
In the following, we usually drop the time index $t$ and agent index $k$ in the bracket (depending on the context) for notational convenience.

\section{Belief Dynamics in Population Network Games}
Observe that for an arbitrary agent $k$, its belief $\bm{\upmu}^{i}_{j}(k)$ is in the simplex $\Delta_j = \{ \bm{\upmu}^{i}_{j}(k) \in \mathbb{R}^{|S_j|} | \sum_{s_j \in S_j} \mu_{js_j}^i(k) =1,  
 \mu_{js_j}^i(k)\geq 0, \forall s_j \in S_j \}$.
We assume that the system state is characterized by a Borel probability measure $P$ defined on the state space $\Delta = \prod_{ i \in V} \Delta_i$. 
Given $\bm{\upmu}_i \in \Delta_i$,
we write the marginal probability density function as $p(\bm{\upmu}_i,t)$.
Note that $p(\bm{\upmu}_i,t)$ is the density of agents having the belief $\bm{\upmu}_i$ about population $i$ \emph{throughout} the system.  
Define $\bm{\upmu} = \{\bm{\upmu}_i\}_{i \in V} \in \Delta$.
Since agents maintain separate beliefs about different neighbor populations, the joint probability density function $p(\bm{\upmu},t)$ can be factorized, i.e., $p(\bm{\upmu},t) = \prod_{ i \in V} p(\bm{\upmu}_i, t)$.
We make the following assumption for the initial marginal density functions.
\begin{assumption}
At time $t=0$, for each population $i\in V$, the marginal density function $p(\bm{\upmu}_i, t)$  is continuously differentiable and has zero mass at the boundary of the simplex $\Delta_i$.
\label{as:initialbelief}
\end{assumption}
This assumption is standard and common for a ``nice'' probability distribution.
Under this mild condition, we determine the evolution of the system state $P$ with the following proposition, using the techniques similar to those in \cite{lahkar2013reinforcement,hu2019modelling}.
\begin{proposition}
[\textbf{Population Belief Dynamics}]
The continuous-time dynamics of the marginal density function $p(\bm{\upmu}_i, t)$ for each population $i \in V$  is governed by a partial differential equation
\begin{equation}
    - \frac{ \partial p(\bm{\upmu}_i, t)}{ \partial t} = \nabla \cdot \left(  p(\bm{\upmu}_i ,t) \frac{ \mathbf{\bar{x}}_{i} -\bm{\upmu}_{i}}{\lambda + t + 1}  \right )
    \label{eq:pde}
\end{equation}
 where $\nabla \cdot$ is the divergence operator and  $\bar{\mathbf{x}}_{i}$ is the mean mixed strategy with each $s_i$-th element 
\begin{align}
    \bar{x}_{is_i} & = \int_{\prod_{j\in V_i} \Delta_j} \frac{\exp{(\beta u_{is_i} )}}{\sum_{s_i'\in S_i} \exp{(\beta u_{is_i'} )} } \prod_{j\in V_i} p(\bm{\upmu}_{j}, t) \left(\prod_{j\in V_i} d \bm{\upmu}_{j}\right )
      \label{eq:meanchoice}
\end{align}
where $u_{is_i} = \sum_{j\in V_i} \mathbf{e}_{s_i}^\top \mathbf{A}_{ij} \bm{\upmu}_j$.
\label{theorem:pde}
\end{proposition}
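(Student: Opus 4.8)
The plan is to recognize Equation \eqref{eq:pde} as a continuity (Liouville) equation for a cloud of agents transported by a deterministic, time-dependent velocity field, and to derive it from the individual belief dynamics together with conservation of agent mass. The starting point is the continuous-time limit of the update \eqref{eq:belief}: rearranging gives the one-step increment $\bm{\upmu}_i(t+1) - \bm{\upmu}_i(t) = (\bar{\mathbf{x}}_i(t) - \bm{\upmu}_i(t))/(\lambda + t + 1)$, and passing to continuous time shows that any agent holding the belief $\bm{\upmu}_i$ about population $i$ obeys the ODE $\dot{\bm{\upmu}}_i = \mathbf{v}_i(\bm{\upmu}_i, t)$ with velocity field $\mathbf{v}_i(\bm{\upmu}_i, t) = (\bar{\mathbf{x}}_i - \bm{\upmu}_i)/(\lambda + t + 1)$. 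The crucial observation I would emphasize is that this field depends only on $\bm{\upmu}_i$ and on time (through the mean-field quantity $\bar{\mathbf{x}}_i$ and the normalizer $\lambda + t + 1$), hence it is identical for every agent carrying that belief, irrespective of which population the agent inhabits or what it believes about other populations. The entire collection of beliefs about population $i$ is therefore advected by a single time-dependent flow on $\Delta_i$.

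Next I would set up the conservation argument. Since SFP neither creates nor destroys agents, the marginal measure associated with $p(\bm{\upmu}_i, t)$ is simply the pushforward of its initial value under the flow generated by $\mathbf{v}_i$, so total mass is conserved. The PDE then follows from the standard control-volume computation: for any fixed region $\Omega \subseteq \Delta_i$, the rate of change of enclosed mass equals the negative flux through $\partial\Omega$,
\begin{equation}
\frac{d}{dt} \int_\Omega p(\bm{\upmu}_i, t)\, d\bm{\upmu}_i = - \oint_{\partial \Omega} p(\bm{\upmu}_i, t)\, \mathbf{v}_i \cdot \mathbf{n}\, dS = - \int_\Omega \nabla \cdot \bigl( p(\bm{\upmu}_i, t)\, \mathbf{v}_i \bigr)\, d\bm{\upmu}_i ,
\end{equation}
where the second equality is the divergence theorem. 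As $\Omega$ is arbitrary, the integrands coincide pointwise, which is exactly \eqref{eq:pde}. Equivalently and perhaps more cleanly, I would argue weakly: for a smooth test function $\phi$, differentiating $\int \phi\, p\, d\bm{\upmu}_i$ and using the chain rule along characteristics gives $\frac{d}{dt}\int \phi\, p\, d\bm{\upmu}_i = \int \nabla\phi \cdot \mathbf{v}_i\, p\, d\bm{\upmu}_i$, and integrating by parts (the boundary term killed by Assumption \ref{as:initialbelief}) transfers the derivative onto $p\,\mathbf{v}_i$; since $\phi$ is arbitrary the PDE holds distributionally and, by smoothness, classically.

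Two consistency points remain. First, the field is tangent to the simplex: since $\sum_{s_i} \bar{x}_{is_i} = \sum_{s_i} \mu_{is_i} = 1$ we have $\sum_{s_i}(\bar{x}_{is_i} - \mu_{is_i}) = 0$, so the flow stays in the affine hull of $\Delta_i$ and $\nabla\cdot$ is taken intrinsically; moreover on a face where $\mu_{is_i}=0$ the corresponding velocity component is $\bar{x}_{is_i}/(\lambda+t+1) \ge 0$, so the flow points inward and the zero-mass-at-the-boundary property of Assumption \ref{as:initialbelief} is propagated in time, which is what makes the boundary flux vanish. Second, I would close the system by computing $\bar{\mathbf{x}}_i$: each agent in population $i$ plays the smooth best response \eqref{eq:exploration} to its beliefs $\{\bm{\upmu}_j\}_{j\in V_i}$, and averaging this response against the factorized joint density $\prod_{j\in V_i} p(\bm{\upmu}_j, t)$ of those beliefs yields \eqref{eq:meanchoice}. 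The step I expect to be the real subtlety is the closure of the marginal dynamics — justifying that tracking $p(\bm{\upmu}_i, t)$ alone is legitimate, i.e.\ that integrating the joint continuity equation on $\Delta$ over all coordinates but $\bm{\upmu}_i$ reproduces \eqref{eq:pde}. This is precisely where the independence of the $\bm{\upmu}_i$-velocity from the remaining coordinates, the factorization $p(\bm{\upmu},t)=\prod_{i} p(\bm{\upmu}_i,t)$, and the vanishing boundary flux all enter; the residual work (well-posedness of the flow, since the field is smooth and bounded with $\lambda+t+1>0$, and the rigorous discrete-to-continuous limit) is routine.
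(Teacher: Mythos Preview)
Your proposal is correct and follows essentially the same route as the paper: the paper also works in the weak formulation, introducing a test function $\theta(\bm{\upmu}_i)$, computing $\frac{d}{dt}\mathbb{E}[\theta]$ via a Taylor expansion of the discrete update with step $\delta\to 0$, and then integrating by parts (using the zero boundary mass from Assumption~\ref{as:initialbelief}) to obtain the continuity equation, with Equation~\eqref{eq:meanchoice} following directly from averaging the logit response against the factorized density. The only cosmetic difference is that the paper carries the $\delta\to 0$ limit explicitly at the level of the density rather than first passing to the individual ODE and then invoking the Liouville equation; your additional remarks on simplex tangency, inward flow at the faces, and the marginal closure are more explicit than what the paper writes but are consistent with it.
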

For every marginal density function $p(\bm{\upmu}_i, t)$,  the total mass is always conserved (Corollary 1 of the supplementary); moreover, the mass at the boundary of the simplex $\Delta_i$ always remains zero, indicating that agents' beliefs will never go  to extremes (Corollary 2 of the supplementary).

Generalizing the notion of a system state to a distribution over beliefs allows us to address a very specific question --- the impact of belief heterogeneity on system evolution. 
That said, partial differential equations (Equation \ref{eq:pde}) are notoriously difficult to solve. Here we resort to the evolution of moments based on the evolution of the distribution (Equation \ref{eq:pde}).  
In the following proposition, we show that the characterization of belief heterogeneity is important, as the dynamics of the mean system state (or the mean belief dynamics) is indeed affected by belief heterogeneity.
\begin{proposition}[\textbf{Mean Belief Dynamics}]
\label{prop:meanbelief}
The  dynamics of the mean  belief $\bar{\bm{\upmu}}_i$ about each population $i\in V$ is governed by a system of differential equations such that  for each strategy $s_i$,
\begin{align}
    \frac{d \bar{\mu}_{i s_i}}{d t} & 
       \approx \frac{f_{s_i}(\{\bm{\upmu}_j\}_{j\in V_i} ) - \bar{\mu}_{i s_i} }{\lambda + t + 1} + \frac { \sum_{j\in V_i} \sum_{s_j \in S_j} \frac{\partial^2 f_{s_i}(\{\bm{\upmu}_j\}_{j\in V_i})}{(\partial \mu_{j s_j})^2} \text{Var}(\mu_{j s_j}) }{2(\lambda + t + 1)}.
       \label{eq:meandynamicsTrun}
\end{align}
where  $f_{s_i}(\{\bm{\upmu}\}_{j\in V_i})$ is the logit choice function  (Equation \ref{eq:exploration}) applied to strategy $s_i\in S_i$, and $\text{Var}(\mu_{j s_j})$ is the variance of  belief $\mu_{j s_j}$ in the entire system. 
\end{proposition}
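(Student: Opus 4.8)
The plan is to start from the continuity equation of Proposition~\ref{theorem:pde} and convert it into an ordinary differential equation for the first moment. By definition $\bar{\mu}_{is_i} = \int_{\Delta_i}\mu_{is_i}\,p(\bm{\upmu}_i,t)\,d\bm{\upmu}_i$, so I would differentiate under the integral sign and substitute $\partial_t p$ from Equation~\ref{eq:pde} to get $\frac{d\bar{\mu}_{is_i}}{dt} = -\int_{\Delta_i}\mu_{is_i}\,\nabla\cdot\!\bigl(p(\bm{\upmu}_i,t)\,\mathbf{w}\bigr)\,d\bm{\upmu}_i$, where $\mathbf{w}=\frac{\bar{\mathbf{x}}_i-\bm{\upmu}_i}{\lambda+t+1}$ is the transport velocity field. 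Note this field is tangent to the simplex since both $\bar{\mathbf{x}}_i$ and $\bm{\upmu}_i$ lie in $\Delta_i$, so the computation stays well-defined on the manifold.

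Next I would integrate by parts (apply the divergence theorem). The boundary term $\oint \mu_{is_i}\,(p\,\mathbf{w})\cdot\mathbf{n}$ vanishes because, by Assumption~\ref{as:initialbelief} and the zero-boundary-mass property noted after Proposition~\ref{theorem:pde}, $p(\bm{\upmu}_i,t)=0$ on $\partial\Delta_i$ for all $t$. What remains is $\frac{d\bar{\mu}_{is_i}}{dt} = \int_{\Delta_i}\nabla\mu_{is_i}\cdot\mathbf{w}\;p(\bm{\upmu}_i,t)\,d\bm{\upmu}_i$. Since $\nabla\mu_{is_i}=\mathbf{e}_{s_i}$, the dot product simply selects the $s_i$-th component of $\mathbf{w}$, giving $\frac{d\bar{\mu}_{is_i}}{dt} = \int_{\Delta_i}\frac{\bar{x}_{is_i}-\mu_{is_i}}{\lambda+t+1}\,p(\bm{\upmu}_i,t)\,d\bm{\upmu}_i$. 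The crucial observation is that $\bar{x}_{is_i}$ is the mean play of population $i$, which depends on the beliefs $\{\bm{\upmu}_j\}_{j\in V_i}$ about the \emph{neighbors} (Equation~\ref{eq:meanchoice}) and hence is independent of $\bm{\upmu}_i$; it therefore pulls out of the integral, yielding the exact mean-field relation $\frac{d\bar{\mu}_{is_i}}{dt} = \frac{\bar{x}_{is_i}-\bar{\mu}_{is_i}}{\lambda+t+1}$.

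It then remains to evaluate $\bar{x}_{is_i}$, which by Equation~\ref{eq:meanchoice} equals the expectation $\mathbb{E}\bigl[f_{s_i}(\{\bm{\upmu}_j\}_{j\in V_i})\bigr]$ of the logit choice function taken against the factorized belief density $\prod_{j\in V_i}p(\bm{\upmu}_j,t)$. Here I would invoke the moment closure approximation: Taylor-expand $f_{s_i}$ to second order about the mean beliefs $\{\bar{\bm{\upmu}}_j\}_{j\in V_i}$ and take the expectation term by term. The zeroth-order term contributes $f_{s_i}(\{\bar{\bm{\upmu}}_j\}_{j\in V_i})$; the first-order terms vanish since $\mathbb{E}[\mu_{js_j}-\bar{\mu}_{js_j}]=0$; and among the second-order terms, the cross-population covariances vanish by the factorization $p(\bm{\upmu},t)=\prod_i p(\bm{\upmu}_i,t)$, leaving (after discarding the remaining within-population cross-covariances, which is exactly the content of the closure) only the diagonal contribution $\tfrac12\sum_{j\in V_i}\sum_{s_j\in S_j}\frac{\partial^2 f_{s_i}}{(\partial\mu_{js_j})^2}\mathrm{Var}(\mu_{js_j})$. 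Substituting this expansion of $\bar{x}_{is_i}$ into the exact mean relation gives Equation~\ref{eq:meandynamicsTrun}.

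I expect the genuinely delicate step to be the last one: the passage from the exact identity $\frac{d\bar{\mu}_{is_i}}{dt}=\frac{\bar{x}_{is_i}-\bar{\mu}_{is_i}}{\lambda+t+1}$ to the approximate variance-corrected form. Everything up to and including the mean-field relation is exact, but the moment closure is only an approximation (hence the ``$\approx$''), and its two weak points are truncating the Taylor series at second order and dropping the within-population covariance terms $\mathrm{Cov}(\mu_{js_j},\mu_{js_j'})$ that survive the factorization because the simplex constraint $\sum_{s_j}\mu_{js_j}=1$ correlates the components of $\bm{\upmu}_j$. I would flag that these are standard moment-closure assumptions (see \cite{gillespie2009moment}) rather than rigorous bounds, and that their accuracy is what the PDE-versus-simulation comparison in Figure~\ref{fig:SHgame} is meant to validate empirically.
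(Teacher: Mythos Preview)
Your proposal is correct and follows essentially the same route as the paper's own proof: first derive the exact identity $\frac{d\bar{\mu}_{is_i}}{dt}=\frac{\bar{x}_{is_i}-\bar{\mu}_{is_i}}{\lambda+t+1}$ from the continuity equation via integration by parts with vanishing boundary mass, and then approximate $\bar{x}_{is_i}=\mathbb{E}[f_{s_i}]$ by a second-order Taylor expansion at the mean beliefs together with the moment closure. Your observation that $\bar{x}_{is_i}$ depends only on $\{\bm{\upmu}_j\}_{j\in V_i}$ and hence pulls out of the $d\bm{\upmu}_i$ integral is a slightly cleaner way to reach the exact relation than the paper's more mechanical expansion of the divergence, and your remark about the discarded within-population covariances induced by the simplex constraint is a useful caveat that the paper does not make explicit.
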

In general, the mean belief dynamics is under the joint effects of the mean, variance, and infinitely many higher moments of the belief distribution.
To allow for more conclusive results, 
we apply the moment closure approximation\footnote{Moment closure is a typical approximation method used to estimate moments of population models \cite{gillespie2009moment,goodman1953population,matis2010achieving}. To use moment closure, a level is chosen past which all cumulants are set to zero. The conventional choice of the level is $2$, i.e., setting the third and higher cumulants to be zero.}  and assume the effects of the third and higher moments to be negligible.

Now, just for a moment, suppose that the system beliefs are homogeneous ----
the beliefs of every individuals are the same.
Hence, the mean belief dynamics are effectively the belief dynamics of individuals. The following proposition follows from Equation \ref{eq:belief}.
\begin{proposition} [\textbf{Belief Dynamics for Homogeneous Populations}]
For a homogeneous system, the dynamics of the belief $\bm{\upmu}_i$ about each population $i\in V$ is governed by a system of  differential equations such that for each strategy $s_i$,
\begin{align}
      \frac{d {\mu}_{is_i} }{d t} =\frac{ {x}_{is_i} - \mu_{is_i}} {\lambda +t+1}
= \frac{f_{s_i}( \{\bm{\upmu}_{j}\}_{j \in V_i} ) - \mu_{is_i} } { \lambda  + t + 1} 
\label{eq:meanbeliefhomo}
\end{align}
where $\mu_{is_i}$ is the same for all agents in each neighbor population $j \in V_i$.
\label{prop:meanbeliefhomo}
\end{proposition}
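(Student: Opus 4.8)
The plan is to obtain the stated ODE directly from the discrete-time belief update in Equation \ref{eq:belief}, specialised to the homogeneous case. First I would invoke homogeneity: if every agent holds identical beliefs, then the belief about population $i$ is independent of the agent index $k$, so I write simply $\mu_{is_i}(t)$. Relabelling Equation \ref{eq:belief} so that the belief concerns population $i$ gives $(\lambda + t + 1)\,\mu_{is_i}(t+1) = (\lambda + t)\,\mu_{is_i}(t) + \bar{x}_{is_i}(t)$. Subtracting $(\lambda+t+1)\,\mu_{is_i}(t)$ from both sides isolates the forward difference, yielding $(\lambda + t + 1)\big[\mu_{is_i}(t+1) - \mu_{is_i}(t)\big] = \bar{x}_{is_i}(t) - \mu_{is_i}(t)$, and hence $\mu_{is_i}(t+1) - \mu_{is_i}(t) = \big(\bar{x}_{is_i}(t) - \mu_{is_i}(t)\big)/(\lambda + t + 1)$.

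Next I would pass to the continuous-time limit. Since the effective step size $1/(\lambda + t + 1)$ vanishes as $t$ grows, the forward difference is well approximated by the derivative $d\mu_{is_i}/dt$; this is the standard interpolation underlying the ODE method in the SFP literature and the same limiting procedure already used at the population level in Proposition \ref{theorem:pde}. This produces $d\mu_{is_i}/dt = (\bar{x}_{is_i} - \mu_{is_i})/(\lambda + t + 1)$, which is the first equality of the claim. The remaining step is to replace the population average $\bar{x}_{is_i}$ by the individual smooth best response. Using homogeneity again: because all agents share the same beliefs $\{\bm{\upmu}_j\}_{j\in V_i}$, they compute identical expected payoffs $u_{is_i}$ and play the identical smooth best response $x_{is_i}$ from Equation \ref{eq:exploration}. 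Thus the belief distribution is a point mass and the integral defining $\bar{x}_{is_i}$ in Equation \ref{eq:meanchoice} collapses to the single value $x_{is_i} = f_{s_i}(\{\bm{\upmu}_j\}_{j\in V_i})$. Substituting gives the second equality.

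I expect no serious obstacle: the statement is essentially bookkeeping, combining the rearrangement of Equation \ref{eq:belief} with the observation that a homogeneous (point-mass) belief distribution makes the averaging in Equation \ref{eq:meanchoice} trivial. The only point deserving mild care is the discrete-to-continuous passage, where the diminishing step size $1/(\lambda+t+1)$ legitimises the derivative approximation --- exactly the limit already invoked for the population dynamics, so nothing genuinely new is required.
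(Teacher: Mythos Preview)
Your proposal is correct and follows essentially the same route as the paper: rearrange Equation~\ref{eq:belief} into forward-difference form and pass to the continuous-time limit. The only cosmetic difference is that the paper introduces an auxiliary step parameter $\delta\in(0,1]$ and takes $\delta\to 0$, whereas you appeal to the vanishing step size $1/(\lambda+t+1)$; your explicit justification of the second equality (collapsing $\bar{x}_{is_i}$ to $f_{s_i}$ via the point-mass argument) is a detail the paper leaves implicit.
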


Intuitively, the mean belief dynamics indicates the trend of beliefs in a system, and the variance of beliefs indicates  belief heterogeneity.
Contrasting Propositions \ref{prop:meanbelief} and \ref{prop:meanbeliefhomo}, it is clear that the variance of belief (belief heterogeneity) plays a role in determining the mean belief dynamics (the trend of beliefs) for heterogeneous systems.
It is then natural to ask: how does the belief heterogeneity evolve over time? How much does the belief heterogeneity affect the trend of beliefs?
Our investigation to these questions reveals an interesting finding  --- the variance of beliefs asymptotically tends to zero.
\begin{theorem}[\textbf{Quadratic Decay of the Variance of Population Beliefs}]
The dynamics of the variance of beliefs $\bm{\upmu}_i$ about each population $i \in V$ is governed by a system of differential equations such that for each strategy $s_i$,
\begin{equation}
    \frac{d \text{Var}(\mu_{is_i})}{dt} =-\frac{ 2 \text{Var}(\mu_{is_i})}{\lambda + t + 1}.
    \label{eq:variance}
\end{equation}
At given time $t$,
 $   \text{Var}(\mu_{is_i}) = \left (\frac{ \lambda + 1 }{\lambda + t + 1}\right )^2 \sigma^2({\mu}_{is_i})$,
where $ \sigma^2(\mu_{is_i})$ is the initial variance.
Thus, the variance $\text{Var}(\mu_{is_i})$  decays to zero quadratically fast with time.
\label{theorem:variance}
\end{theorem}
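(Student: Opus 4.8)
The plan is to treat the PDE (Equation \ref{eq:pde}) as a continuity equation and extract the evolution of the first two moments of the marginal $p(\bm{\upmu}_i,t)$ by testing it against suitable polynomials in $\mu_{is_i}$. The key structural observation I would exploit is that the velocity field in the $\mu_{is_i}$-direction, namely $(\bar{x}_{is_i}-\mu_{is_i})/(\lambda+t+1)$, is \emph{affine} in $\mu_{is_i}$: the term $\bar{x}_{is_i}$ is a system-wide average obtained by integrating the logit choice function over the beliefs about the \emph{neighbor} populations $j\in V_i$ (Equation \ref{eq:meanchoice}), so it carries no dependence on $\bm{\upmu}_i$ and acts as a time-dependent scalar constant under the $\bm{\upmu}_i$-expectation. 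This affine structure is exactly what makes the variance equation \emph{exact}, in contrast to the mean dynamics of Proposition \ref{prop:meanbelief}, which requires a Taylor expansion and moment closure precisely because $\bar{x}_{is_i}$ depends nonlinearly on the neighbors' beliefs.

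Concretely, for any smooth test function $\phi(\bm{\upmu}_i)$ I would multiply Equation \ref{eq:pde} by $\phi$, integrate over $\Delta_i$, and integrate by parts. Assumption \ref{as:initialbelief} together with the preservation of zero boundary mass (Corollary 2 of the supplementary) kills the surface terms, yielding the transport identity $\frac{d}{dt}\mathbb{E}[\phi]=\mathbb{E}\!\left[\frac{\bar{\mathbf{x}}_i-\bm{\upmu}_i}{\lambda+t+1}\cdot\nabla\phi\right]$. Taking $\phi=\mu_{is_i}$ (so $\nabla\phi=\mathbf{e}_{s_i}$) recovers the exact mean law $\frac{d\bar{\mu}_{is_i}}{dt}=\frac{\bar{x}_{is_i}-\bar{\mu}_{is_i}}{\lambda+t+1}$, and taking $\phi=\mu_{is_i}^2$ (so $\nabla\phi=2\mu_{is_i}\mathbf{e}_{s_i}$) gives $\frac{d}{dt}\mathbb{E}[\mu_{is_i}^2]=\frac{2}{\lambda+t+1}\big(\bar{x}_{is_i}\bar{\mu}_{is_i}-\mathbb{E}[\mu_{is_i}^2]\big)$, using that $\bar{x}_{is_i}$ pulls out of the expectation.

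I would then assemble the variance through $\frac{d}{dt}\text{Var}(\mu_{is_i})=\frac{d}{dt}\mathbb{E}[\mu_{is_i}^2]-2\bar{\mu}_{is_i}\frac{d\bar{\mu}_{is_i}}{dt}$. Substituting the two identities, the cross terms carrying $\bar{x}_{is_i}$ cancel identically, leaving $\frac{d}{dt}\text{Var}(\mu_{is_i})=\frac{2}{\lambda+t+1}\big(\bar{\mu}_{is_i}^2-\mathbb{E}[\mu_{is_i}^2]\big)=-\frac{2\,\text{Var}(\mu_{is_i})}{\lambda+t+1}$, which is Equation \ref{eq:variance}. This is a separable linear ODE: integrating $\frac{d}{dt}\ln\text{Var}(\mu_{is_i})=-\frac{2}{\lambda+t+1}$ gives $\text{Var}(\mu_{is_i})=C(\lambda+t+1)^{-2}$, and fixing $C$ by the initial condition $\text{Var}(\mu_{is_i})\big|_{t=0}=\sigma^2(\mu_{is_i})$ yields the claimed closed form $\big(\tfrac{\lambda+1}{\lambda+t+1}\big)^2\sigma^2(\mu_{is_i})$, which decays like $t^{-2}$.

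The step I expect to demand the most care is justifying the cancellation rigorously, i.e.\ confirming that $\bar{x}_{is_i}$ is genuinely independent of $\bm{\upmu}_i$ so that it may be extracted from the $\bm{\upmu}_i$-expectation; this rests on the factorization $p(\bm{\upmu},t)=\prod_{i\in V}p(\bm{\upmu}_i,t)$ and on the fact that an agent's play in population $i$ is driven solely by its beliefs about the neighbors $V_i$, never by the beliefs held \emph{about} population $i$. A secondary point to verify is that the integration by parts is legitimate on the lower-dimensional simplex $\Delta_i$ with the drift tangent to it --- note $\sum_{s_i\in S_i}(\bar{x}_{is_i}-\mu_{is_i})=0$ --- and that the vanishing boundary mass suppresses all surface terms throughout the evolution, not merely at $t=0$.
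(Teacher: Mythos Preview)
Your proposal is correct and follows essentially the same route as the paper: both compute $\frac{d}{dt}\mathbb{E}[\mu_{is_i}^2]$ and $\frac{d}{dt}\bar{\mu}_{is_i}$ by testing the continuity equation (Equation \ref{eq:pde}) against $\mu_{is_i}^2$ and $\mu_{is_i}$, integrate by parts using the vanishing boundary mass, combine via $\text{Var}=\mathbb{E}[\mu_{is_i}^2]-\bar{\mu}_{is_i}^2$, observe the $\bar{x}_{is_i}$ terms cancel, and solve the resulting separable ODE. Your explicit identification of \emph{why} the variance equation closes exactly --- that $\bar{x}_{is_i}$ depends only on the neighbor marginals $\{p(\bm{\upmu}_j,t)\}_{j\in V_i}$ and hence is constant under the $\bm{\upmu}_i$-expectation --- is the crux, and your unified transport-identity presentation is a mild streamlining of the paper's more line-by-line computation.
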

Such quadratic decay of the variance stands no matter what 2-player subgames agents play and what initial conditions are.
Put differently, the beliefs will eventually homogenize for all population network games.
This fact immediately implies the system state in the limit.
\begin{corollary}
As time $t\to \infty$, the density function $p(\bm{\upmu}_i, t)$ for each population $i \in V$ evolves into a Dirac delta function, and the variance of the choice distributions within each population  $i \in V$  also goes to zero. 
\label{cor:homo}
\end{corollary}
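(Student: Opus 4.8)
The plan is to derive both assertions as immediate consequences of Theorem~\ref{theorem:variance}, which already supplies the crucial quantitative ingredient: for every population $i \in V$ and every strategy $s_i \in S_i$, the closed form $\text{Var}(\mu_{is_i}) = \big(\tfrac{\lambda+1}{\lambda+t+1}\big)^2 \sigma^2(\mu_{is_i}) \to 0$ as $t \to \infty$. Everything then reduces to the elementary principle that a probability distribution supported on a compact set whose coordinatewise variances all vanish must concentrate on a single point.

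For the first claim, I would work with the marginal measure $P_i(t)$ on the simplex $\Delta_i$ associated with the density $p(\bm{\upmu}_i,t)$ and compare it to the Dirac mass $\delta_{\bar{\bm{\upmu}}_i(t)}$ at its own (time-varying) mean. By Chebyshev's inequality, for any $\epsilon > 0$ and each coordinate $s_i$, $P_i(t)\big(|\mu_{is_i} - \bar{\mu}_{is_i}| \geq \epsilon\big) \leq \text{Var}(\mu_{is_i})/\epsilon^2$; a union bound over the finitely many strategies in $S_i$ then shows that all mass escapes every fixed neighbourhood of the mean as $t \to \infty$. Equivalently, the squared quadratic-Wasserstein distance satisfies $W_2\big(P_i(t), \delta_{\bar{\bm{\upmu}}_i(t)}\big)^2 = \sum_{s_i \in S_i}\text{Var}(\mu_{is_i}) \to 0$, so $P_i(t)$ becomes degenerate and converges weakly to a point mass, which is what ``the density evolves into a Dirac delta function'' means. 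The main subtlety I would take care to address is interpretive: the density $p(\bm{\upmu}_i,t)$ cannot converge pointwise to a delta (which is not an ordinary function), so the statement must be read as weak-$*$ (in-distribution) convergence of the measure. The vanishing-variance computation delivers exactly this, and it does so independently of whether the mean $\bar{\bm{\upmu}}_i(t)$ itself settles to a limit (its convergence is a separate matter handled by the later equilibrium theorems).

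For the second claim, I would exploit that each agent's choice probability is the logit image of its beliefs, $x_{is_i}(k) = f_{s_i}\big(\{\bm{\upmu}_j^i(k)\}_{j\in V_i}\big)$, and that $f_{s_i}$, being a softmax composed with the linear payoff map $u_{is_i} = \sum_{j\in V_i}\mathbf{e}_{s_i}^\top \mathbf{A}_{ij}\bm{\upmu}_j$, is Lipschitz on the compact product of simplices with some constant $L$ (its gradient is bounded by $\beta$ times products of payoff entries and choice probabilities). Writing the variance through an i.i.d.\ copy $\bm{\upmu}'$ of the belief vector, $\text{Var}(x_{is_i}) = \tfrac{1}{2}\,\mathbb{E}\big[(f_{s_i}(\bm{\upmu}) - f_{s_i}(\bm{\upmu}'))^2\big] \leq \tfrac{L^2}{2}\,\mathbb{E}\big[\|\bm{\upmu} - \bm{\upmu}'\|^2\big] = L^2 \sum_{j\in V_i}\sum_{s_j\in S_j}\text{Var}(\mu_{js_j})$, where the factorization $p(\bm{\upmu},t) = \prod_{i} p(\bm{\upmu}_i,t)$ lets me split the squared norm coordinatewise. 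Since every belief variance on the right decays to zero by Theorem~\ref{theorem:variance}, the choice variance is squeezed to zero as well. The only mild obstacle here is bookkeeping --- establishing the Lipschitz bound for the logit map and tracking which belief coordinates feed population $i$'s choices --- but no idea beyond the variance decay is needed.
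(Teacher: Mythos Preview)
Your proposal is correct and follows the same line as the paper: both treat the corollary as an immediate consequence of Theorem~\ref{theorem:variance}'s variance decay. The paper in fact offers no proof beyond the remark that belief homogenization ``immediately implies the system state in the limit,'' whereas you supply the missing analytic details---the Chebyshev/Wasserstein argument for concentration of $P_i(t)$ at its mean, and the Lipschitz bound on the logit map to propagate vanishing belief variance to vanishing choice variance. These elaborations are sound and are precisely what a careful reader would reconstruct; there is no substantive difference in strategy.
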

Note that while the choice distributions will homogenize within each population, they are not necessarily the same across different populations.
This is because the strategy choice of each population is in response to its own set of neighbor populations (which are generally different).

\section{Convergence of Smooth Fictitious Play in Population Network Games}
The finding on belief homogenization is non-trivial and also technically important.
One implication is that
the fixed points of systems with initially heterogeneous beliefs are the same as in systems with homogeneous beliefs.
Thus, it follows from the belief dynamics for homogeneous systems (Proposition \ref{prop:meanbeliefhomo}) that the fixed points of systems have the following property. 
\begin{theorem}[\textbf{Fixed Points of System Dynamics}]
For any system that initially have homogeneous or heterogeneous beliefs, the fixed points of the system dynamics is a pair $(\bm{\upmu}^\ast, \mathbf{x}^\ast)$ that satisfy  $\mathbf{x}_i^\ast = \bm{\upmu}_i^\ast$ for each population $i \in V$ and are the solutions of the system of equations
\begin{equation}
x_{is_i}^\ast =  \frac{\exp \left(\beta \sum_{j\in V_i} \mathbf{e}_{s_i}^\top \mathbf{A}_{ij} \mathbf{x}_j^\ast \right)}{\sum_{s_i'\in S_i} \exp \left(\beta \sum_{ j\in V_i} \mathbf{e}_{s_i'}^\top \mathbf{A}_{ij} \mathbf{x}_j^\ast \right) } 
     \label{eq:fixedpoint}
\end{equation}
for every strategy $s_i \in S_i$ and population $i \in V$.
Such fixed points always exist and coincide with the Quantal Response Equilibria (QRE) \cite{mckelvey1995quantal} of the population network game $\Gamma$. 
\label{prop:fixedpoint}
\end{theorem}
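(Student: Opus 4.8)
The plan is to reduce the analysis of fixed points to the homogeneous belief dynamics of Proposition \ref{prop:meanbeliefhomo}, exploit the belief homogenization established in Theorem \ref{theorem:variance} and Corollary \ref{cor:homo}, and then close the argument with a standard existence result together with the definition of a logit QRE. First I would argue that, whether the system starts homogeneous or heterogeneous, any fixed point must be supported on a single belief. A fixed point is a stationary state, so in particular $d\,\text{Var}(\mu_{is_i})/dt = 0$ for every $i$ and $s_i$; but Theorem \ref{theorem:variance} shows the variance strictly decreases whenever it is positive, so stationarity forces $\text{Var}(\mu_{is_i}) = 0$. By Corollary \ref{cor:homo} the marginal measure is then a Dirac mass and beliefs are homogeneous across the whole system. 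This reduction is precisely what licenses analyzing the fixed points through the homogeneous dynamics, and it is where the earlier variance result does the real work.

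Next I would read off the fixed-point equation from Proposition \ref{prop:meanbeliefhomo}. Setting the right-hand side of Equation \ref{eq:meanbeliefhomo} to zero and using that $\lambda + t + 1 > 0$ for all $t$, I obtain $\mu_{is_i}^\ast = f_{s_i}(\{\bm{\upmu}_j^\ast\}_{j\in V_i}) = x_{is_i}^\ast$ for every strategy $s_i$ and population $i$; hence $\bm{\upmu}_i^\ast = \mathbf{x}_i^\ast$, the self-consistency that the belief about a population coincides with that population's actual choice distribution. Substituting $\bm{\upmu}_j^\ast = \mathbf{x}_j^\ast$ into the logit map of Equation \ref{eq:exploration} yields Equation \ref{eq:fixedpoint} verbatim.

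For existence I would invoke Brouwer's fixed point theorem. The map $\Phi$ sending $(\mathbf{x}_i)_{i\in V}$ to the right-hand side of Equation \ref{eq:fixedpoint} is a composition of the linear payoff maps $\mathbf{x}_j \mapsto \mathbf{A}_{ij}\mathbf{x}_j$, the exponential, and softmax normalization, so it is continuous, and it maps the nonempty compact convex set $\prod_{i\in V}\Delta_i$ into itself; Brouwer then guarantees a solution of Equation \ref{eq:fixedpoint}. Finally, to identify these fixed points with QRE, I would observe that the expected payoff to strategy $s_i$ against the profile, $u_{is_i} = \sum_{j\in V_i}\mathbf{e}_{s_i}^\top\mathbf{A}_{ij}\mathbf{x}_j$, is exactly the payoff entering the logit (quantal) response; hence Equation \ref{eq:fixedpoint} is precisely the defining condition of a logit QRE of $\Gamma$, giving the claimed coincidence.

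The step I expect to be the main obstacle is the first one: rigorously justifying the passage from the infinite-dimensional measure-valued dynamics (Equation \ref{eq:pde}) to the finite-dimensional condition in Equation \ref{eq:fixedpoint}. One must argue carefully that a stationary state of the continuity-type PDE can only be a product of Dirac masses — the variance argument delivers this cleanly, but it rests on the moment-closure approximation underlying Theorem \ref{theorem:variance} — and one should also confirm that at such a degenerate state the mean dynamics of Proposition \ref{prop:meanbelief} genuinely collapse onto the homogeneous dynamics of Proposition \ref{prop:meanbeliefhomo}, i.e., that the variance-correction term vanishes, so that no heterogeneous fixed point is overlooked.
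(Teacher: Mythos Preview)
Your proposal is correct and follows essentially the same route as the paper: reduce to the homogeneous case via belief homogenization, set the right-hand side of Equation \ref{eq:meanbeliefhomo} to zero to obtain $\mathbf{x}_i^\ast = \bm{\upmu}_i^\ast$, and substitute into the logit map. You are in fact more careful than the paper on two points it leaves implicit --- you spell out the Brouwer argument for existence and the identification with the logit QRE definition, whereas the paper's proof simply asserts both.

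One correction to your final paragraph: your worry that the variance argument ``rests on the moment-closure approximation underlying Theorem \ref{theorem:variance}'' is misplaced. The derivation of Equation \ref{eq:variance} in the paper is exact --- it comes directly from the PDE of Proposition \ref{theorem:pde} and the exact mean dynamics $d\bar{\mu}_{is_i}/dt = (\bar{x}_{is_i} - \bar{\mu}_{is_i})/(\lambda+t+1)$ (Lemma \ref{le:meandynamics} in the appendix), with no truncation of higher moments. The moment closure enters only in Proposition \ref{prop:meanbelief}, where $\bar{x}_{is_i}$ is approximated in terms of $\bar{\bm{\upmu}}$ and the variance. So the reduction of a stationary measure to a product of Dirac masses via $d\,\text{Var}/dt = 0 \Rightarrow \text{Var} = 0$ is rigorous, and your main anticipated obstacle is not actually an obstacle.
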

Note that the above theorem applies for all population network games. 

We study the convergence of SFP to the QRE under the both cases of network competition and network coordination.
Due to space limits, in the following, we mainly focus on network competition and present only the main result on network coordination. 

\subsection{Network Competition}
Consider a competitive population network game $\Gamma$. 
Note that in competitive network games, the Nash equilibrium payoffs need not to be unique (which is in clear contrast to two-player settings), and it generally allows for infinitely many Nash equilibria.
In the following theorem, focusing on homogeneous systems, we establish the convergence of the belief dynamics to a unique QRE, regardless of the number of Nash equilibria in the underlying game.
\begin{theorem}[\textbf{Convergence in Homogeneous Network Competition}] Given a competitive $\Gamma$, for any system that has  homogeneous beliefs,  the belief dynamics (Equation \ref{eq:meanbeliefhomo}) converges to a unique QRE which is globally asymptotically stable.
\label{theorem:convergeHomo}
\end{theorem}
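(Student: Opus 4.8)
The plan is to strip away the explicit time dependence, build a strict Lyapunov function adapted to the weighted zero-sum structure, and then upgrade ``convergence to the equilibrium set'' to ``global convergence to a single point'' by a convexity argument. First I would remove the non-autonomous factor: the coefficient $1/(\lambda+t+1)$ in Equation \ref{eq:meanbeliefhomo} is strictly positive and state-independent, so the time change $\tau=\ln(\lambda+t+1)$, a smooth increasing bijection of $[0,\infty)$ onto $[\ln(\lambda+1),\infty)$, turns the dynamics into the autonomous logit flow $\tfrac{d\bm{\upmu}_i}{d\tau}=f_i(\{\bm{\upmu}_j\}_{j\in V_i})-\bm{\upmu}_i$, with $f_i$ the smoothed best response of Equation \ref{eq:exploration}. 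Since $\tau\to\infty$ as $t\to\infty$, any convergence in $\tau$ transfers verbatim to $t$. The rest points are exactly the QRE of Theorem \ref{prop:fixedpoint}, and because the logit map has image bounded away from the boundary, the flow keeps $\bm{\upmu}$ in a compact subset of the interior of $\prod_i\Delta_i$, so all logarithms below are well defined.

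Next I would introduce the candidate. Write $u_i(\bm{\upmu})=\sum_{j\in V_i}\mathbf{A}_{ij}\bm{\upmu}_j$, let $H$ denote Shannon entropy, and set $W_i(\bm{\upmu})=\max_{x_i\in\Delta_i}\{\beta x_i^\top u_i+H(x_i)\}=\ln\sum_{s_i}\exp(\beta u_{is_i})$, whose maximizer is $f_i$. Define $L_i=W_i-\beta\bm{\upmu}_i^\top u_i-H(\bm{\upmu}_i)\ge 0$, which vanishes iff $\bm{\upmu}_i=f_i$, and $L=\sum_{i\in V}\omega_i L_i$ with the weights $\omega_i>0$ of Equation \ref{eq:zerosum}. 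Thus $L\ge0$ and $L=0$ precisely on the QRE set. Differentiating along the flow, the envelope theorem gives $\dot W_i=\beta f_i^\top\dot u_i$; combining this with $\beta u_i=\ln f_i+(\text{const})\mathbf{1}$ and $\mathbf{1}^\top\dot{\bm{\upmu}}_i=0$, the single-population contribution collapses to $g_i:=\sum_{s_i}(f_{is_i}-\mu_{is_i})\ln(\mu_{is_i}/f_{is_i})\le0$, which is zero iff $f_i=\bm{\upmu}_i$, while the leftover is the bilinear cross term $\beta\dot{\bm{\upmu}}_i^\top\sum_{j\in V_i}\mathbf{A}_{ij}\dot{\bm{\upmu}}_j$.

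The crux of the argument is to kill this cross term. Summing over $i$ and symmetrizing per edge as in Equation \ref{eq:zerosum}, it equals $\beta\,Q(\dot{\bm{\upmu}})$, where $Q$ is the quadratic form appearing there. Equation \ref{eq:zerosum} asserts only $Q\equiv0$ on $\prod_i\Delta_i$, but a polynomial vanishing on a full-dimensional subset of the affine hull $\{\sum_{s_i}x_{is_i}=1\}$ vanishes on that entire affine space; expanding $Q$ about an interior point then forces $Q$ to vanish on the tangent space $\prod_i\{v_i:\mathbf{1}^\top v_i=0\}$. Since each $\dot{\bm{\upmu}}_i=f_i-\bm{\upmu}_i$ lies in that tangent space, the cross term is identically zero, leaving $\dot L=\sum_i\omega_i g_i\le0$, strictly negative off the QRE set. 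I expect this lifting of the zero-sum identity from strategy profiles to tangent directions to be the main obstacle, since Equation \ref{eq:zerosum} a priori constrains only the former.

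Finally I would establish uniqueness and hence global stability. Along the diagonal the payoff term vanishes, $\sum_i\omega_i\bm{\upmu}_i^\top u_i=Q(\bm{\upmu})=0$ by Equation \ref{eq:zerosum}, so $L=\sum_i\omega_i W_i-\sum_i\omega_i H(\bm{\upmu}_i)$. Each $W_i$ is a log-sum-exp of affine functions of $\bm{\upmu}$, hence convex, and each $-H(\bm{\upmu}_i)$ is strictly convex; therefore $L$ is strictly convex on the compact convex set $\prod_i\Delta_i$ and has a unique minimizer. Because QRE exist (Theorem \ref{prop:fixedpoint}) and there attain $L=0=\min L$, that minimizer is the unique QRE $\bm{\upmu}^\ast$. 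Thus $L$ is a strict Lyapunov function with unique zero $\bm{\upmu}^\ast$ and $\dot L<0$ elsewhere on a compact invariant domain, so (directly, or via LaSalle) $\bm{\upmu}^\ast$ is globally asymptotically stable; undoing the time change yields convergence of Equation \ref{eq:meanbeliefhomo} to $\bm{\upmu}^\ast$ from every initial condition.
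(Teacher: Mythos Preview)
Your proposal is correct and essentially mirrors the paper's proof: same time-reparameterization $\tau=\ln(\lambda+t+1)$, same Lyapunov function $L=\sum_i\omega_i[\pi_i(\mathbf{x}_i,\{\bm{\upmu}_j\})-\pi_i(\bm{\upmu}_i,\{\bm{\upmu}_j\})]$ (your $L$ is just $\beta$ times this), same envelope/first-order computation of $\dot L$, and the same strict-convexity argument (log-sum-exp convex, negative entropy strictly convex, bilinear term killed by Equation~\ref{eq:zerosum}) for uniqueness of the QRE. The only cosmetic difference is the cross-term step you flagged as the main obstacle: you lift the zero-sum identity from $\prod_i\Delta_i$ to the tangent space by a polynomial argument, whereas the paper simply expands $(\mathbf{x}_i-\bm{\upmu}_i)^\top A_{ij}(\mathbf{x}_j-\bm{\upmu}_j)$ into four bilinear terms, each evaluated at points of $\prod_k\Delta_k$, and applies Equation~\ref{eq:zerosum} four times directly, so no lifting is needed.
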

\begin{hproof}
We proof this theorem by showing that the ``distance'' between $\mathbf{x}_i$ and $\mathbf{\upmu}_i$ is strictly decreasing until  the QRE is reached. In particular, we measure the distance in terms of the perturbed payoff and construct a strict Lyapunov function 
\begin{equation}
     L \coloneqq \sum_{ i \in V} \omega_i \left [ \pi_i\left( \mathbf{x}_i , \{\bm{\upmu}_{j}\}_{j\in V_i} \right ) -   \pi_i\left( \bm{\upmu}_{i} , \{\bm{\upmu}_{j}\}_{j\in V_i} \right ) \right ]
\end{equation}
where $\omega_1 \ldots \omega_n$ are the positive weights given by $\Gamma$, and $\pi_i$ is a perturbed payoff function defined as
$    \pi_i\left( \mathbf{x}_i , \{\bm{\upmu}_{j}\}_{j\in V_i} \right ) \coloneqq \mathbf{x}_i^\top \sum_{j \in V_i} A_{ij} \bm{\upmu}_{j}  -\frac{1}{\beta}\sum_{s_i \in S_i} x_{is_i}\ln (x_{is_i})$.
\end{hproof}
Next, we turn to systems with initially heterogeneous beliefs. 
Leveraging that the variance of beliefs eventually goes to zero, we establish the following lemma. 
\begin{lemma}
\label{le:1}
For a system that initially has heterogeneous beliefs, the mean belief dynamics (Equation \ref{eq:meandynamicsTrun}) is asymptotically autonomous \cite{markus1956asymptotically} with the limit equation 
 $   \frac{d \bm{\upmu}_i }{d t} = \mathbf{x}_{i} - \bm{\upmu}_i ,$
which after time-reparmeterization is equivalent to the belief dynamics for homogeneous systems (Equation \ref{eq:meanbeliefhomo}).
\end{lemma}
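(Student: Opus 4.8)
The plan is to make the heuristic ``the variance correction becomes negligible'' precise by first performing an explicit time change, rewriting the moment-closed mean dynamics in the new time, and then verifying the two requirements in the definition of an asymptotically autonomous system: pointwise convergence of the reparameterized vector field to the claimed limit field, and uniformity of this convergence on compact subsets of the state space $\Delta$.

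First I would introduce the time change $\tau = \ln(\lambda + t + 1)$, so that $\frac{dt}{d\tau} = \lambda + t + 1 = e^{\tau}$. Multiplying Equation \ref{eq:meandynamicsTrun} by $\lambda + t + 1$ clears the common denominator and gives, in the new time,
\begin{equation*}
\frac{d \bar{\mu}_{is_i}}{d \tau} = \left( f_{s_i}(\{\bm{\upmu}_j\}_{j\in V_i}) - \bar{\mu}_{is_i} \right) + \frac{1}{2}\sum_{j\in V_i}\sum_{s_j\in S_j} \frac{\partial^2 f_{s_i}(\{\bm{\upmu}_j\}_{j\in V_i})}{(\partial \mu_{js_j})^2}\, \text{Var}(\mu_{js_j}).
\end{equation*}
Applying the same change of variables to Equation \ref{eq:meanbeliefhomo} simply removes its denominator and produces $\frac{d\bm{\upmu}_i}{d\tau} = \mathbf{x}_i - \bm{\upmu}_i$ (componentwise, $x_{is_i} = f_{s_i}(\{\bm{\upmu}_j\}_{j\in V_i})$); this is both the candidate limit field and the reparameterized homogeneous dynamics, which settles the time-reparameterization equivalence asserted in the statement. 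Note that without reparameterizing, the $\frac{1}{\lambda+t+1}$ factor would force the trivial limit $\dot{\bar{\bm{\upmu}}}_i = 0$, so the time change is precisely what makes the limit informative.

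The core step is to show that the second summand above vanishes as $\tau \to \infty$. By Theorem \ref{theorem:variance}, $\text{Var}(\mu_{js_j}) = \big(\frac{\lambda+1}{\lambda+t+1}\big)^2 \sigma^2(\mu_{js_j}) = (\lambda+1)^2 e^{-2\tau}\,\sigma^2(\mu_{js_j}) \to 0$, and crucially this factor depends only on $\tau$ and not on the state $\bm{\upmu}$. Since the logit map $f_{s_i}$ of Equation \ref{eq:exploration} is smooth and the state space $\Delta = \prod_{i\in V}\Delta_i$ is compact, the Hessian entries $\partial^2 f_{s_i}/(\partial \mu_{js_j})^2$ are bounded by a constant $M$ uniformly over $\Delta$. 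Therefore the difference between the reparameterized field and the candidate limit field is bounded in norm by $\frac{M}{2}\sum_{j\in V_i}\sum_{s_j\in S_j}\text{Var}(\mu_{js_j})$, which tends to $0$ uniformly over all of $\Delta$. This verifies the Markus \cite{markus1956asymptotically} definition of asymptotic autonomy with limit field $\mathbf{x}_i - \bm{\upmu}_i$.

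I expect the only genuine obstacle to be the uniformity requirement --- asymptotic autonomy demands convergence that is uniform on compact sets, not merely pointwise --- but here it is benign: the variance factor is state-independent and the logit Hessian is uniformly bounded on the compact simplex, so uniform convergence on the entire state space follows at once. The one point to state carefully is that the argument operates on the moment-closed approximation of Equation \ref{eq:meandynamicsTrun}, in which third- and higher-order cumulants have already been set to zero, so no further remainder terms need to be controlled in the limit.
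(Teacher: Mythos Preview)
Your proposal is correct and follows essentially the same route as the paper: perform the logarithmic time change to clear the $1/(\lambda+t+1)$ factor, invoke Theorem~\ref{theorem:variance} to see that the variance term becomes $C\,e^{-2\tau}$, and use the smoothness of the logit map on the compact simplex to bound the Hessian uniformly and obtain uniform convergence to the limit field $\mathbf{x}_i-\bm{\upmu}_i$. The only cosmetic difference is that the paper takes $\tau=\ln\frac{\lambda+t+1}{\lambda+1}$ rather than your $\tau=\ln(\lambda+t+1)$, which merely shifts $\tau$ by a constant and does not affect the argument.
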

For ease of presentation, we follow the convention to denote the solution flows of an asymptotically autonomous system and its limit equation by $\phi$ and $\Theta$, respectively. 
Thieme \cite{thieme1992convergence}  provides the following seminal result that connects the limit behaviors of $\phi$ and $\Theta$.
\begin{lemma}[Thieme  \cite{thieme1992convergence} Theorem 4.2] 
\label{le:2}
Given a metric space $(X,d)$. Assume that the equilibria of  $\Theta$ are isolated compact $\Theta$-invariant subsets of $X$. The $\omega$-$\Theta$-limit set of any pre-compact $\Theta$-orbit contains a $\Theta$-equilibrium. 
The point $(s, x), s \geq t_0, x\in X$, have a pre-compact $\phi$-orbit.
Then the following alternative holds:
1) $\phi(t,s,x) \to e, t \to \infty$, for some $\Theta$-equilibrium e, and 2)
the $\omega$-$\phi$-limit set of $(s,x)$ contains finitely many $\Theta$-equilibria which are chained to each other in a cyclic way.
\end{lemma}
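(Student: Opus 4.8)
The plan is to treat this as a statement about the asymptotic structure of an asymptotically autonomous semiflow $\phi$ relative to its autonomous limit semiflow $\Theta$, and to establish the stated dichotomy through the theory of $\omega$-limit sets and chain recurrence. Since this is the general convergence result quoted verbatim from the asymptotically autonomous systems literature, I would reconstruct its proof rather than derive anything game-specific; the game-theoretic payoff appears only when the lemma is later applied. The backbone of the argument is the observation that the $\omega$-limit set of a \emph{non-autonomous} orbit of $\phi$ is nonetheless \emph{invariant under the autonomous flow} $\Theta$, which lets one import the full machinery (Conley-type decompositions, Butler--McGehee) that is available for genuinely autonomous systems.

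Concretely, I would proceed in four steps. First, I would use the precompactness hypothesis on the $\phi$-orbit through $(s,x)$ to conclude that its $\omega$-limit set $L \coloneqq \omega(s,x)$ is nonempty, compact, and connected, via the standard argument that $L$ is a nested intersection of nonempty compact connected closures. Second --- the crucial step --- I would show that $L$ is $\Theta$-invariant. This is where asymptotic autonomy enters: because the time-dependent vector field converges uniformly on compact sets to the autonomous one, any sequence $\phi(t_n,s,x)\to y\in L$ with $t_n\to\infty$ has shifted trajectories $\tau\mapsto\phi(t_n+\tau,s,x)$ that converge, by continuous dependence on initial data and on the vanishing non-autonomous perturbation, to the $\Theta$-trajectory $\tau\mapsto\Theta(\tau,y)$; hence the entire $\Theta$-orbit of $y$ lies in $L$. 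Third, I would upgrade invariance to \emph{internal chain transitivity}: the orbit of $\phi$ is an asymptotic pseudotrajectory of $\Theta$, so its $\omega$-limit set is internally chain transitive, i.e.\ chain recurrent under $\Theta$. Finally, I would invoke the hypotheses that the $\Theta$-equilibria are isolated compact invariant sets and that every precompact $\Theta$-orbit has a limit set meeting the equilibrium set; a connected, chain-transitive, compact $\Theta$-invariant set whose only recurrent content consists of isolated equilibria must, by a Conley-type decomposition, either reduce to a single equilibrium (alternative~1) or consist of finitely many equilibria cyclically chained by connecting orbits (alternative~2).

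I expect the main obstacle to be the second step, the $\Theta$-invariance of $L$. Unlike the autonomous case, one cannot simply flow points of $L$ forward within the orbit itself; the argument must quantify how much the true trajectory of $\phi$ drifts from the ideal $\Theta$-trajectory over a finite time window and show this drift vanishes along the diverging times. This calls for a Gronwall-type estimate bounding the deviation by the sup-norm of the non-autonomous perturbation on the relevant compact region, together with the precompactness that confines all relevant trajectories to a fixed compact set. A secondary difficulty is the Butler--McGehee lemma needed in the fourth step to exclude homoclinic behavior at a single equilibrium and to force the cyclic-chain structure, by contradicting the local stable/unstable structure near an isolated equilibrium. For the specific use in this paper these complications are ultimately harmless: the limit system of Lemma~\ref{le:1} is, after time-reparameterization, the homogeneous belief dynamics, which by Theorem~\ref{theorem:convergeHomo} has a \emph{unique} globally asymptotically stable equilibrium, so alternative~2 (a cyclic chain of distinct equilibria) is vacuous and only convergence to the unique QRE survives.
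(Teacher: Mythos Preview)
Your proposal is a reasonable reconstruction of the standard proof of Thieme's theorem, but there is nothing in the paper to compare it against: the paper does not prove this lemma at all. Lemma~\ref{le:2} is stated as a direct quotation of Theorem~4.2 from \cite{thieme1992convergence} and is used as a black box; neither the main text nor the appendix offers any argument for it. The paper's contribution is only to verify the hypotheses (via Lemma~\ref{le:limit} and Theorem~\ref{theorem:convergeHomo}) and then invoke the cited result.

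Your four-step outline (compactness and connectedness of $L$; $\Theta$-invariance via asymptotic autonomy and a Gronwall-type drift estimate; internal chain transitivity; Conley/Butler--McGehee decomposition into a single equilibrium or a cyclic chain) is a faithful sketch of how the result is established in the asymptotically autonomous systems literature, and your closing remark---that in the present application uniqueness of the QRE kills alternative~2---is exactly how the paper uses the lemma. So nothing is wrong with what you wrote; it is simply far more than the paper itself supplies. If your goal is to match the paper, the correct ``proof'' here is a one-line citation.
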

Combining the above results, we prove the convergence for initially heterogeneous systems. 
\begin{theorem} [\textbf{Convergence in Initially Heterogeneous Network Competition}]
Given a competitive $\Gamma$, for any system that initially has heterogeneous beliefs, the mean belief dynamics (Equation \ref{eq:meandynamicsTrun})  converges to a unique QRE.
\label{theorem:convergeHetero}
\end{theorem}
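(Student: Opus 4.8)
The plan is to reduce the initially heterogeneous case to the already-settled homogeneous case (Theorem \ref{theorem:convergeHomo}) by exploiting the asymptotic autonomy supplied by Lemma \ref{le:1}. The mean belief dynamics is governed by Equation \ref{eq:meandynamicsTrun}, whose only time-dependent ingredient beyond the explicit $1/(\lambda+t+1)$ scaling is the variance term $\text{Var}(\mu_{js_j})$. Introducing the reparameterization $\tau = \ln(\lambda+t+1)$ absorbs the scaling, turning the right-hand side into $f_{s_i}-\bar\mu_{is_i}$ plus a perturbation proportional to $\text{Var}(\mu_{js_j})$; by Theorem \ref{theorem:variance} this perturbation equals $\left(\frac{\lambda+1}{\lambda+t+1}\right)^2\sigma^2(\mu_{js_j})$ and hence vanishes as $\tau\to\infty$. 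This is exactly the content of Lemma \ref{le:1}: the heterogeneous flow $\phi$ is asymptotically autonomous with limit flow $\Theta$ given by $\frac{d\bm{\upmu}_i}{d\tau}=\mathbf{x}_i-\bm{\upmu}_i$, which is precisely the reparameterized homogeneous belief dynamics of Equation \ref{eq:meanbeliefhomo}.

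Next I would pin down the long-run behavior of the limit flow $\Theta$. By Theorem \ref{theorem:convergeHomo}, the homogeneous dynamics on the competitive game $\Gamma$ converges to a unique QRE $e=(\bm{\upmu}^\ast,\mathbf{x}^\ast)$ that is globally asymptotically stable. Since time-reparameterization is orbit-preserving, it leaves equilibria, invariant sets, and global asymptotic stability intact, so the same statement holds verbatim for $\Theta$.

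I would then invoke Thieme's alternative (Lemma \ref{le:2}) with $X=\Delta=\prod_{i\in V}\Delta_i$ under the Euclidean metric, verifying its three hypotheses. Isolation and compactness of the $\Theta$-equilibria are immediate because there is a single equilibrium point $e$. Global asymptotic stability of $e$ guarantees that the $\omega$-$\Theta$-limit set of every pre-compact $\Theta$-orbit is exactly $\{e\}$ and hence contains a $\Theta$-equilibrium. Pre-compactness of the $\phi$-orbit follows because beliefs remain in the compact product of simplices $\Delta$ for all time (indeed, never reaching its boundary). The remaining task is to discard branch (2) of the alternative: a cyclic chain of $\Theta$-equilibria requires either two or more distinct equilibria joined by heteroclinic orbits, which uniqueness forbids, or a single equilibrium carrying a homoclinic loop, which global asymptotic stability forbids since no nonconstant orbit can return to $e$. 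Thus only branch (1) survives, yielding $\phi(t,s,x)\to e$, i.e., the mean belief dynamics converges to the unique QRE.

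The hard part will be the careful justification that the reparameterization does not disturb the limit-set structure and, relatedly, that global asymptotic stability genuinely forecloses the cyclic-chain branch. Both hinge on the strict Lyapunov function $L$ constructed in the proof of Theorem \ref{theorem:convergeHomo}: because $L$ strictly decreases off the equilibrium, $e$ is the sole chain-recurrent point of $\Theta$, which is precisely the structural fact Thieme's theorem needs to collapse to branch (1). A secondary subtlety is confirming that the moment-closure approximation underlying Equation \ref{eq:meandynamicsTrun} is compatible with the asymptotic-autonomy argument, i.e., that the neglected higher moments decay at least as fast as the variance so that the limit vector field is genuinely $\mathbf{x}_i-\bm{\upmu}_i$.
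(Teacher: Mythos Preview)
Your proposal is correct and follows essentially the same route as the paper: establish asymptotic autonomy via Lemma \ref{le:1}, import the unique globally stable QRE from Theorem \ref{theorem:convergeHomo}, and apply Thieme's alternative (Lemma \ref{le:2}), using uniqueness to rule out the cyclic-chain branch. Your treatment is in fact more careful than the paper's in explicitly verifying Thieme's hypotheses and addressing the reparameterization and homoclinic-loop subtleties.
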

The following corollary immediately follows as the result of belief homogenization.
\begin{corollary}
For any competitive $\Gamma$, under smooth fictitious play, the choice distributions and beliefs of every individual converges to a unique QRE (given in Theorem \ref{prop:fixedpoint}), regardless of belief initialization and the number of Nash equilibria in $\Gamma$.
\label{cor:zerosum}
\end{corollary}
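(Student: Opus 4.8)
The plan is to derive Corollary~\ref{cor:zerosum} by converting convergence of the \emph{aggregate} belief statistics---the mean and the variance of the belief distribution---into a statement about the beliefs and choices of (almost) \emph{every} individual. The two ingredients are already in place: Theorems~\ref{theorem:convergeHomo} and~\ref{theorem:convergeHetero} pin the limit of the mean belief $\bar{\bm{\upmu}}_i$ at the unique QRE $\bm{\upmu}_i^\ast$ (for homogeneous and for initially heterogeneous systems, respectively), while Theorem~\ref{theorem:variance} shows that each coordinate variance $\text{Var}(\mu_{is_i})$ decays to zero. The task is therefore to upgrade ``mean $\to$ QRE and variance $\to 0$'' to ``the belief of almost every agent $\to$ QRE,'' and then to propagate this through the logit map to the choice distributions.

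First I would treat the belief coordinates. For a homogeneous system the claim is immediate, since the mean belief dynamics coincides with the individual belief dynamics (Proposition~\ref{prop:meanbeliefhomo}) and Theorem~\ref{theorem:convergeHomo} already gives global convergence to $\bm{\upmu}_i^\ast$. For an initially heterogeneous system, Theorem~\ref{theorem:convergeHetero} gives $\bar{\bm{\upmu}}_i(t)\to\bm{\upmu}_i^\ast$ while Theorem~\ref{theorem:variance} gives $\text{Var}(\mu_{is_i})(t)\to 0$. A distribution whose mean converges to $\bm{\upmu}_i^\ast$ and whose variance vanishes must collapse to the Dirac mass at $\bm{\upmu}_i^\ast$; this both identifies the limiting location of the Dirac delta produced in Corollary~\ref{cor:homo} and, via Chebyshev's inequality, shows that for every $\epsilon>0$ the population fraction of agents with $\lVert\bm{\upmu}_i(k)-\bm{\upmu}_i^\ast\rVert>\epsilon$ tends to zero. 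Hence individual beliefs converge to $\bm{\upmu}_i^\ast$ for $P$-almost every agent.

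Next I would push this through to the choice distributions. Each agent's mixed strategy is the logit response $x_{is_i}(k)=f_{s_i}(\{\bm{\upmu}_j^i(k)\}_{j\in V_i})$ of Equation~\ref{eq:exploration}, a smooth---hence uniformly continuous on the compact product simplex $\prod_{j\in V_i}\Delta_j$---function of the beliefs. Therefore belief convergence $\bm{\upmu}_j^i(k)\to\bm{\upmu}_j^\ast$ forces $\mathbf{x}_i(k)\to f(\{\bm{\upmu}_j^\ast\})$, and by the fixed-point characterization of Theorem~\ref{prop:fixedpoint} the latter equals $\mathbf{x}_i^\ast=\bm{\upmu}_i^\ast$, the QRE. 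Uniform continuity lets this convergence be taken simultaneously over the set of agents surviving the Chebyshev argument, so the within-population spread of choices collapses as well, consistently with the second claim of Corollary~\ref{cor:homo}. Since $\bm{\upmu}_i^\ast$ is the \emph{unique} QRE supplied by Theorems~\ref{theorem:convergeHomo}--\ref{theorem:convergeHetero}, the limit is independent of the belief initialization and of the number of Nash equilibria of $\Gamma$, which is exactly the corollary.

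The main obstacle is the mode-of-convergence step in the second paragraph: the earlier theorems control only the first two moments of the belief measure, so one must be careful that ``mean $\to$ point and variance $\to 0$'' yields convergence of individual beliefs in the appropriate sense (convergence in population measure, i.e. for almost every agent, rather than uniformly over \emph{all} agents). Chebyshev's inequality handles this cleanly, and because the state space is the compact simplex, uniform continuity of the logit map then transfers the convergence to the choice distributions without further estimates; the remaining steps are routine given the results already established.
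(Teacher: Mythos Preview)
Your proposal is correct and follows essentially the same route as the paper, which treats the corollary as an immediate consequence of belief homogenization (Theorem~\ref{theorem:variance} and Corollary~\ref{cor:homo}) combined with the convergence of the mean belief to the unique QRE (Theorems~\ref{theorem:convergeHomo} and~\ref{theorem:convergeHetero}). You simply spell out the details the paper leaves implicit---the Chebyshev step from ``mean $\to$ point, variance $\to 0$'' to individual-level convergence, and the propagation through the continuous logit map---which is exactly the intended content of the one-line justification in the text.
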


\subsection{Network Coordination}
We delegate most of the results on coordination network games to the supplementary, and summarize only the main result here.
\begin{theorem} [\textbf{Convergence in Network Coordination with Star Structure}]
Given a coordination $\Gamma$ where the network structure consists of a single or disconnected multiple stars, each orbit of the belief dynamics (Equation \ref{eq:meanbeliefhomo}) for homogeneous systems as well as each orbit of the mean belief dynamics (Equation \ref{eq:meandynamicsTrun}) for initially heterogeneous systems converges to the set of QRE.
\label{theorem:convergePotential}
\end{theorem}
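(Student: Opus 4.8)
The plan is to exploit the fact that a coordination network game is an \emph{exact potential game} and to build a strict Lyapunov function from a perturbed potential, settling the homogeneous case by LaSalle's invariance principle and then transferring to the heterogeneous case through the asymptotic-autonomy machinery (Lemmas \ref{le:1} and \ref{le:2}) already used for network competition. After the time-reparameterization $\tau = \ln(\lambda + t + 1)$, the homogeneous belief dynamics (Equation \ref{eq:meanbeliefhomo}) becomes the autonomous logit (smooth best-response) dynamics $\dot{\bm{\upmu}}_i = f_i(\{\bm{\upmu}_j\}_{j\in V_i}) - \bm{\upmu}_i$ on the compact product of simplices $\prod_{i\in V}\Delta_i$. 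Using the coordination symmetry $\mathbf{A}_{ij} = \mathbf{A}_{ji}^\top$, I would verify that $\Phi(\bm{\upmu}) = \sum_{\{i,j\}\in E}\bm{\upmu}_i^\top \mathbf{A}_{ij}\bm{\upmu}_j$ is an exact potential, so that $\nabla_{\bm{\upmu}_i}\Phi = \sum_{j\in V_i}\mathbf{A}_{ij}\bm{\upmu}_j = \mathbf{u}_i$, the payoff vector driving $i$'s logit response.

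For the homogeneous case I would take the perturbed potential $\Phi_\beta(\bm{\upmu}) = \Phi(\bm{\upmu}) - \tfrac{1}{\beta}\sum_{i\in V}\sum_{s_i}\mu_{is_i}\ln\mu_{is_i}$ as the candidate Lyapunov function and differentiate it along orbits. Inserting the logit identity $\beta\,\mathbf{u}_i = \ln f_i + (\text{const})\mathbf{1}$ and using that $f_i - \bm{\upmu}_i$ is tangent to the simplex (so all terms proportional to $\mathbf{1}$ drop out), the computation should collapse to $\dot{\Phi}_\beta = \tfrac{1}{\beta}\sum_{i\in V}\langle \ln f_i - \ln\bm{\upmu}_i,\, f_i - \bm{\upmu}_i\rangle \ge 0$, with equality exactly when $f_i = \bm{\upmu}_i$ for all $i$, i.e.\ at the QRE of Theorem \ref{prop:fixedpoint}. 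Since every orbit remains in the compact set $\prod_{i\in V}\Delta_i$ and stays in its interior by the boundary property noted after Proposition \ref{theorem:pde}, LaSalle's invariance principle yields convergence to the largest invariant subset of $\{\dot{\Phi}_\beta = 0\}$, which is precisely the set of QRE. I would stress that this step uses only the coordination symmetry, not the star topology.

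For the initially heterogeneous case I would invoke Theorem \ref{theorem:variance}: since every variance decays quadratically, the variance-correction term in the mean belief dynamics (Equation \ref{eq:meandynamicsTrun}) vanishes as $t\to\infty$, so by Lemma \ref{le:1} the mean belief dynamics is asymptotically autonomous with limit equation $\Theta$ equal to the homogeneous logit dynamics above. I would then apply Thieme's dichotomy (Lemma \ref{le:2}): the orbits are precompact, since they live in $\prod_{i\in V}\Delta_i$, and the LaSalle argument above shows that every $\omega$-$\Theta$-limit set meets the QRE set, so the dichotomy gives either convergence to a single QRE or an $\omega$-limit set consisting of finitely many QRE cyclically chained by heteroclinic $\Theta$-orbits. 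The strict Lyapunov function excludes the second alternative: $\Phi_\beta$ increases strictly along any non-constant orbit, hence strictly around a putative cycle of equilibria, a contradiction. Thus the heterogeneous orbits also converge to the set of QRE.

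The hard part will be verifying the standing hypothesis of Lemma \ref{le:2} that the $\Theta$-equilibria are isolated compact invariant sets, which can fail in a coordination game with several QRE, and this is exactly where the star structure becomes essential. In a single star, or a disjoint union of stars, each leaf best-responds only to the center, so the QRE conditions decouple: substituting the leaf responses $\bm{\upmu}_\ell^\ast = f_\ell(\bm{\upmu}_c^\ast)$ into the center's equation reduces the whole fixed-point system to a smooth self-map $\bm{\upmu}_c^\ast = f_c(\{f_\ell(\bm{\upmu}_c^\ast)\}_\ell)$ on the single simplex $\Delta_c$, whose fixed points are generically isolated. For a general network the leaf--leaf edges recouple all populations and can create QRE continua for which the isolation hypothesis, and hence the cyclic-chain exclusion, breaks down; circumventing this would require a substantially different argument, which is why the statement is confined to the star topology.
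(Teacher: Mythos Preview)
Your proposal is essentially correct and uses the same Lyapunov function as the paper: your perturbed potential $\Phi_\beta(\bm{\upmu})=\sum_{\{i,j\}\in E}\bm{\upmu}_i^\top\mathbf{A}_{ij}\bm{\upmu}_j+\sum_{i\in V}v(\bm{\upmu}_i)$, with $v(\mathbf{x})=-\tfrac{1}{\beta}\sum x\ln x$, coincides (on a union of stars) with the paper's $L=\sum_{j\in\mathcal{R}}\bigl[\bm{\upmu}_j^\top\sum_{i\in V_j}\mathbf{A}_{ji}\bm{\upmu}_i+v(\bm{\upmu}_j)+\sum_{i\in V_j}v(\bm{\upmu}_i)\bigr]$. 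The heterogeneous step via Lemma~\ref{le:1} and Thieme's result (Lemma~\ref{le:2}) is likewise the paper's route.

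Where you diverge from the paper is in \emph{where} the star hypothesis enters. Your derivative computation $\dot{\Phi}_\beta=\tfrac{1}{\beta}\sum_i\langle\ln f_i-\ln\bm{\upmu}_i,\,f_i-\bm{\upmu}_i\rangle\ge 0$ is valid for an arbitrary coordination network, because you use only $\nabla_{\bm{\upmu}_i}\Phi=\mathbf{u}_i$ and the logit identity. The paper, by contrast, invokes the star structure \emph{inside} the Lyapunov computation: it needs each leaf $i$ to have the single neighbor $j$ so that $(\mathbf{A}_{ij}\bm{\upmu}_j)^\top=-\partial_{\mathbf{x}_i}v(\mathbf{x}_i)$ holds termwise (its Lemma~6), which then lets the cross terms collapse to $(\partial v(\bm{\upmu}_i)-\partial v(\mathbf{x}_i))(\mathbf{x}_i-\bm{\upmu}_i)$. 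Your global-potential viewpoint bypasses this and is strictly cleaner; it shows the homogeneous convergence actually holds without the star assumption.

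Your final paragraph, attributing the star restriction to the isolation hypothesis of Lemma~\ref{le:2}, is therefore a misdiagnosis of the paper's argument: the paper never verifies isolation in the coordination case and simply says the heterogeneous proof ``is similar'' to the competitive one. Your concern is legitimate (Thieme's dichotomy does assume isolated equilibria), and your cycle-exclusion via strict monotonicity of $\Phi_\beta$ along heteroclinics is the right way to kill alternative~(2); but the reduction ``star $\Rightarrow$ fixed points of a self-map on $\Delta_c$ $\Rightarrow$ generically isolated'' is no stronger a guarantee than one gets on a general graph, so it does not really single out the star topology. If you want to close this gap cleanly, note that the theorem only asserts convergence to the \emph{set} of QRE, and your Lyapunov function already forces the $\omega$-limit set of the limit semiflow $\Theta$ to lie in $\{\dot{\Phi}_\beta=0\}$; invoking the version of Thieme's results that places the $\omega$-$\phi$-limit set inside the chain-recurrent set of $\Theta$ avoids the isolation hypothesis altogether.
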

Note that this theorem applies to all 2-population coordination games, as network games with or without star structure are essentially the same when there are only two vertices.
We also remark that pure or mixed Nash equilibria in coordination network games are complex; as reported in recent works \cite{cai2011minmax,boodaghians2018smoothed,babichenko2021settling},  finding a pure Nash equilibrium is PLS-complete.
Hence, learning in the general case of network coordination is difficult and generally requires some conditions for theoretical analysis \cite{nagarajan2020chaos,palaiopanos2017multiplicative}.

\section{Experiments: Equilibrium Selection in Population Network Games}
In this section, we complement our theory and present an empirical study of SFP in a two-population coordination (stag hunt) game and a five-population zero-sum (asymmetric matching pennies) game.
Importantly, these two games both have multiple Nash equilibria, which naturally raises the problem of equilibrium selection. 

\begin{figure}[tb!]
\begin{minipage}[b]{0.22\textwidth}
\centering
    \begin{tabular}{|c|c|c|}
    \hline
          & $H$  & $S$  \\ \hline
    $H$ & (1, 1) & (2, 0) \\ \hline
    $S$ & (0, 2) & (4, 4) \\ \hline
\end{tabular}
\captionof{table}{Stag Hunt.}
\end{minipage}
\hfill
\begin{minipage}[b]{0.75\textwidth}
\centering
\includegraphics[width=0.95\textwidth]{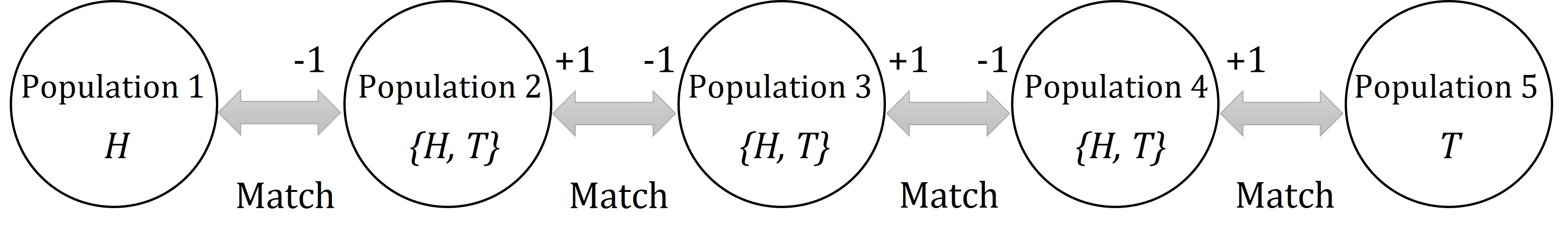}
\caption{Asymmetric Matching Pennies.}
\end{minipage}
\end{figure}

\subsection{Two-Population Stag Hunt Games}
We have shown in Figure \ref{fig:SHgame} (in the introduction) that given the same initial mean belief, changing the variances of initial beliefs can result in  different limit behaviors.
In the following, we systematically study the effect of initial belief heterogeneity by visualizing how it affects the regions of attraction to different equilibria. 

\textbf{Game Description.}
We consider a two-population stag hunt game, where each player in populations $1$ and $2$ has two actions $\{H, S\}$. As shown in the payoff bi-matrices (Table 1), there are two pure strategy Nash equilibria in this game: $(H,H)$ and $(S,S)$. While $(H,H)$ is risk dominant,  $(S,S)$ is indeed more desirable as it is payoff dominant as well as Pareto optimal. 

\textbf{Results.}
In this game, population 1 forms beliefs about population 2 and vice versa. We denote the initial mean beliefs by a pair $(\bar{\mu}_{2H}, \bar{\mu}_{1H})$. 
We numerically solve the mean belief dynamics for a large range of initial mean beliefs, given different variances of initial beliefs. 
In Figure \ref{fig:boa}, for each pair of initial mean beliefs, we color the corresponding data point based on which QRE the system eventually converges to. 
We observe that as the variance of initial beliefs increases (from the left to right panel), a larger range of initial mean beliefs results in the convergence to the QRE that approximates the payoff dominant equilibrium $(S,S)$.
Put differently, a higher degree of initial belief heterogeneity leads to a larger region of attraction to $(S,S)$.
Hence, belief heterogeneity eventually vanishes though, it  provides an approach to equilibrium selection, as it helps select the highly desirable equilibrium.

\begin{figure}[tb!]
    \centering
    \includegraphics[width=\textwidth]{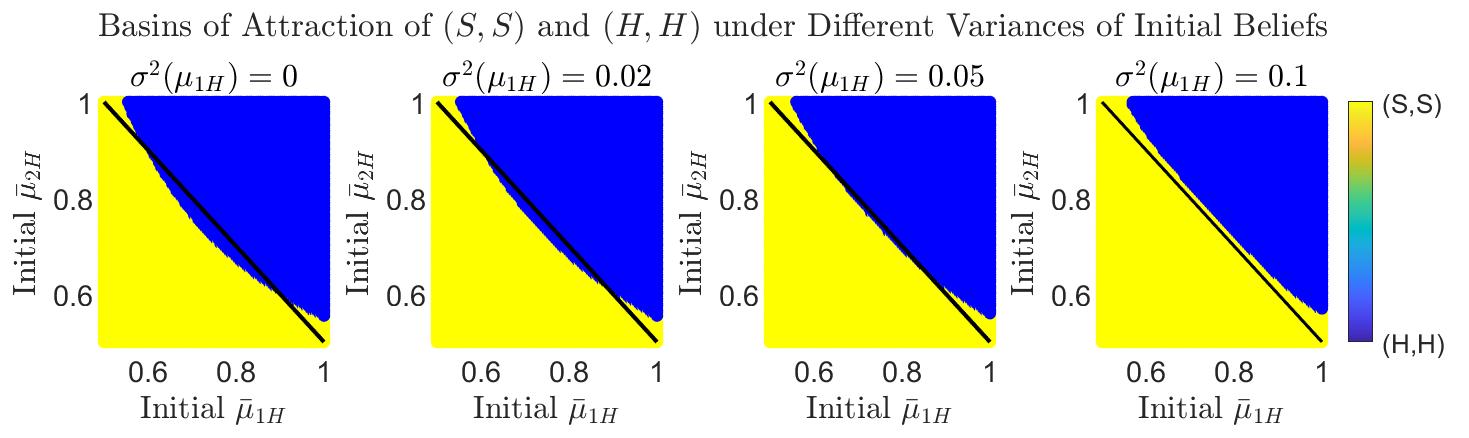}
    \caption{Belief heterogeneity helps select the payoff dominant equilibrium $(S,S)$ (yellow: the equilibrium $(S,S)$, blue: the equilibrium $(H,H)$). As the variance of initial beliefs increases (from the left to right panel), a larger range of initial mean beliefs will approximately reach the equilibrium $(S,S)$ in the limit. For each panel, the initial variances of two populations $\sigma^2(\mu_{1H})$ and $\sigma^2(\mu_{2H})$ are the same.}
    \label{fig:boa}
\end{figure}

\begin{figure}[tb!]
    \centering
    \includegraphics[width=0.95\textwidth]{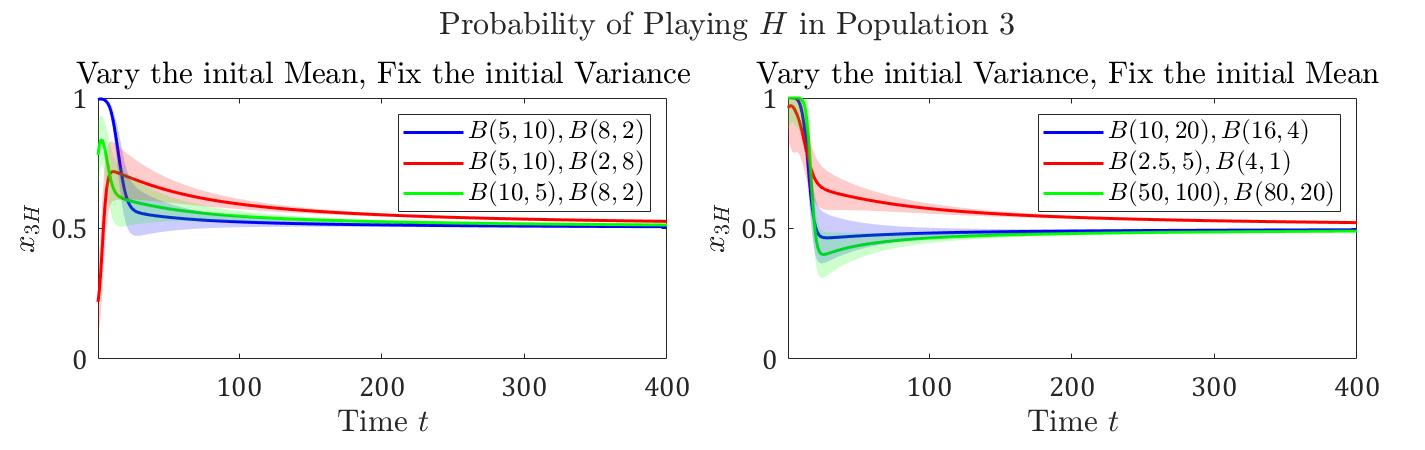}
    \caption{With different belief initialization, SFP selects a unique equilibrium where all agents in population $3$ play strategy $H$ with probability $0.5$. We run 100 simulation runs for each initialization. The thin lines represent the mean mixed strategy (the choice probability of $H$) and the shaded areas represent the variance of the mixed strategies in the population. In the legends, $B$ denotes Beta distribution; the two Beta distributions correspond to the initial beliefs about the neighbor populations $2$ and $4$, respectively.}
    \label{fig:zerosum}
\end{figure}

\subsection{Five-Population Asymmetric Matching Pennies Games}
We have shown in Corollary \ref{cor:zerosum} that SFP converges to a unique QRE even if there are multiple Nash equilibria in a competitive $\Gamma$. 
In the following, we  corroborate this by providing empirical evidence in agent-based simulations with different belief initialization (the details of simulations are summarized in the supplementary). 

\textbf{Game Description.}
Consider a five-population asymmetric matching pennies game \cite{leonardos2021exploration},  where the network structure is a line (depicted in Figure 2). Each agent has two actions $\{H, T\}$.
Agents in populations $1$ and $5$ do not learn; they always play strategies $H$ and $T$, respectively. 
For agents in populations $2$ to $4$, they receive $+1$ if they match the strategy of the opponent in the next population, and receive $-1$ if they mismatch.
On the contrary, they receive $+1$ if they mismatch the strategy of the opponent in the previous population, and receive $-1$ if they match. Hence, this game has infinitely many Nash equilibria of the form: agents in populations $2$ and $4$ play strategy $T$, whereas agents in population $3$ are indifferent between strategies $H$ and $T$. 

\textbf{Results.} 
In this game, agents in each population form two beliefs (one for the previous population and one for the next population). We are mainly interested in the strategies of population $3$, as the Nash equilibria differ in the strategies in population $3$. For validation, we vary population $3$'s beliefs about the neighbor populations $2$ and $4$, and fix population $3$'s beliefs about the other populations.
As shown in Figure \ref{fig:zerosum}, given differential initialization of beliefs, agents in population $3$ converge to the same equilibrium where they all take strategy $H$ with probability $0.5$. 
Therefore, even when the underlying zero-sum game has many Nash equilibria, SFP with different initial belief heterogeneity selects a unique equilibria, addressing the problem of equilibrium selection.

\section{Conclusions}
We study a  heterogeneous beliefs  model of SFP in network games. Representing the system state with a distribution over beliefs, we prove that beliefs eventually become homogeneous in all network games.
We establish the convergence of SFP to Quantal Response Equilibria in general competitive network games as well as coordination network games with star structure. 
We experimentally show that although the initial belief heterogeneity vanishes in the limit, it plays a crucial role in equilibrium selection and helps select highly desirable equilibria.

\section*{Appendix A: Corollaries and Proofs omitted in Section 3}
\subsection*{Proof of Proposition 1}
It follows from Equation 7 in the main paper that the change in $\bm{\upmu}_j^i(k,t)$ between two discrete time steps is 
\begin{equation}
    \bm{\upmu}_j^i(k,t+1)   = \bm{\upmu}_j^i(k,t) + \frac{\bar{\mathbf{x}}_j(t) - \bm{\upmu}_j^i(k,t)} {\lambda + t + 1}.
    \label{eq:muupdate}
\end{equation}
\begin{lemma}
Under Assumption 1 (in the main paper), for an arbitrary agent $k$ in population $i$, its belief $\bm{\upmu}_j^i(k,t)$ about a neighbor population $j$ will never reach the extreme belief (i.e., the boundary of the simplex $\Delta_i$).
\end{lemma}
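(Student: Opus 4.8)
The plan is to exploit the fact that the one-step belief update in Equation~\ref{eq:muupdate} is a \emph{convex combination} of the current belief and the observed mean play $\bar{\mathbf{x}}_j(t)$, with the latter always lying strictly in the interior of the simplex. I would first rewrite the update as
\[
\bm{\upmu}_j^i(k,t+1) = \frac{\lambda+t}{\lambda+t+1}\,\bm{\upmu}_j^i(k,t) + \frac{1}{\lambda+t+1}\,\bar{\mathbf{x}}_j(t),
\]
and record that the two mixing weights are nonnegative (since $\lambda>0$ and $t\ge 0$) and sum to one. Consequently the update maps $\Delta_j$ into itself and preserves the normalization $\sum_{s_j}\mu_{js_j}^i(k,t)=1$, so the iterate never leaves the simplex.

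The crucial step is to show that $\bar{\mathbf{x}}_j(t)$ is strictly interior for every $t$. Each agent $l$ in population $j$ chooses its mixed strategy via the logit rule (Equation~\ref{eq:exploration}); because the perceived payoffs $u_{js_j}(l)=\sum_{m\in V_j}\mathbf{e}_{s_j}^\top\mathbf{A}_{jm}\bm{\upmu}_m^j(l)$ are finite (the payoff matrices are fixed and the beliefs lie in compact simplices), the exponential weights are strictly positive, so $x_{js_j}(l)>0$ for every $s_j\in S_j$. Averaging over population $j$ preserves strict positivity, giving $\bar{x}_{js_j}(t)>0$. Since the perceived payoffs are uniformly bounded over the compact state space, the logit probabilities in fact admit a uniform lower bound $x_{js_j}(l)\ge\delta>0$, whence $\bar{x}_{js_j}(t)\ge\delta$ for all $t$.

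The argument then closes by induction on $t$. Each component $\mu_{js_j}^i(k,t)$ is a probability and hence nonnegative, and the weight $1/(\lambda+t+1)$ multiplying the strictly positive $\bar{x}_{js_j}(t)$ is itself strictly positive, so
\[
\mu_{js_j}^i(k,t+1)\;\ge\;\frac{\bar{x}_{js_j}(t)}{\lambda+t+1}\;\ge\;\frac{\delta}{\lambda+t+1}\;>\;0
\]
for every $s_j\in S_j$. Thus for all $t\ge 1$ every coordinate is strictly positive, i.e.\ the belief stays in the relative interior of $\Delta_j$ and never reaches its boundary; Assumption~1 supplies interiority already at $t=0$, completing the induction.

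The only point requiring care --- and the reason this is genuinely a lemma rather than an immediate observation --- is to confirm that the strict positivity of $\bar{\mathbf{x}}_j(t)$ holds \emph{unconditionally}, without presupposing that the updating agent's own belief is already interior. I would therefore stress that the logit map returns strictly positive probabilities for \emph{any} finite payoff vector, so the mean play driving the update is interior regardless of where the current beliefs sit. This severs any apparent circular dependence between the claim and the very quantity $\bar{\mathbf{x}}_j(t)$ that enforces it.
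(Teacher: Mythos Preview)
Your proof is correct and follows essentially the same approach as the paper's own argument: both rest on the observation that the logit choice function produces strictly interior mixed strategies (since payoffs are bounded on the compact simplex), so $\bar{\mathbf{x}}_j(t)$ is interior for every $t$, and then the convex-combination structure of the update pulls any belief strictly into the interior. The paper's proof is a three-sentence sketch of exactly this; you have supplied the details (the explicit convex-combination rewrite, the uniform lower bound $\delta$, the componentwise inequality, and the remark that the logit map's interiority is unconditional and hence non-circular).
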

\begin{proof}
Assumption 1 ensures that $\bar{\mathbf{x}}_j(0)$ is in the interior of the simplex $\Delta_j$. Moreover, the logit choice function (Equation 5 in the main paper) also ensures that  $\bar{\mathbf{x}}_j(t)$ stays in the interior of $\Delta_j$ afterwards for a finite temperature $\beta$. Hence, from Equation \ref{eq:muupdate}, one can see that  $\bm{\upmu}_j^i(k,t)$ for every time step $t$ will stay in the interior of $\Delta_j$.
\end{proof}
In the following, for notation convenience, we sometimes drop the agent index $k$ and the time index $t$ depending on the context. 
Consider a population $i$. We rewrite the change in the beliefs about this population as follows.
\begin{equation}
    \bm{\upmu}_i(t+1)   = \bm{\upmu}_i(t) + \frac{\bar{\mathbf{x}}_i(t) - \bm{\upmu}_i(t)} {\lambda + t + 1}.
\end{equation}

Suppose that the amount of time that passes between two successive time steps is $\delta\in (0,1]$. We rewrite the above equation as 
\begin{equation}
    \bm{\upmu}_i(t+\delta) = \bm{\upmu}_i(t) + \delta \frac{\bar{\mathbf{x}}_i(t) - \bm{\upmu}_i(t)} {\lambda + t + 1}.
        \label{eq:muupdate1}
\end{equation}
Next, we consider a test function $\theta(\bm{\upmu}_i)$. Define
\begin{equation}
\label{eq:quantity}
Y = \frac{ \mathbb{E}[\theta(\bm{\upmu}_i(t+\delta) )] - \mathbb{E}[\theta(\bm{\upmu}_i(t) )]} {\delta}. 
\end{equation}
Applying Taylor series for $\theta( \bm{\upmu}_i(t+\delta) ) $  at $\bm{\upmu}_i(t)$, we obtain
\begin{align}
 \theta( \bm{\upmu}_i(t+\delta) ) & =  \theta(\bm{\upmu}_i(t))   + \frac{\delta}{\lambda + t + 1} \partial_{\bm{\upmu}_i} \theta (\bm{\upmu}_i)    \left [\bar{\mathbf{x}}_i(t) - \bm{\upmu}_i(t) \right] \notag \\ 
 & \quad + \frac{\delta^2}{2(\lambda + t + 1)^2} \left [\bar{\mathbf{x}}_i(t) - \bm{\upmu}_i(t)  \right ]^\top \mathbf{H} \theta (\bm{\upmu}_i) \left [\bar{\mathbf{x}}_i(t) - \bm{\upmu}_i(t)  \right ]  \notag \\
& \quad +  o\left (\left[\delta \frac{\bar{\mathbf{x}}_i(t) - \bm{\upmu}_i(t)} {\lambda + t + 1}\right]^2 \right) 
 \end{align}
where $\mathbf{H}$ denotes the Hessian matrix.
Hence, the expectation  $\mathbb{E}[\theta(\bm{\upmu}_i(t+\delta) )]$ is
\begin{align}
\mathbb{E}[\theta(\bm{\upmu}_i(t+\delta))] &= \mathbb{E}[\theta(\bm{\upmu}_i(t) )] + \frac{\delta}{\lambda + t + 1} 
\mathbb{E}[\partial_{\bm{\upmu}_i} \theta (\bm{\upmu}_i(t))  (\bar{\mathbf{x}}_i(t) - \bm{\upmu}_i(t)) ] \notag \\
& \quad + \frac{\delta^2}{2 (\lambda + t + 1)^2} \mathbb{E}\left[ [\bar{\mathbf{x}}_i(t) - \bm{\upmu}_i(t)   ]^\top \mathbf{H} \theta (\bm{\upmu}_i) \left [\bar{\mathbf{x}}_i(t) - \bm{\upmu}_i(t)  \right ] \right ] \notag \\
& \quad + \frac{\delta^2}{2 (\lambda + t + 1)^2}  \mathbb{E}[o( [\bar{\mathbf{x}}_i(t) - \bm{\upmu}_i(t)]^2) ] 
\end{align}
Moving the term $\mathbb{E}[\theta(\bm{\upmu}_i(t) )] $ to the left hand side and dividing both sides by $\delta$, we recover the quantity $Y$, i.e., 
\begin{align}
Y & = \frac{1}{\lambda + t + 1} 
\mathbb{E}[  \partial_{\bm{\upmu}_i }\theta (\bm{\upmu}_i(t)) (\bar{\mathbf{x}}_i(t) - \bm{\upmu}_i(t)) ] \notag \\
 & \quad + \frac{\delta}{2 (\lambda + t + 1)^2} \mathbb{E}[ [\bar{\mathbf{x}}_i(t) - \bm{\upmu}_i(t)]^\top \mathbf{H}\theta(\bm{\upmu}_i(t)) [\bar{\mathbf{x}}_i(t) - \bm{\upmu}_i(t) ] + o\left((\bar{\mathbf{x}}_i(t) - \bm{\upmu}_i(t))^2 \right) ]
\end{align}
Taking the limit of $Y$ with $\delta \to 0$, the contribution of the second  term on the right hand side vanishes, yielding
\begin{align}
\lim_{\delta \to 0} Y  & = \frac{1}{\lambda + t + 1} 
\mathbb{E}[ \partial_{\bm{\upmu}_i } \theta (\bm{\upmu}_i(t))(\bar{\mathbf{x}}_i(t) - \bm{\upmu}_i(t))] \\
& = \frac{1}{\lambda + t + 1} 
 \int p(\bm{\upmu}_i(t), t) \left [  \partial_{\bm{\upmu}_i} \theta (\bm{\upmu}_i(t)) (\bar{\mathbf{x}}_i(t) - \bm{\upmu}_i(t)) \right ] d \bm{\upmu}_i(t).
\end{align}
Apply integration by parts. We obtain
\begin{align}
\lim_{\delta \to 0} Y   = 0  - \frac{1}{\lambda + t + 1} 
 \int \theta(\bm{\upmu}_i(t)) \nabla \cdot  \left  [ p(\bm{\upmu}_i(t), t) (\bar{\mathbf{x}}_i(t) - \bm{\upmu}_i(t)) \right ] d \bm{\upmu}_i(t) 
\end{align}
where we have leveraged that the probability mass $ p(\bm{\upmu}_i ,t)$ at the boundary $\partial \Delta_i$ remains zero as a result of Lemma 1. 
On the other hand, according to the definition of $Y$,
\begin{equation}
\label{eq:y1}
\begin{aligned}
\lim_{\delta \to 0} Y = \lim_{\delta \to 0} \int \theta(\bm{\upmu}_i(t)) \frac{ p(\bm{\upmu}_i,t+\delta) - p(\bm{\upmu}_i,t) } {\delta} d\bm{\upmu}_i  
= \int \theta(\bm{\upmu}_i(t)) \partial_t  p(\bm{\upmu}_i,t)   d\bm{\upmu}_i.
\end{aligned}
\end{equation}
Therefore, we have the equality
\begin{align}
  \int \theta(\bm{\upmu}_i(t)) \partial_t  p(\bm{\upmu}_i,t)   d\bm{\upmu}_i = -   \frac{1}{\lambda + t + 1} \int \theta(\bm{\upmu}_i(t)) \nabla \cdot  \left  [ p(\bm{\upmu}_i(t), t) (\bar{\mathbf{x}}_i(t) - \bm{\upmu}_i(t)) \right ] d \bm{\upmu}_i(t).
\end{align}
As $\theta$ is a test function, this leads to 
\begin{align}
\partial_t p(\bm{\upmu}_i,t)   
= -   \frac{1}{\lambda + t + 1}  \nabla \cdot  \left  [ p(\bm{\upmu}_i(t), t) (\bar{\mathbf{x}}_i(t) - \bm{\upmu}_i(t)) \right ].
\end{align}
Rearranging the terms, we obtain Equation 8 in the main paper. By the definition of expectation given a probability distribution, it is straightforward to obtain Equation 9 in the main paper. Q.E.D.

\emph{\textbf{Remarks:} The PDEs we derived are akin to the  continuity equation commonly encountered in physics in the study of conserved quantities.The continuity equation describes the transport phenomena (e.g., of mass or energy) in a physical system. This renders a physical interpretation for our PDE model: under SFP, the belief dynamics of a heterogeneous system is  analogously the transport of the agent mass in the simplex $\Delta = \prod_{i \in V}\Delta_i$.}

\subsection*{Corollaries of Proposition 1}
\begin{corollary}
For any population $i \in V$, the system beliefs about this population never go to extremes.
\label{cor:zeromass}
\end{corollary}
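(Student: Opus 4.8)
The plan is to read Equation \ref{eq:pde} as a continuity (transport) equation and to propagate Assumption \ref{as:initialbelief} forward in time along its characteristics. Concretely, Equation \ref{eq:pde} advects the density $p(\bm{\upmu}_i,t)$ by the time-dependent velocity field $v(\bm{\upmu}_i,t) = \frac{\bar{\mathbf{x}}_i - \bm{\upmu}_i}{\lambda + t + 1}$, whose integral curves are exactly the solutions of the individual belief ODE $\dot{\bm{\upmu}}_i = v(\bm{\upmu}_i,t)$. Writing $\Phi_{t}$ for the flow of this field, the solution of the continuity equation is the pushforward $p(\cdot,t) = (\Phi_t)_{\#}\, p(\cdot,0)$ of the initial density, so the mass placed near $\partial\Delta_i$ at time $t$ is precisely the mass that the flow has carried there from the interior.

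First I would show that the flow preserves the interior, i.e. $\Phi_t(\mathrm{int}\,\Delta_i) \subseteq \mathrm{int}\,\Delta_i$ for every $t \geq 0$. This is exactly the statement proved in the Lemma established above (in the proof of Proposition 1): any trajectory starting in the interior never reaches $\partial\Delta_i$. The mechanism is that $v$ points strictly inward on the boundary --- when a coordinate $\mu_{is_i}$ vanishes, the matching coordinate of $\bar{\mathbf{x}}_i - \bm{\upmu}_i$ equals $\bar{x}_{is_i}$, which is strictly positive because the logit map (Equation \ref{eq:exploration}) keeps $\bar{\mathbf{x}}_i$ in the interior of $\Delta_i$ for any finite $\beta$. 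Hence no characteristic can touch, let alone cross, the boundary, and the flow cannot transport mass onto $\partial\Delta_i$ nor allow it to accumulate there.

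Next I would combine interior preservation with Assumption \ref{as:initialbelief}. By that assumption the initial density is continuously differentiable and assigns zero mass to $\partial\Delta_i$, so the entire initial mass sits in $\mathrm{int}\,\Delta_i$. Since $\Phi_t$ maps the interior into itself, the pushforward $(\Phi_t)_{\#}\,p(\cdot,0)$ assigns zero mass to the boundary for every $t \geq 0$; equivalently $\int_{\partial\Delta_i} p(\bm{\upmu}_i,t)\, d\bm{\upmu}_i = 0$ for all $t$, which is the claim that the system beliefs about population $i$ never go to extremes. Because the inward-pointing property is strict, this also rules out concentration of a singular part on $\partial\Delta_i$, so the regularity of the density is preserved as well.

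The hard part is that the transport is nonlinear and nonlocal: the field $v$ depends on the whole density through $\bar{\mathbf{x}}_i$ (Equation \ref{eq:meanchoice}), so it is not a prescribed field but co-evolves with $p$. To make the characteristic/pushforward representation rigorous one must verify that the coupled system induces a well-defined flow; this follows because, along any solution, $\bar{\mathbf{x}}_i(t)$ is bounded and continuous in $t$ while $v$ is affine in $\bm{\upmu}_i$, so $\bm{\upmu}_i \mapsto v(\bm{\upmu}_i,t)$ is uniformly Lipschitz on compact time intervals and Picard--Lindelöf yields a unique flow. Crucially, the inward-pointing property at the boundary is robust to the exact value of $\bar{\mathbf{x}}_i$ as long as it remains interior, so the interior-preservation step goes through regardless of the coupling.
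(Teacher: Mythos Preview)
Your proposal is correct and takes essentially the same approach as the paper: both rest on the lemma (established in the proof of Proposition~\ref{theorem:pde}) that individual belief trajectories never reach $\partial\Delta_i$, from which the population-level statement follows. The paper's proof is a one-line citation of that lemma, whereas you spell out the connection via the pushforward/characteristic representation of the continuity equation; this extra detail is sound but not required, since the corollary is meant to be an immediate consequence.
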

\begin{proof}
This is a straightforward result of Lemma 1.
\end{proof}

\begin{corollary}
For any population $i \in V$, the total probability mass $p(\bm{\upmu}_i,t)$ always remains conserved.
\label{cor:conserved}
\end{corollary}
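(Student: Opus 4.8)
The plan is to integrate the continuity equation of Proposition \ref{theorem:pde} over the simplex and show that the net flux across the boundary vanishes. Concretely, I would start from Equation \ref{eq:pde},
\[
\frac{\partial p(\bm{\upmu}_i, t)}{\partial t} = -\frac{1}{\lambda + t + 1}\, \nabla \cdot \left[ p(\bm{\upmu}_i, t)\,(\bar{\mathbf{x}}_i - \bm{\upmu}_i) \right],
\]
and integrate both sides with respect to $\bm{\upmu}_i$ over $\Delta_i$. Under Assumption \ref{as:initialbelief} the density is continuously differentiable and the velocity field $\bar{\mathbf{x}}_i - \bm{\upmu}_i$ is smooth and bounded on the compact set $\Delta_i$, so I may interchange the time derivative with the spatial integral. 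This produces $\frac{d}{dt}\int_{\Delta_i} p(\bm{\upmu}_i,t)\, d\bm{\upmu}_i$ on the left-hand side, i.e.\ the time derivative of the total mass that we wish to show is zero.

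For the right-hand side I would apply the divergence theorem to rewrite the volume integral of the divergence as a surface integral of the flux through the boundary $\partial \Delta_i$,
\[
\int_{\Delta_i} \nabla \cdot \left[ p(\bm{\upmu}_i, t)\,(\bar{\mathbf{x}}_i - \bm{\upmu}_i) \right] d\bm{\upmu}_i = \oint_{\partial \Delta_i} p(\bm{\upmu}_i, t)\,(\bar{\mathbf{x}}_i - \bm{\upmu}_i)\cdot \mathbf{n}\, dS,
\]
where $\mathbf{n}$ is the outward unit normal. Invoking Lemma 1 (equivalently Corollary \ref{cor:zeromass}), which guarantees that $p(\bm{\upmu}_i, t)$ remains identically zero on $\partial \Delta_i$ for all $t$, the integrand vanishes and the surface integral is zero. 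Hence $\frac{d}{dt}\int_{\Delta_i} p(\bm{\upmu}_i,t)\, d\bm{\upmu}_i = 0$, so the total mass is constant in time and equals its initial value of $1$, which is precisely the asserted conservation.

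The main obstacle, and the only step requiring care, is the rigorous justification of the divergence theorem on $\Delta_i$, which is a manifold with corners rather than a smooth domain. I would handle this by observing that the simplex is a Lipschitz domain, for which the divergence theorem holds for $C^1$ vector fields, and that the flux integrand is continuous up to the boundary since both $p(\bm{\upmu}_i,t)$ and the velocity field extend continuously to $\partial\Delta_i$. Because $p$ vanishes on the relative interior of every facet as well as on the lower-dimensional faces, the measure-zero set of corners contributes nothing, so the boundary term is unambiguously zero and the argument goes through without further assumptions.
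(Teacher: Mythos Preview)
Your proposal is correct and follows essentially the same approach as the paper: integrate the continuity equation over $\Delta_i$, swap the time derivative inside, and use the zero-boundary-mass property to kill the remaining term. The only cosmetic difference is that you invoke the divergence theorem directly to obtain a vanishing surface integral, whereas the paper expands the divergence via the product rule and then applies integration by parts to exhibit an explicit cancellation; these are the same computation in slightly different clothing.
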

\begin{proof}
Consider the time derivative of the total probability mass 
\begin{align}
    \frac{d}{dt} \int p(\bm{\upmu}_i,t) d\bm{\upmu}_i.
\end{align}
Apply the Leibniz rule to interchange differentiation and integration, 
\begin{align}
    \frac{d}{dt} \int p(\bm{\upmu}_i,t) d\bm{\upmu}_i =  \int   \frac{ \partial p(\bm{\upmu}_i, t)}{ \partial t} d\bm{\upmu}_i. 
\end{align}
Substitute $\frac{ \partial p(\bm{\upmu}_i, t)}{ \partial t}$ with Equation 8 in the main paper,
\begin{align}
   & \frac{d}{dt} \int p(\bm{\upmu}_i,t) d\bm{\upmu}_i  \notag \\
    &  =  - \int   \nabla \cdot \left(  p(\bm{\upmu}_i ,t) \frac{ \mathbf{\bar{x}}_{i} -\bm{\upmu}_{i}}{\lambda + t + 1}  \right )d\bm{\upmu}_i \\
& = - \int  \sum_{s_i\in S_i} \partial_{\mu_{is_i}}\left(  p(\bm{\upmu}_i ,t) \frac{ \bar{x}_{is_i} -\mu_{is_i}}{\lambda + t + 1}  \right )d\bm{\upmu}_i \\
    &=  - \frac{1}{\lambda + t + 1} \left [ \int  \sum_{s_i\in S_i} \partial_{\mu_{is_i}} p(\bm{\upmu}_i ,t) \left(  \bar{x}_{is_i} -\mu_{is_i} \right )d\bm{\upmu}_i  + \int p(\bm{\upmu}_i ,t)  \sum_{s_i\in S_i}  \partial_{\mu_{is_i}} \left(  \bar{x}_{is_i} -\mu_{is_i} \right )  d\bm{\upmu}_i \right ] \label{eq:18}
\end{align}
Apply integration by parts,
\begin{align}
&\int  \sum_{s_i\in S_i} \partial_{\mu_{is_i}} p(\bm{\upmu}_i ,t) \left(  \bar{x}_{is_i} -\mu_{is_i} \right )d\bm{\upmu}_i = 0 -  \int p(\bm{\upmu}_i ,t)  \sum_{s_i\in S_i} \partial_{\mu_{is_i}}\left(  \bar{x}_{is_i} -\mu_{is_i} \right )d\bm{\upmu}_i. 
\end{align}
where we  have leveraged that the probability mass $ p(\bm{\upmu}_i ,t)$ at the boundary $\partial \Delta_i$ remains zero. Hence, the terms within the bracket of Equation \ref{eq:18}  cancel out, and 
\begin{align}
    \frac{d}{dt} \int p(\bm{\upmu}_i,t) d\bm{\upmu}_i = 0.
\end{align}
\end{proof}

\subsection*{Proof of Proposition 2}
\begin{lemma}
The  dynamics of the mean  belief $\bar{\bm{\upmu}}_i$ about each population $i\in V$ is governed by a differential equation
\begin{align}
    \frac{d \bar{\mu}_{is_i} }{d t} =\frac{ \bar{x}_{is_i} -\bar{\mu}_{is_i}} {\lambda +t+1},\qquad \forall s_i \in S_i.
\end{align}
\label{le:meandynamics}
\end{lemma}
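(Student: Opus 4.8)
The plan is to treat the mean belief $\bar\mu_{is_i} = \int \mu_{is_i}\, p(\bm{\upmu}_i,t)\, d\bm{\upmu}_i$ as a first moment of the marginal density and to differentiate it in time using the transport PDE of Proposition 1. This is precisely the test-function computation carried out in the proof of Proposition 1, now specialized to the coordinate map $\theta(\bm{\upmu}_i) = \mu_{is_i}$, so essentially all of the required machinery is already available and only needs to be reassembled.

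First I would differentiate under the integral sign (Leibniz rule) and substitute $\partial_t p$ from Equation 8, obtaining
\[
\frac{d\bar\mu_{is_i}}{dt} = -\int \mu_{is_i}\, \nabla\cdot\left( p(\bm{\upmu}_i,t)\, \frac{\bar{\mathbf{x}}_i - \bm{\upmu}_i}{\lambda+t+1}\right) d\bm{\upmu}_i.
\]
Then I would integrate by parts. The boundary term vanishes because $p(\bm{\upmu}_i,t)$ carries zero mass on $\partial\Delta_i$ (Lemma 1), leaving
\[
\frac{d\bar\mu_{is_i}}{dt} = \frac{1}{\lambda+t+1}\int p(\bm{\upmu}_i,t)\, \partial_{\bm{\upmu}_i}\mu_{is_i} \cdot (\bar{\mathbf{x}}_i - \bm{\upmu}_i)\, d\bm{\upmu}_i.
\]
Since $\partial_{\bm{\upmu}_i}\mu_{is_i} = \mathbf{e}_{s_i}$, the inner product collapses to the single component $\bar{x}_{is_i} - \mu_{is_i}$.

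The final step relies on two observations. The mean mixed strategy $\bar{x}_{is_i}$ is defined (Equation 9) by an integral over the beliefs $\{\bm{\upmu}_j\}_{j\in V_i}$ of the \emph{neighbor} populations, and because there are no self-loops ($i\notin V_i$) it does not depend on the integration variable $\bm{\upmu}_i$; hence it factors out of the $\bm{\upmu}_i$-integral. Combined with mass conservation $\int p(\bm{\upmu}_i,t)\, d\bm{\upmu}_i = 1$ (Corollary \ref{cor:conserved}) and the definition $\int \mu_{is_i}\, p(\bm{\upmu}_i,t)\, d\bm{\upmu}_i = \bar\mu_{is_i}$, the expression reduces to $\tfrac{\bar{x}_{is_i} - \bar\mu_{is_i}}{\lambda+t+1}$, which is the claim.

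The main obstacle is the technical justification that $\bar{x}_{is_i}$ genuinely separates from the $\bm{\upmu}_i$-integration — i.e., that population $i$'s own belief does not enter its neighbors' play — together with the regularity needed to interchange differentiation and integration and to discard the boundary contribution. All of these are supplied by the factorization $p(\bm{\upmu},t) = \prod_{i\in V} p(\bm{\upmu}_i,t)$, the no-self-loop structure of the network, and Assumption \ref{as:initialbelief} propagated through Lemma 1.
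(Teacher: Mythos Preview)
Your proposal is correct and follows essentially the same route as the paper: differentiate the first moment under the integral, substitute the continuity PDE of Proposition~1, integrate by parts using the zero-mass boundary condition, and then extract $\bar{x}_{is_i}$ as a constant in $\bm{\upmu}_i$. If anything, your argument is tidier than the paper's, since you make explicit the reason $\bar{x}_{is_i}$ factors out (no self-loops, so $i\notin V_i$), a point the paper's computation uses silently.
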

\begin{proof}
The time derivative of the mean belief about strategy $s_i$ is 
\begin{align}
  \frac{d  \bar{\mu}_{is_i}  }{d t}  =   \frac{d}{dt} \int \mu_{is_i} p(\bm{\upmu}_i,t) d\bm{\upmu}_i.
\end{align}
We apply the Leibniz rule to interchange differentiation and integration, and then substitute $\frac{ \partial p(\bm{\upmu}_i, t)}{ \partial t}$ with Equation 8 in the main paper.
\begin{align}
    & \frac{d}{dt} \int  \mu_{is_i} p(\bm{\upmu}_i,t) d\bm{\upmu}_i \\
    & =  \int   \mu_{is_i} \frac{ \partial p(\bm{\upmu}_i, t)}{ \partial t} d\bm{\upmu}_i\\ 
    &  = - \int  \mu_{is_i} \nabla \cdot \left(  p(\bm{\upmu}_i ,t) \frac{ \mathbf{\bar{x}}_{i} -\bm{\upmu}_{i}}{\lambda + t + 1}  \right )d\bm{\upmu}_i \\
& = - \int  \mu_{is_i} \sum_{s_i\in S_i} \partial_{\mu_{is_i}}\left(  p(\bm{\upmu}_i ,t) \frac{ \bar{x}_{is_i} -\mu_{is_i}}{\lambda + t + 1}  \right )d\bm{\upmu}_i \\
    &=  \gamma \left [ \int  \mu_{is_i} \sum_{s_i\in S_i} \left(\partial_{\mu_{is_i}} p(\bm{\upmu}_i ,t)\right) \left(  \bar{x}_{is_i} -\mu_{is_i} \right )d\bm{\upmu}_i  +  \int \mu_{is_i} p(\bm{\upmu}_i ,t)  \sum_{s_i\in S_i}  \partial_{\mu_{is_i}} \left(  \bar{x}_{is_i} -\mu_{is_i} \right )  d\bm{\upmu}_i \right ]  \label{eq:28}
\end{align}
where $\gamma \coloneqq - \frac{1}{\lambda + t + 1}$.
Apply integration by parts to the first term in Equation \ref{eq:28}.
\begin{align}
&\int  \mu_{is_i} \sum_{s_i\in S_i} \left ( \partial_{\mu_{is_i}} p(\bm{\upmu}_i ,t) \right ) \left(  \bar{x}_{is_i} -\mu_{is_i} \right )d\bm{\upmu}_i \notag \\
& = -  \int \mu_{is_i}  p(\bm{\upmu}_i ,t)  \left [\sum_{s_i'\in S_i}  \partial_{\mu_{is_i'}}( \bar{x}_{is_i'} -\mu_{is_i'}) \right ] + p(\bm{\upmu}_i ,t)  \partial_{\mu_{is_i}}\left[\mu_{is_i} ( \bar{x}_{is_i} -\mu_{is_i})  \right ] d\bm{\upmu}_i
\end{align}
where we have leveraged that the probability mass at the boundary remains zero.
Hence, it follows from Equation \ref{eq:28} that 
\begin{align}
    & \frac{d}{dt} \int  \mu_{is_i} p(\bm{\upmu}_i,t) d\bm{\upmu}_i \\
  & = - \gamma \int \mu_{is_i}  p(\bm{\upmu}_i ,t)   \sum_{s_i'\in S_i}  \partial_{\mu_{is_i'}}( \bar{x}_{is_i'} -\mu_{is_i'})  d\bm{\upmu}_i  - \gamma \int  p(\bm{\upmu}_i ,t)  \partial_{\mu_{is_i}}\left[\mu_{is_i} ( \bar{x}_{is_i} -\mu_{is_i})  \right ] d\bm{\upmu}_i \notag \\
  & \qquad + \gamma \int \mu_{is_i} p(\bm{\upmu}_i ,t)  \sum_{s_i\in S_i}  \partial_{\mu_{is_i}} \left(  \bar{x}_{is_i} -\mu_{is_i} \right )  d\bm{\upmu}_i \\
  & =  \gamma   \int  p(\bm{\upmu}_i ,t)  \left [ \mu_{is_i}  \partial_{\mu_{is_i}} \left(  \bar{x}_{is_i} -\mu_{is_i} \right ) -  \partial_{\mu_{is_i}}\left[\mu_{is_i} ( \bar{x}_{is_i} -\mu_{is_i})  \right ]  \right ] d\bm{\upmu}_i \\
& = \gamma   \int  p(\bm{\upmu}_i ,t) \mu_{is_i}   d\bm{\upmu}_i  - \int  p(\bm{\upmu}_i ,t) \bar{x}_{is_i}   d\bm{\upmu}_i  \\
    & =  \frac{\bar{x}_{is_i}  - \bar{\mu}_{is_i}  }{\lambda + t + 1} \label{eq:mean}
\end{align}
\end{proof}
We repeat the mean probability $\bar{x}_{is_i}$, which has been given in Equation 9 in the main paper, as follows:
\begin{align}
    \bar{x}_{is_i} & = \int \frac{\exp{(\beta u_{is_i} )}}{\sum_{s_i'\in S_i} \exp{(\beta u_{is_i'} )} } \prod_{j\in V_i} p(\bm{\upmu}_{j}, t) \left(\prod_{j\in V_i} d \bm{\upmu}_{j}\right )
    \label{eq:meanchoice}
\end{align}
where $u_{is_i} = \sum_{j\in V_i} \mathbf{e}_{s_i}^\top \mathbf{A}_{ij} \bm{\upmu}_j$.
Define $\bar{\bm{\upmu}} \coloneqq \{\bar{\bm{\upmu}}_j \}_{j \in V_i}$ and 
\begin{align}
    f_{s_i}(\{\bm{\upmu}_j\}_{j \in V_i}) \coloneqq \frac{ \exp{(\beta \sum_{j\in V_i} \mathbf{e}_{s_i}^\top \mathbf{A}_{ij} \bm{\upmu}_j )} }{\sum_{s_i'\in S_i} \exp{(\beta \sum_{j\in V_i} \mathbf{e}_{s_i'}^\top \mathbf{A}_{ij} \bm{\upmu}_j )} }. 
\end{align}
 Applying the Taylor expansion to approximate this function at the mean belief $\bar{\bm{\upmu}}$, we have
 \begin{align}
      f_{s_i}(\{\bm{\upmu}_j\}_{j \in V_i}) \approx f_{s_i}(\bar{\bm{\upmu}}) + \nabla f_{s_i}(\bar{\bm{\upmu}}) \cdot (\bm{\upmu}- \bar{\bm{\upmu}})  +\frac{1}{2} (\bm{\upmu} - \bar{\bm{\upmu}})^\top \mathbf{H}f_{s_i}(\bar{\bm{\upmu}}) (\bm{\upmu} - \bar{\bm{\upmu}}) + O(||\bm{\upmu} - \bar{\bm{\upmu}} ||^3)
 \end{align}
 where $\mathbf{H}$ denotes the Hessian matrix.
 Hence,  we can rewrite Equation \ref{eq:meanchoice} as
\begin{align}
     \bar{x}_{is_i} &= \int f_{s_i}(\{\bm{\upmu}_j\}_{j \in V_i}) \prod_{j\in V_i} p(\bm{\upmu}_{j}, t) \left(\prod_{j\in V_i} d \bm{\upmu}_{j}\right )\\
     &\approx f_{s_i}(\bar{\bm{\upmu}}) + 
     \int \nabla f_{s_i}(\bar{\bm{\upmu}})\cdot \bm{\upmu}   \prod_{j\in V_i} p(\bm{\upmu}_{j}, t) \left(\prod_{j\in V_i} d \bm{\upmu}_{j}\right )   - \nabla f_{s_i} (\bar{\bm{\upmu}}) \cdot \bar{\bm{\upmu}}    \notag  \\
     & \qquad + \int \frac{1}{2} (\bm{\upmu} - \bar{\bm{\upmu}})^\top \mathbf{H} f_{s_i}(\bar{\bm{\upmu}}) (\bm{\upmu} - \bar{\bm{\upmu}})   \prod_{j\in V_i} p(\bm{\upmu}_{j}, t) \left(\prod_{j\in V_i} d \bm{\upmu}_{j}\right ) \notag \\
     & \qquad + \int O(||\bm{\upmu} - \bar{\bm{\upmu}}||)^3  \prod_{j\in V_i} p(\bm{\upmu}_{j}, t) \left(\prod_{j\in V_i} d \bm{\upmu}_{j}\right ) \label{eq:taylor}
\end{align}
Observe that in Equation \ref{eq:taylor}, the second and the third term can be canceled out. Moreover, for any two neighbor populations $j, k\in V_i$, the beliefs $\bm{\upmu}_j, \bm{\upmu}_k$ about these two populations are separate and independent. Hence, the covariance of these beliefs are zero. We apply the moment closure approximation \cite{matis2010achieving,gillespie2009moment} with the second order and obtain 
\begin{align}
    \bar{x}_{is_i} \approx f_{s_i}(\bar{\bm{\upmu}}) + \frac{1}{2} \sum_{j\in V_i} \sum_{s_j \in S_j} \frac{\partial^2 f_{s_i}(\bar{\bm{\upmu}})}{(\partial \mu_{j s_j})^2} \text{Var}(\mu_{j s_j}).
\end{align}
Hence, substituting $\bar{x}_{is_i}$ in Lemma \ref{le:meandynamics} with the above approximation, we have the mean belief dynamics 
\begin{align}
    \frac{d \bar{\mu}_{i s_i}}{d t} & 
       \approx \frac{f_{s_i}(\bar{\bm{\upmu}} ) - \bar{\mu}_{i s_i} }{\lambda + t + 1} + \frac { \sum_{j\in V_i} \sum_{s_j \in S_j} \frac{\partial^2 f_{s_i}(\bar{\bm{\upmu}} )}{(\partial \mu_{j s_j})^2} \text{Var}(\mu_{j s_j}) }{2(\lambda + t + 1)}.
       \label{eq:meanapprox}
\end{align}
Q.E.D.

\emph{\textbf{Remarks:} the use of the moment closure approximation (considering only the first and the second moments) is for obtaining more conclusive results. Strictly speaking,  the mean belief dynamics also depend on the third and  higher moments. However, we observe in the experiments that these moments in general have little effects on the mean belief dynamics. To be more specific, given the same initial mean beliefs, while the  variance of initial beliefs sometimes can change the limit behaviors of a system, we do not observe similar phenomena for the third and higher moments.   }

\subsection*{Proof of Proposition 3}
Consider a population $i$. 
It follows from Equation 7 in the main paper that the change in the beliefs about this population can be written as follows.
\begin{equation}
    \bm{\upmu}_i(t+1)   = \bm{\upmu}_i(t) + \frac{\mathbf{x}_i(t) - \bm{\upmu}_i(t)} {\lambda + t + 1}.
\end{equation}
Suppose that the amount of time that passes between two successive time steps is $\delta\in (0,1]$. We rewrite the above equation as 
\begin{equation}
    \bm{\upmu}_i(t+\delta) = \bm{\upmu}_i(t) + \delta \frac{\mathbf{x}_i(t) - \bm{\upmu}_i(t)} {\lambda + t + 1}.
\end{equation}
Move the term $\bm{\upmu}_i(t)$ to the right hand side and divide both sides by $\delta$,  
\begin{equation}
\frac{  \bm{\upmu}_i(t+\delta) - \bm{\upmu}_i(t)} {\delta } =   \frac{\mathbf{x}_i(t) - \bm{\upmu}_i(t)} {\lambda + t + 1}.
\end{equation}
Assume that the amount of time $\delta$ between two successive time steps goes to zero. we have
\begin{equation}
  \frac{d \bm{\upmu}_i}{ d t} = \lim_{\delta \to 0} \frac{  \bm{\upmu}_i(t+\delta) - \bm{\upmu}_i(t)} {\delta } =   \frac{\mathbf{x}_i(t) - \bm{\upmu}_i(t)} {\lambda + t + 1}.
\end{equation}
Note that for continuous-time dynamics, we usually drop the time index in the bracket, yielding the belief dynamics (Equation 11) in Proposition 3. Q.E.D.

\subsection*{Proof of Theorem 1}
Without loss of generality, we consider the variance of the belief $\mu_{i s_i}$ about strategy $s_i$ of population $i$. Note that 
\begin{align}
    \text{Var}(\mu_{i s_i}) = \mathbb{E}[(\mu_{i s_i})^2] - (\bar{\mu}_{i s_i})^2.
\end{align}
Hence, we have 
 \begin{equation}
     \frac{d \text{Var}(\mu_{i s_i})}{dt} = \frac{d \mathbb{E}[(\mu_{i s_i})^2] }{d t} - 2 \bar{\mu}_{i s_i} \frac{d \bar{\mu}_{i s_i}}{dt}.
    \label{eq:variance}
 \end{equation}
Consider the first term on the right hand side.
 We apply the Leibniz rule to interchange differentiation and integration, and then substitute $\frac{ \partial p(\bm{\upmu}_i, t)}{ \partial t}$ with Equation 8 in the main paper.
\begin{align}
    &\frac{d \mathbb{E}[(\mu_{i s_i})^2] }{d t} \notag \\
    &= \int  (\mu_{is_i})^2  \frac{\partial p(\bm{\upmu}_i,t)} {\partial t} d\bm{\upmu}_i \\
    &=  - \int  (\mu_{is_i})^2    \nabla \cdot \left(  p(\bm{\upmu}_i ,t) \frac{ \mathbf{\bar{x}}_{i} -\bm{\upmu}_{i}}{\lambda + t + 1}  \right )d\bm{\upmu}_i \\
    &= - \int   (\mu_{is_i})^2  \sum_{s_i\in S_i} \partial_{\mu_{is_i}}\left(  p(\bm{\upmu}_i ,t) \frac{ \bar{x}_{is_i} -\mu_{is_i}}{\lambda + t + 1}  \right )d\bm{\upmu}_i \\
    &=  \gamma \int   (\mu_{is_i})^2 \sum_{s_i\in S_i} \partial_{\mu_{is_i}} p(\bm{\upmu}_i ,t) \left(  \bar{x}_{is_i} -\mu_{is_i} \right )d\bm{\upmu}_i   + \gamma   \int  (\mu_{is_i})^2  p(\bm{\upmu}_i ,t)  \sum_{s_i\in S_i}  \partial_{\mu_{is_i}} \left(  \bar{x}_{is_i} -\mu_{is_i} \right )  d\bm{\upmu}_i  \label{eq:46}
\end{align}
where $\gamma \coloneqq - \frac{1}{\lambda + t + 1}$.
Applying integration by parts to the first term in Equation \ref{eq:46} yields 
\begin{align}
&\int  (\mu_{is_i})^2\sum_{s_i\in S_i} \partial_{\mu_{is_i}} p(\bm{\upmu}_i ,t) \left(  \bar{x}_{is_i} -\mu_{is_i} \right ) d\bm{\upmu}_i \notag \\
& = -  \int (\mu_{is_i})^2  p(\bm{\upmu}_i ,t)  \left [\sum_{s_i'\in S_i}  \partial_{\mu_{is_i'}}( \bar{x}_{is_i'} -\mu_{is_i'}) \right ] + p(\bm{\upmu}_i ,t)  \partial_{\mu_{is_i}}\left[(\mu_{is_i})^2 ( \bar{x}_{is_i} -\mu_{is_i})  \right ] d\bm{\upmu}_i
\end{align}
where  we have leveraged that the probability mass at the boundary remains zero. 
Combining the above two equations, we obtain 
\begin{align}
    &\frac{d \mathbb{E}[(\mu_{i s_i})^2] }{d t} \notag \\
    & = - \gamma    \int (\mu_{is_i})^2  p(\bm{\upmu}_i ,t)  \left [\sum_{s_i'\in S_i}  \partial_{\mu_{is_i'}}( \bar{x}_{is_i'} -\mu_{is_i'}) \right ] + p(\bm{\upmu}_i ,t)  \partial_{\mu_{is_i}}\left[(\mu_{is_i})^2 ( \bar{x}_{is_i} -\mu_{is_i})  \right ] d\bm{\upmu}_i \notag  \\
    & \qquad  + \gamma   \int  (\mu_{is_i})^2  p(\bm{\upmu}_i ,t)  \sum_{s_i\in S_i}  \partial_{\mu_{is_i}} \left(  \bar{x}_{is_i} -\mu_{is_i} \right )  d\bm{\upmu}_i \\
    &= \gamma \int \left [ - p(\bm{\upmu}_i ,t)  \partial_{\mu_{is_i}}\left[(\mu_{is_i})^2 ( \bar{x}_{is_i} -\mu_{is_i})  \right ]  \right ] +  (\mu_{is_i})^2  p(\bm{\upmu}_i ,t)   \partial_{\mu_{is_i}} \left(  \bar{x}_{is_i} -\mu_{is_i} \right )  d\bm{\upmu}_i  \\
    & = \gamma \int 2 (\mu_{is_i})^2  p(\bm{\upmu}_i ,t)   d\bm{\upmu}_i  - \gamma \int 2\bar{x}_{is_i} \mu_{is_i}  p(\bm{\upmu}_i ,t)   d\bm{\upmu}_i \\
    & =  - \frac{  2\mathbb{E}[(\mu_{is_i})^2] -2 \bar{x}_{is_i} \bar{\mu}_{is_i} }{\lambda + t + 1} \label{eq:55}.
\end{align}
Next, we consider the second term in Equation \ref{eq:variance}. By Lemma \ref{le:meandynamics}, we have 
\begin{align}
    2 \bar{\mu}_{i s_i} \frac{d \bar{\mu}_{i s_i}}{dt} = \frac {2 \bar{\mu}_{i s_i} (\bar{x}_{is_i} - \bar{\mu}_{is_i}) } {\lambda + t + 1 } \label{eq:56}.
\end{align}
Combining Equations \ref{eq:55} and \ref{eq:56}, the dynamics of the variance is
\begin{align}
     \frac{d \text{Var}(\mu_{i s_i})}{dt} & =  - \frac{  2\mathbb{E}[(\mu_{is_i})^2] -2 \bar{x}_{is_i} \bar{\mu}_{is_i} }{\lambda + t + 1} - \frac {2 \bar{\mu}_{i s_i} (\bar{x}_{is_i} - \bar{\mu}_{is_i}) } {\lambda + t + 1 } \\
     & = \frac{ 2 (\bar{\mu}_{is_i})^2 - 2\mathbb{E}[(\mu_{is_i})^2]}{\lambda + t + 1 } \\
      & = - \frac{ 2 \text{Var}(\mu_{i s_i})}{\lambda + t + 1 }.
\end{align}
Q.E.D.

\emph{\textcolor{black}{\textbf{Remarks:}
 We believe that the rationale behind such a phenomenon is twofold: 1) agents apply smooth fictitious play, and 2) agents respond to the mean strategy play of other populations rather than the strategy play of some fixed agents. Regarding the former, we notice that under a similar setting, population homogenization may not occur if agents apply other learning methods, e.g., Q-learning and Cross learning. Regarding the latter, imagine that agents adjust their beliefs in response to the strategies of some fixed agents. For example, consider two populations; one contains agents A and C, and the other one contains agents B and D. Suppose that agents A and B form a fixed pair such that they adjust their beliefs only in response to each other; the same applies to agents C and D. Belief homogenization may not happen. }}

\section*{Appendix B: Proofs omitted in Section 4.1}
\subsection*{Proof of Theorem 2}
Belief homogenization implies that the fixed points of systems with initially heterogeneous beliefs are the same as in systems with homogeneous beliefs. Thus, we focus on homogeneous systems to analyze the fixed points.
It is straightforward to see that 
\begin{align}
     \frac{d \bm{\upmu}_i}{ d t}  =   \frac{\mathbf{x}_i - \bm{\upmu}_i} {\lambda + t + 1} = 0 \implies \mathbf{x}_i = \bm{\upmu}_i.
\end{align}
 Denote the fixed points of the system dynamics, which satisfies the above equation, by $(\mathbf{x}_i^\ast,  \bm{\upmu}_i^\ast)$ for each population $i$.
 By the logit choice function (Equation 5 in the main paper), we have
 \begin{align}
     x_{is_i}^\ast =  \frac{\exp{(\beta u_{is_i} )}}{\sum_{s_i'\in S_i} \exp{(\beta u_{is_i'} )} } = \frac{ \exp{(\beta \sum_{j\in V_i} \mathbf{e}_{s_i}^\top \mathbf{A}_{ij} \bm{\upmu}_j^\ast )} }{\sum_{s_i'\in S_i} \exp{(\beta \sum_{j\in V_i} \mathbf{e}_{s_i'}^\top \mathbf{A}_{ij} \bm{\upmu}_j^\ast )} }.
 \end{align}
Leveraging that $\mathbf{x}_i^\ast = \bm{\upmu}_i^\ast, \forall i \in V$ at the fixed points, we can replace $\bm{\upmu}_j^\ast $ with $\mathbf{x}_j^\ast$. Q.E.D.

\subsection*{Proof of Theorem 3}
Consider a population $i$. 
The set of neighbor populations is $V_i$, the set of beliefs about the neighbor populations is  $\{\bm{\upmu}_{j}\}_{j\in V_i}$, and the choice distribution is $\mathbf{x}_i$. 
Given a population network game $\Gamma$, the expected payoff  is given by $\mathbf{x}_i^\top \sum_{(i,j) \in E} A_{ij} \bm{\upmu}_{j}$. 
Define a perturbed payoff function 
\begin{align}
     \pi_i\left( \mathbf{x}_i , \{\bm{\upmu}_{j}\}_{j\in V_i} \right ) \coloneqq \mathbf{x}_i^\top \sum_{j \in V_i} A_{ij} \bm{\upmu}_{j}  +  v( \mathbf{x}_i)
\end{align}
where $v( \mathbf{x}_i) =  -\frac{1}{\beta}\sum_{s_i \in S_i} x_{is_i}\ln (x_{is_i})$. Under this form of $v(\mathbf{x}_i)$, the maximization of $\pi_i$  yields the choice distribution $\mathbf{x}_i$ from the logit choice function \cite{fudenberg1998theory}. 
Based on this, we establish the following lemma.
\begin{lemma}
\label{le:perturbpayoff}
For a choice distribution $\mathbf{x}_i$ of SFP in a population network game, 
\begin{equation}
    \partial_{\mathbf{x}_i} \pi_i\left( \mathbf{x}_i , \{\bm{\upmu}_{j}\}_{j\in V_i} \right ) = \mathbf{0} \quad \text{and} \quad \sum_{j \in V_i} \left( A_{ij} \bm{\upmu}_{j} \right)^\top = -\partial_{\mathbf{x}_i} v( \mathbf{x}_i).
\end{equation}
\begin{proof}
This lemma immediately follows from the fact that the maximization of $\pi_i$ will yield the choice distribution $\mathbf{x}_i$ from the logit choice function \cite{fudenberg1998theory}.
\end{proof}
\end{lemma}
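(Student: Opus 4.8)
The plan is to recognize the SFP choice distribution $\mathbf{x}_i$ as the unique maximizer of the entropy-regularized payoff $\pi_i$ over the simplex $\Delta_i$, and then read the two claimed identities off the first-order optimality conditions. First I would fix the beliefs $\{\bm{\upmu}_{j}\}_{j\in V_i}$, collect the expected payoffs into a vector $u_i$ with components $u_{is_i} = \sum_{j\in V_i}\mathbf{e}_{s_i}^\top \mathbf{A}_{ij}\bm{\upmu}_j$, and write $\pi_i(\mathbf{x}_i) = \mathbf{x}_i^\top u_i - \frac{1}{\beta}\sum_{s_i\in S_i} x_{is_i}\ln x_{is_i}$. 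Since the entropy term $-\sum_{s_i} x_{is_i}\ln x_{is_i}$ is strictly concave and the payoff term is affine, $\pi_i$ is strictly concave on $\Delta_i$; its maximizer is therefore unique and lies in the interior, because the entropy gradient diverges at the boundary and forces every coordinate to be positive.

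Next I would introduce a Lagrange multiplier $\eta$ for the normalization constraint $\sum_{s_i} x_{is_i}=1$ and impose stationarity $\partial_{x_{is_i}}\!\left[\pi_i + \eta\bigl(1 - \sum_{s_i}x_{is_i}\bigr)\right] = 0$, which gives $u_{is_i} - \frac{1}{\beta}(\ln x_{is_i}+1) - \eta = 0$ for every $s_i$. Solving for $x_{is_i}$ and eliminating $\eta$ through $\sum_{s_i} x_{is_i}=1$ recovers exactly the logit choice function of Equation \ref{eq:exploration}, confirming that the SFP choice distribution is precisely this maximizer. This is the routine part, and I would keep it brief, citing \cite{fudenberg1998theory} for the standard variational characterization of the logit map.

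The substantive step is interpreting the stationarity condition correctly. At the maximizer the full gradient $\partial_{\mathbf{x}_i}\pi_i = u_i + \partial_{\mathbf{x}_i} v(\mathbf{x}_i)$ is not literally the zero vector; rather, each of its coordinates equals the common constant $\eta$, so the gradient is parallel to the simplex normal $\mathbf{1}$. Since every feasible direction inside $\Delta_i$ sums to zero and is hence orthogonal to $\mathbf{1}$, the gradient restricted to the tangent space of the simplex vanishes, and it is in this projected sense that $\partial_{\mathbf{x}_i}\pi_i = \mathbf{0}$. Rearranging the same first-order condition yields $\sum_{j\in V_i}(\mathbf{A}_{ij}\bm{\upmu}_j)^\top = -\partial_{\mathbf{x}_i} v(\mathbf{x}_i)$, again up to the additive constant vector $\eta\mathbf{1}$ carried by the multiplier, which is the second claimed identity.

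I expect the main obstacle to be exactly this bookkeeping of the normalization multiplier rather than any hard analysis: the literal gradient of $\pi_i$ never vanishes on all of $\mathbb{R}^{|S_i|}$, so both equalities must be read as holding within the affine hull of $\Delta_i$ (equivalently, modulo the all-ones direction absorbed by $\eta$). Stating that convention explicitly, together with strict concavity to guarantee the stationary point is the global maximizer, is what makes the one-line appeal to the logit variational principle rigorous.
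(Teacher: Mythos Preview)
Your proposal is correct and follows the same approach as the paper, which simply cites the variational characterization of the logit map from \cite{fudenberg1998theory}; you have merely unpacked that one-line appeal into the explicit Lagrangian computation. Your careful remark that both identities hold only modulo the constant vector $\eta\mathbf{1}$ (equivalently, on the tangent space of $\Delta_i$) is a useful clarification the paper leaves implicit, and it is exactly what is needed downstream since these gradients are always paired with vectors like $\dot{\mathbf{x}}_i$ or $\mathbf{x}_i-\bm{\upmu}_i$ whose coordinates sum to zero.
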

The belief dynamics of a homogeneous populations can be simplified after time-reparameterization.
\begin{lemma}
Given $\tau  = \ln\frac{\lambda + t + 1 }{\lambda + 1}$,
the belief dynamics of homogeneous systems (given in Equation 11 in the main paper) is equivalent to
\begin{align}
          \frac{d \bm{\upmu}_i }{d \tau} = \mathbf{x}_{i} - \bm{\upmu}_i.
\end{align}
\end{lemma}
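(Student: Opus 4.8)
The plan is to establish the equivalence directly by the chain rule, viewing $\tau = \ln\frac{\lambda + t + 1}{\lambda + 1}$ as a change of the time variable rather than a change of the state. First I would note that $\tau$ is a smooth, strictly increasing function of $t$ on $[0,\infty)$, with $\tau(0)=0$ and $\tau\to\infty$ as $t\to\infty$; hence $t\mapsto\tau$ is a bijection of $[0,\infty)$ onto itself. This guarantees that the reparameterization is legitimate --- it rescales the speed at which each orbit is traversed but leaves the orbits themselves, together with their equilibria and $\omega$-limit sets, unchanged --- so that ``equivalent'' can be understood in the strong sense that the two systems trace identical trajectories in $\Delta_i$.

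The core of the argument is a one-line computation. Differentiating the reparameterization gives $\frac{d\tau}{dt} = \frac{1}{\lambda + t + 1}$, which is precisely the nonautonomous coefficient appearing in the homogeneous belief dynamics (Equation \ref{eq:meanbeliefhomo}). Regarding $\bm{\upmu}_i$ as a function of $\tau$ and applying the chain rule, $\frac{d\bm{\upmu}_i}{d\tau} = \frac{d\bm{\upmu}_i}{dt}\Big/\frac{d\tau}{dt} = (\lambda + t + 1)\frac{d\bm{\upmu}_i}{dt}$. Substituting Equation \ref{eq:meanbeliefhomo}, namely $\frac{d\bm{\upmu}_i}{dt} = \frac{\mathbf{x}_i - \bm{\upmu}_i}{\lambda + t + 1}$, the factor $\lambda + t + 1$ cancels exactly and yields $\frac{d\bm{\upmu}_i}{d\tau} = \mathbf{x}_i - \bm{\upmu}_i$, as claimed. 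The logarithmic choice of $\tau$ is tailored so that $d\tau/dt$ matches the decaying step size $\tfrac{1}{\lambda+t+1}$, which is exactly why the explicit time dependence is removed.

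Since the calculation is elementary, there is no substantive technical obstacle here; the only point deserving (brief) care is the justification that the change of variable is invertible, so that $\mathbf{x}_i$ and $\bm{\upmu}_i$ may be viewed unambiguously as functions of $\tau$ and so that the resulting autonomous system is genuinely equivalent to the original for the subsequent analysis. I would conclude by emphasizing the purpose of the lemma: the reparameterized system is \emph{autonomous}, which is precisely what permits the strict Lyapunov function $L$ in the proof of Theorem \ref{theorem:convergeHomo} to certify convergence to the unique QRE, whereas the explicit $\tfrac{1}{\lambda+t+1}$ factor would otherwise obstruct a clean Lyapunov argument.
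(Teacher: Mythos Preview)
Your proposal is correct and follows essentially the same approach as the paper: both apply the chain rule under the reparameterization $\tau = \ln\frac{\lambda+t+1}{\lambda+1}$ and observe that the factor $\lambda+t+1$ cancels. The only cosmetic difference is that the paper inverts to write $t=(\lambda+1)(e^\tau-1)$ and computes $dt/d\tau$, whereas you compute $d\tau/dt$ directly and divide; the substance is identical.
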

\begin{proof}
From $\tau  = \ln\frac{\lambda + t + 1 }{\lambda + 1}$, we have
\begin{align}
    t = (\lambda + 1)(\exp{(\tau)} - 1).
\end{align}
By the chain rule, for each dimension $s_i$, 
\begin{align}
    \frac{d \mu_{i s_i}}{ d \tau} & =      \frac{d \mu_{i s_i}}{ d t} \frac{dt}{d \tau}   \\
& = \frac{x_{i s_i} - \mu_{i s_i}}{\lambda + t + 1} \frac{d \left((\lambda + 1)(\exp{(\tau)} - 1) \right )}{d \tau}\\
& = \frac{x_{i s_i} - \mu_{i s_i}}{\lambda + (\lambda + 1)(\exp{(\tau)} - 1) + 1} (\lambda + 1) \exp{(\tau)}\\
& = x_{i s_i} - \mu_{i s_i}.
\end{align}
\end{proof}
Next, we define the Lyapunov function $L$ as 
\begin{equation}
    L \coloneqq \sum_{ i \in V} \omega_i L_i \quad \text{s.t.} \quad  L_i \coloneqq  \pi_i\left( \mathbf{x}_i , \{\bm{\upmu}_{j}\}_{j\in V_i} \right ) -   \pi_i\left( \bm{\upmu}_{i} , \{\bm{\upmu}_{j}\}_{j\in V_i} \right ).
\end{equation}
where $\{\omega_i\}_{i\in V}$ is the set of positive weights defined in the weighted zero-sum $\Gamma$.
The function $L$ is non-negative because for every $i\in V$, $\mathbf{x}_i$ maximizes the function $\pi_i$. When for every $i \in V$, $\mathbf{x}_i =  \bm{\upmu}_{i}$, the function $L$ reaches the minimum value $0$.

Rewrite $L$ as
\begin{equation}
    L = \sum_{i \in V} \left [ \omega_i  \pi_i\left( \mathbf{x}_i , \{\bm{\upmu}_{j}\}_{j\in V_i} \right ) - \omega_i \bm{\upmu}_i^\top \sum_{j\in V_i} A_{ij} \bm{\upmu}_j - \omega_i v(\bm{\upmu}_i) \right ].
\end{equation}
We observe that $\pi_i\left( \mathbf{x}_i , \{\bm{\upmu}_{j}\}_{j\in V_i} \right ) $ is convex in $\bm{\upmu}_{j}, j\in V_i$  by Danskin's theorem, and $- v(\bm{\upmu}_i)$ is strictly convex in $\bm{\upmu}_i$.
Moreover,  by the weighted zero-sum property given in Equation 2 in the main  paper, we have
\begin{align}
    \sum_{i \in V} \left(\omega_i \bm{\upmu}_i^\top \sum_{j \in V_i} A_{ij} \bm{\upmu}_j \right) = 0
\end{align}
since $\mu_i \in \Delta_i, \mu_j \in \Delta_j$ for every $i,j \in V.$
Therefore, the function $L$ is a strictly convex function and attains its minimum value $0$ at a unique point  $\mathbf{x}_i =  \bm{\upmu}_{i}$, $\forall i \in V.$ 

Consider the function $L_i$. Its time derivative is
\begin{equation}
\begin{aligned}
    \dot{L}_i& =  
\partial_{\mathbf{x}_i} \pi_i\left( \mathbf{x}_i , \{\bm{\upmu}_{j}\}_{j\in V_i} \right ) \dot{\mathbf{x}}_i
+ \sum_{j\in V_i}  \left [ \partial_{\bm{\upmu}_j} \pi_i\left( \mathbf{x}_i , \{\bm{\upmu}_{j}\}_{j\in V_i} \right ) \dot{\bm{\upmu}}_j \right ] \\
&  \quad - \partial_{\bm{\upmu}_{i}} \pi_i\left( \bm{\upmu}_{i} , \{\bm{\upmu}_{j}\}_{j\in V_i} \right ) \dot{\bm{\upmu}}_i
- \sum_{j\in V_i}  \left [ \partial_{\bm{\upmu}_j} \pi_i\left( \bm{\upmu}_{i}, \{\bm{\upmu}_{j}\}_{j\in V_i} \right ) \dot{\bm{\upmu}}_j \right ]. 
\label{eq:timelyapunov}
\end{aligned}
\end{equation}
Note that the partial derivative $\partial_{\mathbf{x}_i} \pi_i$ equals $\mathbf{0}$ by Lemma \ref{le:perturbpayoff}. Thus, we can rewrite this as
\begin{align}
    \dot{L}_i & = \partial_{\bm{\upmu}_{i}} \pi_i\left( \bm{\upmu}_{i} , \{\bm{\upmu}_{j}\}_{j\in V_i} \right ) \dot{\bm{\upmu}}_i
    + \sum_{ j\in V_i} \left [ \partial_{\bm{\upmu}_j} \pi_i\left( \mathbf{x}_i , \{\bm{\upmu}_{j}\}_{j\in V_i} \right ) - \partial_{\bm{\upmu}_j} \pi_i\left( \bm{\upmu}_{i}, \{\bm{\upmu}_{j}\}_{j\in V_i} \right )  \right ] \dot{\bm{\upmu}}_j  \\
    & = - \left [\sum_{j\in V_i} \left (A_{ij} \bm{\upmu}_j \right )^\top + \partial_{\mathbf{\upmu}_i}v(\bm{\upmu}_i) \right ] (\mathbf{x}_i - \bm{\upmu}_i)
    + \sum_{ j\in V_i} \left (\mathbf{x}_i^\top A_{ij}  - \bm{\upmu}_i^\top  A_{ij} \right ) ( \mathbf{x}_j -\bm{\upmu}_j ) \label{eq:1}\\
     & =   \left [  \partial_{\mathbf{x}_i} v(\mathbf{x}_i)  - \partial_{\mathbf{\upmu}_i} v(\bm{\upmu}_i) \right ] (\mathbf{x}_i - \bm{\upmu}_i) + \sum_{j\in V_i} \left (\mathbf{x}_i^\top A_{ij} \mathbf{x}_j - \bm{\upmu}_i^\top A_{ij}  \mathbf{x}_j - \mathbf{x}_i^\top A_{ij} \bm{\upmu}_j + \bm{\upmu}_i^\top A_{ij}  \bm{\upmu}_j \right)\label{eq:2}.
\end{align}
where from Equation \ref{eq:1} to \ref{eq:2}, we  apply Lemma \ref{le:perturbpayoff} to substitute $\sum_{j \in V_i} \left (A_{ij} \bm{\upmu}_j \right )^\top$ with $-\partial_{\mathbf{x}_i} v(\mathbf{x}_i)$.
Hence, summing over all the populations, the time derivative of $L$ is
\begin{align}
    \dot{L} &= \sum_{ i \in V} \omega_i \left [  \partial_{\mathbf{x}_i}v(\mathbf{x}_i)  - \partial_{\mathbf{\upmu}_i} v(\bm{\upmu}_i) \right ] (\mathbf{x}_i - \bm{\upmu}_i) \notag \\
    & \quad + \sum_{i \in V} \sum_{j\in V_i} \omega_i \left (\mathbf{x}_i^\top A_{ij} \mathbf{x}_j - \bm{\upmu}_i^\top A_{ij}  \mathbf{x}_j - \mathbf{x}_i^\top A_{ij} \bm{\upmu}_j + \bm{\upmu}_i^\top A_{ij}  \bm{\upmu}_j \right). 
\end{align}
The summation in the second line is equivalent to
\begin{align}
  &  \sum_{(i,j)\in E} 
     (\omega_i \mathbf{x}_i^\top A_{ij} \mathbf{x}_j +  \omega_j \mathbf{x}_j^\top A_{ji} \mathbf{x}_i) - (\omega_i \bm{\upmu}_i^\top A_{ij}  \mathbf{x}_j +
\omega_j \mathbf{x}_j^\top A_{ji} \bm{\upmu}_i ) \\
& \ \qquad - (\omega_i \mathbf{x}_i^\top A_{ij} \bm{\upmu}_j +  \omega_j  \bm{\upmu}_j^\top A_{ji}  \mathbf{x}_i) + (\omega_i  \bm{\upmu}_i^\top A_{ij}  \bm{\upmu}_j + \omega_j\bm{\upmu}_j^\top A_{ji} \bm{\upmu}_i).
\end{align}
By the weighted zero-sum property given in Equation 2 in the main paper, this summation equals $0$, yielding 
\begin{equation}
     \dot{L} = \sum_{ i \in V} \omega_i \left [  \partial_{\mathbf{x}_i} v(\mathbf{x}_i)  - \partial_{\mathbf{\upmu}_i} v(\bm{\upmu}_i) \right ] (\mathbf{x}_i - \bm{\upmu}_i). 
\end{equation}
Note that the function $v$ is strictly concave such that its second derivative is negative definite.
By this property, $\dot{L} \leq 0$ with equality only if $\mathbf{x}_i =\bm{\upmu}_i , \forall i \in V$, which corresponds to the  QRE. Therefore, $L$ is a strict Lyapunov function, and the global asymptotic stability of the QRE follows. Q.E.D.

\emph{\textbf{Remarks:} Intuitively, the Lyapunov function defined above measures the distance between the QRE and a given set of beliefs. The idea of measuring the distance in terms of entropy-regularized payoffs is inspired from the seminal work  \cite{hofbauer2005learning}. However, different from the network games considered in this paper, Hofbauer and Hopkins \cite{hofbauer2005learning} consider SFP in two-player games. To our knowledge, so far there has been no systematic study on SFP in network games.   } 

\subsection*{Proof of Theorem 4}
The proof of Theorem 4 leverages the seminal results of the  asymptotically autonomous dynamical system \cite{markus1956asymptotically,thieme1992convergence,thieme1994asymptotically} which  conventionally is defined as follows.
\begin{definition}
A nonautonomous  system of differential equations in $R^n$
\begin{equation}
    x' = f(t, x)
   \label{eq:nonauto}
\end{equation}
is said to be \textit{asymptotically autonomous} with \textit{limit equation}
\begin{equation}
    y' = g(y),
    \label{eq:limit}
\end{equation}
if $f(t, x) \to g(x), t \to \infty,$ 
where the convergence is uniform on each compact subset of $R^n$.
Conventionally, the solution flow of Eq. \ref{eq:nonauto} is called the asymptotically autonomous semiflow (denoted by $\phi$) and the  solution flow of Eq. \ref{eq:limit}  is called the limit semiflow (denoted by $\Theta$).
\end{definition}
Based on this definition, we establish Lemma 1 in the main paper, which is repeated as follows.
\begin{lemma}
\label{le:limit}
For a system that initially has heterogeneous beliefs, the mean belief dynamics is asymptotically autonomous \cite{markus1956asymptotically} with the limit equation 
\begin{align}
       \frac{d \bm{\upmu}_i }{d t} = \mathbf{x}_{i} - \bm{\upmu}_i \label{eq:limitBelief}
\end{align}
which after time-reparameterization is equivalent to the belief dynamics for homogeneous systems.
\end{lemma}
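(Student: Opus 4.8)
The plan is to first perform the same time-reparameterization used in the homogeneous convergence argument, and only then read off the asymptotically autonomous structure: in the original $t$-clock the vector field of Equation \ref{eq:meandynamicsTrun} carries a $1/(\lambda+t+1)$ prefactor and collapses to the zero field, so the ``limit equation'' would be degenerate and useless. Introducing $\tau = \ln\frac{\lambda+t+1}{\lambda+1}$, exactly as in the proof of Theorem \ref{theorem:convergeHomo}, gives $\frac{dt}{d\tau} = \lambda + t + 1 = (\lambda+1)e^{\tau}$. Applying the chain rule to Equation \ref{eq:meandynamicsTrun} then cancels the prefactor and yields, for each strategy $s_i$,
\begin{equation}
\frac{d\bar{\mu}_{is_i}}{d\tau} = \big(f_{s_i}(\bar{\bm{\upmu}}) - \bar{\mu}_{is_i}\big) + \frac{1}{2}\sum_{j\in V_i}\sum_{s_j\in S_j}\frac{\partial^2 f_{s_i}(\bar{\bm{\upmu}})}{(\partial\mu_{js_j})^2}\,\mathrm{Var}(\mu_{js_j}).
\end{equation}

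Second, I would invoke Theorem \ref{theorem:variance} to make the last term explicit in $\tau$. Since $e^{-\tau} = \frac{\lambda+1}{\lambda+t+1}$, the closed form of Theorem \ref{theorem:variance} reads $\mathrm{Var}(\mu_{js_j}) = e^{-2\tau}\sigma^2(\mu_{js_j})$, so the perturbation factorizes into a state-dependent part $\partial^2_{\mu_{js_j}} f_{s_i}(\bar{\bm{\upmu}})$ and a purely time-dependent coefficient $e^{-2\tau}\sigma^2(\mu_{js_j})$ that decays to zero. Writing the right-hand side as $F(\tau,\bar{\bm{\upmu}}) = g(\bar{\bm{\upmu}}) + \varepsilon(\tau,\bar{\bm{\upmu}})$ with $g_{s_i}(\bar{\bm{\upmu}}) = f_{s_i}(\bar{\bm{\upmu}}) - \bar{\mu}_{is_i} = x_{is_i} - \bar{\mu}_{is_i}$, it remains to verify that $\varepsilon(\tau,\cdot)\to 0$ as $\tau\to\infty$ uniformly on compact subsets of the state space, which is precisely the definition of asymptotic autonomy.

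Third, the uniform-convergence check is where the argument must be made carefully, though it is short. The state space $\Delta = \prod_{i\in V}\Delta_i$ is compact and the logit map $f_{s_i}$ is $C^\infty$ there, so every second partial $\partial^2_{\mu_{js_j}} f_{s_i}$ is continuous and hence bounded on $\Delta$ by some constant $M$; combined with the state-independent decay $e^{-2\tau}\sigma^2(\mu_{js_j})$, this bounds $\|\varepsilon(\tau,\cdot)\|_\infty$ by a constant multiple of $e^{-2\tau}$, giving uniform convergence to $\mathbf{0}$ on all of $\Delta$ and in particular on every compact subset. Hence the reparameterized mean belief dynamics is asymptotically autonomous with limit equation $\frac{d\bar{\bm{\upmu}}_i}{d\tau} = \mathbf{x}_i - \bar{\bm{\upmu}}_i$, which is exactly the homogeneous belief dynamics of Proposition \ref{prop:meanbeliefhomo} after the same reparameterization, establishing the stated equivalence and positioning us to apply Thieme's Lemma \ref{le:2}.

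I would flag that the genuinely essential step, rather than the routine bookkeeping, is the recognition that the reparameterization must precede the asymptotic analysis together with the observation, via Theorem \ref{theorem:variance}, that the variance decays as $e^{-2\tau}$ \emph{independently of the state}. This decoupling is what turns an otherwise state-entangled perturbation into a uniformly vanishing one; the boundedness of the logit second derivatives is then immediate from compactness, so no further analytic effort is needed.
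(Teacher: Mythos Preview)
Your proposal is correct and follows essentially the same route as the paper: reparameterize by $\tau = \ln\frac{\lambda+t+1}{\lambda+1}$, substitute the closed-form variance $e^{-2\tau}\sigma^2(\mu_{js_j})$ from Theorem \ref{theorem:variance}, and use boundedness of the logit second partials on the compact simplex to conclude that the perturbation vanishes uniformly, yielding the homogeneous dynamics as the limit equation. Your explicit remark that the reparameterization must precede the asymptotic analysis (lest the limit equation degenerate to zero) is a point the paper leaves implicit, but otherwise the arguments coincide.
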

\begin{proof}
 We first time-reparameterize the mean belief dynamics of heterogeneous systems. Assume $\tau  = \ln\frac{\lambda + t + 1 }{\lambda + 1}$.
By the chain rule and Equation \ref{eq:meanapprox}, for each dimension $s_i$,
\begin{align}
    \frac{d \bar{\mu}_{i s_i}}{d \tau} &= \frac{d \bar{\mu}_{i s_i}}{d t} \frac{dt}{d\tau} \\
    & = \left [ \frac{f_{s_i}(\bar{\bm{\upmu}} ) - \bar{\mu}_{i s_i} }{\lambda + t + 1} + \frac { \sum_{j\in V_i} \sum_{s_j \in S_j} \frac{\partial^2 f_{s_i}(\bar{\bm{\upmu}})}{(\partial \mu_{j s_j})^2} \text{Var}(\mu_{j s_j}) }{2(\lambda + t + 1)}  \right ]  \frac{d \left((\lambda + 1)(\exp{(\tau)} - 1) \right )}{d \tau}\\
    &=  \frac{f_{s_i}(\bar{\bm{\upmu}} ) - \bar{\mu}_{i s_i} + \frac{1}{2} \sum_{j\in V_i} \sum_{s_j \in S_j} \frac{ \partial^2 f_{s_i}(\bar{\bm{\upmu}})}{(\partial \mu_{j s_j})^2} \left(\frac{\lambda+1}{\lambda+t+1}\right)^2  \sigma^2(\mu_{js_j}) }{\lambda + (\lambda + 1)(\exp{(\tau)} - 1) + 1}  \left(\lambda + 1 \right )\exp{(\tau)}\\
    & = f_{s_i}(\bar{\bm{\upmu}} ) - \bar{\mu}_{i s_i} + \frac{1}{2} \sum_{j\in V_i} \sum_{s_j \in S_j} 
    \frac{\partial^2 f_{s_i}(\bar{\bm{\upmu}})} {(\partial \mu_{j s_j})^2} \sigma^2(\mu_{j s_j}) \exp{(-2\tau)} . \label{eq:88}
\end{align}
Observe that $\exp{(-2\tau)}$ decays to zero exponentially fast and that both $\sigma^2(\mu_{js_j}) $ and $\frac{\partial^2 f_{s_i}(\bar{\bm{\upmu}})}{(\partial \mu_{j s_j})^2}$ 
are bounded for every $\bm{\upmu}$ in the simplex $\prod_{j\in V_i} \Delta_j$.
Hence, Equation \ref{eq:88} converges locally and uniformly to the following equation: 
\begin{align}
    \frac{d \bar{\mu}_{i s_i}}{d \tau} = f_{s_i}(\bar{\bm{\upmu}} ) - \bar{\mu}_{i s_i}
    \label{eq:89}
.\end{align}
Note that $x_{i s_i}=f_{s_i}(\bar{\bm{\upmu}} )$ for homogeneous systems, and the above equation is  algebraically equivalent to Equation \ref{eq:limitBelief}. Hence, by Definition 1, Equation \ref{eq:88} is asymptotically autonomous with the limit equation being Equation \ref{eq:limitBelief}.
\end{proof}
By the above lemma, we can formally connect the limit behaviors of initially heterogeneous systems and those of homogeneous systems. Recall that Theorem 3 in the main paper states that under SFP, 
there is a unique rest point (QRE) for the belief dynamics in a weighted zero-sum network game $\Gamma$; this excludes the case where there are finitely many equilibria that are chained to each other. Hence, combining Lemma 2 in the main paper, we prove that the mean belief dynamics of initially heterogeneous systems converges to a unique QRE. Q.E.D.

\section*{Appendix C: Results and Proofs omitted in Section 4.2}
For the case of network coordination, we consider   networks that consist of a star or  disconnected multiple stars due to technical reasons.
In Figure 1, we present examples of the considered network structure with different numbers of nodes (populations).
\begin{figure}[tb!]
    \centering
    \includegraphics[width=\textwidth]{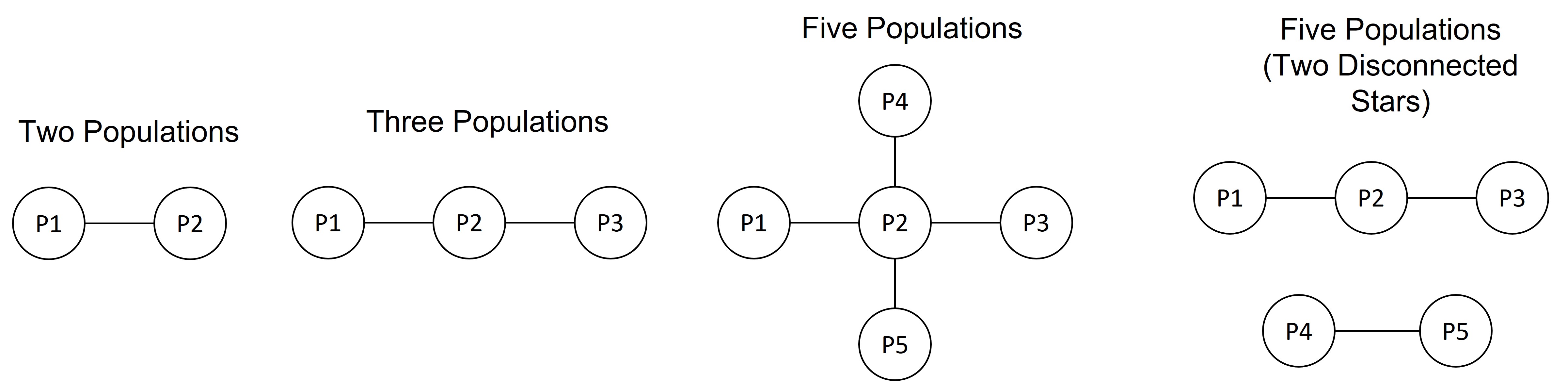}
    \caption{Population network games where the underlying network consists of star structure.}
    \label{fig:star}
\end{figure}

In the following theorem, focusing on homogeneous systems, we establish the convergence of the belief dynamics to the set of QRE.
\begin{theorem} [\textbf{Convergence in Homogeneous Network Coordination with Star Structure}]
Given a coordination $\Gamma$ where the network structure consists of a single or disconnected multiple stars, each orbit of the belief dynamics for homogeneous systems converges to the set of QRE.
\label{theorem:convergehomo}
\end{theorem}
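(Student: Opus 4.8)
The plan is to reduce the claim to an autonomous gradient-like flow and to exhibit a strict Lyapunov function coming from the potential structure that the coordination condition $\mathbf{A}_{ij}=\mathbf{A}_{ji}^\top$ supplies. First I would time-reparameterize with $\tau=\ln\frac{\lambda+t+1}{\lambda+1}$, exactly as in the proof of Theorem \ref{theorem:convergeHomo}, so that the homogeneous belief dynamics (Equation \ref{eq:meanbeliefhomo}) becomes the autonomous system $\frac{d\bm{\upmu}_i}{d\tau}=\mathbf{x}_i-\bm{\upmu}_i$, where $\mathbf{x}_i=f(\{\bm{\upmu}_j\}_{j\in V_i})$ is the logit response to the payoff vector $u_i=\sum_{j\in V_i}\mathbf{A}_{ij}\bm{\upmu}_j$. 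Since $\tau\to\infty$ exactly when $t\to\infty$ and the reparameterization preserves orbits, it suffices to prove that every orbit of this autonomous flow converges to the QRE set of Theorem \ref{prop:fixedpoint}.

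Next I would introduce the perturbed potential
\[
\tilde\Phi(\bm{\upmu}) = \sum_{(i,j)\in E}\bm{\upmu}_i^\top\mathbf{A}_{ij}\bm{\upmu}_j + \frac{1}{\beta}\sum_{i\in V}H(\bm{\upmu}_i), \quad\text{where } H(\bm{\upmu}_i) = -\sum_{s_i\in S_i}\mu_{is_i}\ln\mu_{is_i}.
\]
The coordination symmetry is exactly what turns the network game into an exact potential game: a direct computation gives $\nabla_{\bm{\upmu}_i}\tilde\Phi = u_i + \tfrac{1}{\beta}\nabla H(\bm{\upmu}_i)$, so the critical points of $\tilde\Phi$ over $\prod_{i\in V}\Delta_i$ are precisely the solutions of the logit fixed-point equation, i.e. the QRE of Theorem \ref{prop:fixedpoint} and the rest points of the flow. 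Differentiating along trajectories, the payoff terms assemble into $\sum_i u_i\cdot(\mathbf{x}_i-\bm{\upmu}_i)$ through the potential gradient; substituting the logit identity $u_{is_i}=\tfrac{1}{\beta}(\ln x_{is_i}+\ln Z_i)$ (with $Z_i$ the normalizer, constant in $s_i$ and hence orthogonal to simplex-tangent vectors) and combining with the entropy gradient yields
\[
\frac{d\tilde\Phi}{d\tau} = \frac{1}{\beta}\sum_{i\in V}\sum_{s_i\in S_i}\big(\ln x_{is_i}-\ln\mu_{is_i}\big)\big(x_{is_i}-\mu_{is_i}\big)\ \ge\ 0,
\]
with equality if and only if $\mathbf{x}_i=\bm{\upmu}_i$ for every $i\in V$, which is precisely the rest-point (QRE) condition.

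The final step is LaSalle's invariance principle. Each $\bm{\upmu}_i$ lives in the compact simplex $\Delta_i$, and since logit responses stay interior (as in the proof of Proposition \ref{theorem:pde}) the orbit remains in the interior where $\tilde\Phi$ is $C^1$; hence every orbit is precompact and $\tilde\Phi$ is bounded and nondecreasing along it. LaSalle then places the $\omega$-limit set inside the largest invariant subset of $\{d\tilde\Phi/d\tau=0\}$. As computed above this zero set is exactly the set of rest points, each of which is invariant, so the $\omega$-limit set is contained in the QRE set and every orbit converges to it. When the network is a disjoint union of stars the flow decouples across components, so it is enough to treat one star and take the product of the per-component conclusions.

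I expect the main work to be in identifying the right Lyapunov function rather than in the verification: the perturbed potential $\tilde\Phi$ is the linchpin, and spotting that the coordination condition yields an exact potential (so that the quadratic payoff terms form a genuine potential rather than a sign-indefinite remainder) is the crux. The secondary subtlety is that coordination games generically admit a continuum of QRE, so one cannot strengthen LaSalle to convergence to a single point — the theorem only asserts convergence to the set, which is all LaSalle delivers. I would also flag that for the homogeneous case the argument in fact uses only $\mathbf{A}_{ij}=\mathbf{A}_{ji}^\top$ and not the star topology; the star (bipartite, two-colorable) structure becomes genuinely necessary only when transporting this to the initially heterogeneous mean dynamics via the asymptotically-autonomous framework of Lemmas \ref{le:1} and \ref{le:2}, where Thieme's theorem requires the limit-system equilibria to be isolated and chained in a controlled way.
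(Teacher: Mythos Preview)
Your proof is correct and follows the same Lyapunov strategy as the paper: time-reparameterize to the autonomous system $\dot{\bm{\upmu}}_i=\mathbf{x}_i-\bm{\upmu}_i$, exhibit the entropy-perturbed bilinear potential as monotone along orbits, and conclude via LaSalle. The paper's function $L=\sum_{j\in\mathcal{R}}\bigl[\bm{\upmu}_j^\top\sum_{i\in V_j}\mathbf{A}_{ji}\bm{\upmu}_i + v(\bm{\upmu}_j)+\sum_{i\in V_j}v(\bm{\upmu}_i)\bigr]$, summed over root populations, coincides with your $\tilde\Phi$ on star graphs, and both computations land on the same derivative $\sum_i\bigl(\partial v(\bm{\upmu}_i)-\partial v(\mathbf{x}_i)\bigr)\cdot(\mathbf{x}_i-\bm{\upmu}_i)\ge 0$.

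Where you diverge is in scope: your $\tilde\Phi=\sum_{(i,j)\in E}\bm{\upmu}_i^\top\mathbf{A}_{ij}\bm{\upmu}_j+\tfrac{1}{\beta}\sum_{i}H(\bm{\upmu}_i)$ is written for an arbitrary coordination network, and your gradient identity $\nabla_{\bm{\upmu}_i}\tilde\Phi=u_i+\tfrac{1}{\beta}\nabla H(\bm{\upmu}_i)$ uses only $\mathbf{A}_{ij}=\mathbf{A}_{ji}^\top$, never the star topology. You are right that the homogeneous argument does not need the star restriction; the paper's root/leaf bookkeeping is an artifact of how $L$ is indexed rather than a mathematical necessity, and your formulation makes this transparent. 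Your closing speculation about \emph{why} stars might matter in the heterogeneous transfer is less solid, however: Thieme's theorem (Lemma~\ref{le:2}) asks for isolated equilibria, but coordination QRE need not be isolated even on stars, so the star hypothesis does not actually secure that condition --- the paper's own heterogeneous proof invokes Lemma~\ref{le:2} without verifying it regardless of topology.
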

\begin{proof}
Consider a root population $j$  of a star structure. 
Its set of leaf (neighbor) populations is $V_j$, the set of beliefs about the leaf populations is  $\{\bm{\upmu}_{i}\}_{i\in V_j}$, and the choice distribution is $\mathbf{x}_j$. 
Given the game $\Gamma$, the expected payoff  is $\mathbf{x}_j^\top \sum_{i\in V_j} A_{ji} \bm{\upmu}_{i}$. 
Define a perturbed payoff function 
\begin{align}
     \pi_j\left( \mathbf{x}_j , \{\bm{\upmu}_{i}\}_{i\in V_j} \right ) \coloneqq \mathbf{x}_j^\top \sum_{i \in V_j} A_{ji} \bm{\upmu}_{i}  +  v( \mathbf{x}_j)
\end{align}
where $v( \mathbf{x}_j) =  -\frac{1}{\beta}\sum_{s_j \in S_j} x_{j s_j}\ln (x_{j s_j})$.  Under this form of $v(\mathbf{x}_j)$, the maximization of $\pi_j$  yields the choice distribution $\mathbf{x}_j$ from the logit choice function \cite{fudenberg1998theory}. 

Consider a leaf population $i$ of the root population $j$. It has only one neighbor population, which is population $j$. Thus, given the game $\Gamma$, the expected payoff is  $\mathbf{x}_i^\top  A_{ij} \bm{\upmu}_{j}$. Define a perturbed payoff function
\begin{align}
     \pi_i\left( \mathbf{x}_i , \bm{\upmu}_{j} \right ) \coloneqq \mathbf{x}_i^\top  A_{ij} \bm{\upmu}_{j} +  v( \mathbf{x}_i)
\end{align}
where $v( \mathbf{x}_i) =  -\frac{1}{\beta}\sum_{s_i \in S_i} x_{is_i}\ln (x_{is_i})$. Similarly, the maximization of $\pi_i$ yields the choice distribution $\mathbf{x}_i$ from the logit choice function \cite{fudenberg1998theory}. 
Based on this, we establish the following lemma.

\begin{lemma}
For choice distributions of SFP in a population network game with start structure, 
\begin{align}
    &\partial_{\mathbf{x}_j} \pi_j\left( \mathbf{x}_j , \{\bm{\upmu}_{i}\}_{i\in V_j} \right ) = \mathbf{0} \quad \text{and} \quad \sum_{i \in V_j} \left( A_{ji} \bm{\upmu}_{i} \right)^\top = -\partial_{\mathbf{x}_j}v( \mathbf{x}_j) \quad &\text{if $j$ is a root population,} \\
    &\partial_{\mathbf{x}_i} \pi_i\left( \mathbf{x}_i , \bm{\upmu}_{j} \right ) = \mathbf{0} \quad  \text{and} \quad \left( A_{ij} \bm{\upmu}_{j} \right)^\top = -\partial_{\mathbf{x}_i}v( \mathbf{x}_i) \quad &\text{if $i$ is a leaf population.}
\end{align}
\begin{proof}
This lemma immediately follows from the fact that the maximization of $\pi_j$ and $\pi_i$ , respectively, yield the choice distributions $\mathbf{x}_j$ and $\mathbf{x}_i$ from the logit choice function \cite{fudenberg1998theory}.
\end{proof}
\label{le:starpayoff}
\end{lemma}
For readability, we repeat the belief dynamics of a homogeneous population after time-reparameterization, which has been proved in Lemma 4 in Appendix B, as follows:
\begin{align}
          \frac{d \bm{\upmu}_i }{d \tau} = \mathbf{x}_{i} - \bm{\upmu}_i.
\end{align}
 
Let $\mathcal{R}\subset V$ be the set of all root populations. We define 
\begin{align}
L \coloneqq \sum_{j \in \mathcal{R}}  L_j \quad \text{s.t.} \quad
    L_j
    & \coloneqq \bm{\upmu}_j^\top  \sum_{ i\in V_j} A_{ji} \bm{\upmu}_i + v(\bm{\upmu}_j) + \sum_{ i \in V_j} v(\bm{\upmu}_i).
\end{align}
Consider the function $L_j$. Its time derivative  $\dot{L}_j$ is
\begin{align}
& \dot{L}_j =\left [ \partial_{\bm{\upmu}_j} (\bm{\upmu}_j^\top   \sum_{ i\in V_j} A_{ji} \bm{\upmu}_i) \dot{\bm{\upmu}}_j + \sum_{i \in V_j} \partial_{\bm{\upmu}_i} (\bm{\upmu}_j^\top  \sum_{ i\in V_j} A_{ji} \bm{\upmu}_i) \dot{\bm{\upmu}}_i \right ] + \partial_{\mathbf{\upmu}_j}v(\bm{\upmu}_j)\dot{\bm{\upmu}}_j +  \sum_{ i\in V_j} \partial_{\mathbf{\upmu}_i}v(\bm{\upmu}_i)\dot{\bm{\upmu}}_i \\
 &  =  \sum_{ i\in V_j} ( A_{ji} \bm{\upmu}_i)^\top (\mathbf{x}_{j} - \bm{\upmu}_j)  +  \left [ \sum_{i\in V_j} \bm{\upmu}_j^\top A_{ji} (\mathbf{x}_i - \bm{\upmu}_i) \right ] + \partial_{\mathbf{\upmu}_j}v(\bm{\upmu}_j) (\mathbf{x}_{j} - \bm{\upmu}_j) +   \sum_{ i\in V_j} \partial_{\mathbf{\upmu}_i} v(\bm{\upmu}_i) (\mathbf{x}_i -\bm{\upmu}_i). 
\end{align}
Since $\Gamma$ is a coordination game, we have $\left (A_{ij} \bm{\upmu}_j\right)^\top = \bm{\upmu}_j^\top A_{ij}^\top = \bm{\upmu}_j^\top A_{ji}$.
Hence, applying Lemma \ref{le:starpayoff}, we can substitute
$\sum_{ i\in V_j}  (A_{ji} \bm{\upmu}_i)^\top$ with $-v'(\mathbf{x}_{j})$, and $\bm{\upmu}_j^\top A_{ji}$ with   $-v'(\mathbf{x}_{i})$, yielding 
\begin{align}
    & \dot{L}_j = -\partial_{\mathbf{x}_j} v(\mathbf{x}_{j}) (\mathbf{x}_{j} - \bm{\upmu}_j)+   \left [ \sum_{ i\in V_j} (-\partial_{\mathbf{x}_i}v(\mathbf{x}_i ))(\mathbf{x}_i - \bm{\upmu}_i) \right] + \partial_{\mathbf{\upmu}_j} v(\bm{\upmu}_j) (\mathbf{x}_{j} - \bm{\upmu}_j) +   \sum_{ i\in V_j}\partial_{\mathbf{\upmu}_i} v(\bm{\upmu}_i) (\mathbf{x}_i -\bm{\upmu}_i) \\
    & =(\partial_{\mathbf{\upmu}_j}v(\bm{\upmu}_j) - \partial_{\mathbf{x}_j}v(\mathbf{x}_{j})) (\mathbf{x}_{j} - \bm{\upmu}_j) + \sum_{i\in V_j} (\partial_{\mathbf{\upmu}_i}v(\bm{\upmu}_i) - \partial_{\mathbf{x}_i}v(\mathbf{x}_i)) (\mathbf{x}_i -\bm{\upmu}_i)
\end{align}
Note that the function $v$ is strictly concave such that its second derivative is negative definite.
By this property,  $\dot{L}_{j} \geq 0$ with equality only if $\mathbf{x}_i = \bm{\upmu}_i, \forall i \in V_j$ and $\mathbf{x}_{j} = \bm{\upmu}_j$.
Thus, the time derivative of the function $L$, i.e.,  $\dot{L} = \sum_{j\in \mathcal{R}} \dot{L}_j \geq 0$ with equality only if $\mathbf{x}_i = \bm{\upmu}_i, \forall i \in V_j, \mathbf{x}_{j} = \bm{\upmu}_j, \forall j \in \mathcal{R}$.
\end{proof}

We generalize the convergence result to initially heterogeneous systems in the following theorem. 
\begin{theorem} [\textbf{Convergence in Initially Heterogeneous Network Coordination with Star Structure}]
Given a coordination $\Gamma$ where the network structure consists of a single or disconnected multiple stars, each orbit of the mean belief dynamics for initially heterogeneous systems converges to the set of QRE.
\label{theorem:convergehetero}
\end{theorem}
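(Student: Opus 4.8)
The plan is to follow the same two-stage strategy used for the competitive case (Theorem~\ref{theorem:convergeHetero}): first reduce the heterogeneous mean-belief dynamics to the already-understood homogeneous dynamics via asymptotic autonomy, and then transport the convergence conclusion using Thieme's theorem (Lemma~\ref{le:2}). The essential new difficulty, relative to the competitive case, is that a coordination star need not have a \emph{unique} QRE, so the uniqueness argument that excluded the cyclic alternative of Lemma~\ref{le:2} is no longer available; I would replace it with an argument based on the strict potential (Lyapunov) function $L$ constructed in the proof of Theorem~\ref{theorem:convergehomo}.

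Concretely, I would first time-reparameterize the mean-belief dynamics by $\tau=\ln\frac{\lambda+t+1}{\lambda+1}$ and invoke Lemma~\ref{le:limit}: because the variance factor $\exp(-2\tau)$ decays exponentially and the Hessian terms $\partial^2 f_{s_i}/(\partial\mu_{js_j})^2$ are bounded on the compact simplex, the reparameterized system is asymptotically autonomous with limit equation $d\bm{\upmu}_i/d\tau=\mathbf{x}_i-\bm{\upmu}_i$, i.e.\ exactly the homogeneous star-coordination dynamics. Denote by $\phi$ the semiflow of the heterogeneous system and by $\Theta$ that of the limit (homogeneous) system. Since the state space $\Delta=\prod_{i\in V}\Delta_i$ is compact and forward-invariant (the mean belief is an average of simplex-valued beliefs, hence stays in $\Delta$), every orbit of both $\phi$ and $\Theta$ is pre-compact, so the pre-compactness hypotheses of Lemma~\ref{le:2} are met.

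Next I would verify the remaining hypotheses of Lemma~\ref{le:2} using Theorem~\ref{theorem:convergehomo}. The potential $L=\sum_{j\in\mathcal{R}}L_j$ satisfies $\dot L\ge 0$ along $\Theta$ with equality exactly on the set of QRE; by LaSalle's invariance principle this shows that the $\omega$-$\Theta$-limit set of every pre-compact $\Theta$-orbit is contained in, and hence contains, a $\Theta$-equilibrium, as required. Assuming the QRE are isolated (which holds generically and is the analogue of the hypothesis used in the competitive case), Lemma~\ref{le:2} yields its dichotomy: either $\phi$ converges to a single QRE, or its $\omega$-$\phi$-limit set is a cyclic chain of finitely many QRE connected by $\Theta$-heteroclinic orbits.

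The final and decisive step is to rule out the cyclic alternative using the strictness of $L$. Along any non-constant $\Theta$-orbit one has $\dot L>0$, so $L$ is strictly increasing, and a heteroclinic connection from a QRE $e_a$ to a distinct QRE $e_b$ forces $L(e_a)<L(e_b)$. A cyclic chain $e_1\to e_2\to\cdots\to e_k\to e_1$ would then give $L(e_1)<L(e_2)<\cdots<L(e_k)<L(e_1)$, a contradiction (the same argument excludes a homoclinic loop with $k=1$). Hence only the first alternative survives, so $\phi$ converges to a QRE, and in the original time variable each orbit of the heterogeneous mean-belief dynamics converges to the set of QRE. I expect the main obstacle to be precisely this treatment of multiplicity: ensuring the isolation/compact-invariance hypotheses of Lemma~\ref{le:2} hold and cleanly excluding the heteroclinic-cycle alternative through the monotonicity of $L$; should the QRE fail to be isolated, I would instead argue directly that $L$ must be constant on the $\omega$-limit set, which again forces that set into the QRE set.
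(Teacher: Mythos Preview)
Your proposal is correct and follows exactly the strategy the paper uses: invoke Lemma~\ref{le:limit} to establish asymptotic autonomy with limit equation equal to the homogeneous star-coordination dynamics, and then apply Thieme's theorem (Lemma~\ref{le:2}) to transport the convergence result of Theorem~\ref{theorem:convergehomo}. The paper's own proof is only two sentences and does not spell out how the cyclic alternative in Lemma~\ref{le:2} is excluded when the QRE set is not a singleton; your argument via the strict monotonicity of the potential $L$ along non-constant $\Theta$-orbits (so that a cyclic chain $e_1\to\cdots\to e_k\to e_1$ would force $L(e_1)<L(e_1)$) is the natural completion of that step and supplies the detail the paper leaves implicit.
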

\begin{proof}
The proof technique is similar to that for initially heterogeneous competitive network games. By Lemma 1 in the main paper, we show that the mean belief dynamics of initially heterogeneous systems is asymptotically autonomous with the belief dynamics of homogeneous systems. Therefore, it follows from  Lemma 2 in the main paper that the convergence result for homogeneous systems can be carried over to the initially heterogeneous systems. 
\end{proof}

\emph{\textbf{Remarks}: The convergence of SFP in coordination games and potential games has been established under the 2-player settings \cite{hofbauer2005learning} as well as some n-player settings \cite{hofbauer2002global,swenson2019smooth}.
Our work differs from the previous works in two aspects.
First, our work allows for heterogeneous beliefs.
Moreover, we consider that agents maintain separate beliefs about other agents, while in the previous works agents do not distinguish between other agents. Thus, even when the system beliefs are homogeneous, our setting is still different from (and more complicated) than the previous settings. }

\section*{Appendix D: Omitted Experimental Details}

\paragraph{Numerical Method for the PDE model.} PDEs are notoriously difficult to solve, and only limited types of PDEs allow analytic solutions. Hence, similar to previous research \cite{hu2019modelling}, we resort to numerical method for PDEs; in particular, we  consider the finite difference method \cite{smith1985}.  
\paragraph{Agent-based Simulations.} The presented simulation results are averaged over 100 independent simulation runs to smooth out the randomness. For each simulation run, there are $1,000$ agents in each population. For each agent, the initial beliefs are sampled from the given initial probability distribution. 

\paragraph{Detailed Experimental Setups for Figure 1.}
In the case of small initial variance, the initial beliefs $\mu_{1H}$ and $\mu_{2H}$ are distributed according to the distribution $\text{Beta}(280, 120)$. On the contrary, in the case of large initial variance, the initial beliefs $\mu_{1H}$ and $\mu_{2H}$  are distributed according to the distribution $\text{Beta}(14, 6)$.
Thus, initially, the mean beliefs in these two cases are both $\bar{\mu}_{1H} = \bar{\mu}_{2H} = 0.7$ and $\bar{\mu}_{1S} = \bar{\mu}_{2S} = 0.3$.
In both cases, the initial sum of weights $\lambda = 10$ and the temperature $\beta = 10$.

\paragraph{Detailed Experimental Setups for Figure 3.}
We visualize the regions of attraction of different equilibria in stag hunt games by 
numerically solving the mean belief dynamics (Equation 10 in the main paper). The initial variances have  been given in the title of each panel. In all cases, the initial sum of weights $\lambda = 0$ and the temperature $\beta = 5$.  

\paragraph{Detailed Experimental Setups for Figure 4.}
We let the initial beliefs about populations 1, 3 and 5 remain unchanged across different cases, and vary the initial beliefs about populations 2 and 4.
The initial beliefs about populations 1, 3 and 5, denoted by $\mu_{1H}$, $\mu_{3H}$ and $\mu_{5H}$, are distributed according to the distributions $\text{Beta}(20,10)$, $\text{Beta}(6,4)$, and $\text{Beta}(10,5)$, respectively.
The initial beliefs about populations 2 and 4 have been given in the legends of Figure 4. In all cases, the initial sum of weights $\lambda = 10$ and the temperature $\beta = 10$.
Note that $\mu_{iT} = 1-\mu_{iH}$ for all populations $i = 1,2,3,4,5.$

\paragraph{Source Code and Computing Resource.} We have attached the source code for reproducing our main experiments. The Matlab script \emph{finitedifference.m}  numerically solves our PDE model presented in Proposition 1 in the main paper.
The Matlab script \emph{regionofattraction.m}  visualizes the region of attraction of different equilibria in stag hunt games, which are presented in Figure 3. The Python scripts \emph{simulation(staghunt).py} and \emph{simulation(matchingpennies).py} correspond to the agent-based simulations in two-population stag hunt games and five-population asymmetric matching pennies games, respectively.
We use a laptop (CPU: AMD Ryzen 7
5800H) to run all the experiments.

\bibliographystyle{plain}
\bibliography{ref}
\end{document}